
\documentclass[letterpaper,11pt]{article}
\usepackage[cm]{fullpage}

\RequirePackage[OT1]{fontenc}
\RequirePackage{amsthm,amsmath}
\RequirePackage[authoryear,round]{natbib}
\RequirePackage[colorlinks,citecolor=blue,urlcolor=blue]{hyperref}
\usepackage{amssymb}
\usepackage{graphicx}
\usepackage{url}
\usepackage{epsfig}
\usepackage{algorithm}
\usepackage{algorithmic}
\usepackage{color}
\usepackage{subcaption}
\usepackage{verbatim}
\usepackage{marginnote}
\usepackage{xifthen}

\newtheorem{definition}{Definition} 
\newtheorem{theorem}{Theorem}
\newtheorem{lemma}{Lemma}
\newtheorem{proposition}{Proposition}
\newtheorem{fact}{Fact}
\newtheorem{corollary}{Corollary}

\numberwithin{theorem}{section}
\numberwithin{lemma}{section}
\numberwithin{proposition}{section}
\numberwithin{fact}{section}
\bibliographystyle{plainnat}

\usepackage{xparse}
\usepackage{sectsty}
\allsectionsfont{\sffamily\mdseries\upshape}

\usepackage[nottoc,notlof,notlot]{tocbibind} 
\usepackage[titles,subfigure]{tocloft}

\newcommand{\mline}{\;\middle\vert\;}

\newcommand{\pr}[1]{\mathbb{P}\left( #1 \right)}

\DeclareDocumentCommand{\prm}{ O{default1} O{default2} m }{\underset{\substack{#1 \\ #2 }}{\mathbb{P}}\left( #3 \right)}
\DeclareDocumentCommand{\Em}{ O{default1} O{default2} m }{\underset{\substack{#1 \\ #2 }}{\mathbb{E}}\left[ #3 \right]}
\newcommand{\E}[1]{\mathbb{E}\left[ #1 \right]}

\newcommand{\eps}{\epsilon}
\renewcommand{\O}[1]{\mathcal{O}\left(#1\right)}
\newcommand{\Otilde}[1]{\tilde{\mathcal{O}}\left(#1\right)}

\newcommand{\Om}[1]{\Omega\left(#1\right)}




\graphicspath {{Code/}}

\newcommand{\R}{\mathbb{R}}
\newcommand{\Exp}{\mathbb{E}}
\newcommand{\Normal}{\mathcal{N}}

\newcommand{\ux}{\underline{x}}
\newcommand{\Lap}{\mathrm{Lap}}
\newcommand{\GS}{\mathrm{GS}}
\newcommand{\Mrange}{M_{\mathit{range}}}
\newcommand{\poly}{\mathrm{poly}}

\title{Finite Sample Differentially Private Confidence Intervals\footnote{Presented at TPDP 2017 and to appear at ITCS 2018.}}
\date{}		
\author{Vishesh Karwa\thanks{Department of Statistics, The Ohio State University, \texttt{karwa.8@osu.edu}. Work done while a Post doctoral fellow in the Center for Research on Computation \& Society, The Department of Statistics and the School of Engineering and Applied Sciences at Harvard University, supported by the NSF grant CNS-1237235.
}
\and  Salil Vadhan\thanks{Center for Research on Computation \& Society and School of Engineering \& Applied Sciences, Harvard University, \texttt{salil\textunderscore vadhan@harvard.edu},  \protect\url{http://seas.harvard.edu/\~salil}. Supported by NSF grant CNS-1237235, a Simons Investigator Award, a grant from the Sloan Foundation, and Cooperative Agreement CB16ADR0160001 with the Census Bureau.}}
            
\begin{document}

\maketitle	
\begin{abstract}
	We study the problem of estimating finite sample confidence intervals of the mean of a normal population under the constraint of differential privacy. We consider both the known and unknown variance cases and construct differentially private algorithms to estimate confidence intervals. Crucially, our algorithms guarantee a finite sample coverage, as opposed to an asymptotic coverage.  Unlike most previous differentially private algorithms, we do not require the domain of the samples to be bounded. We also prove lower bounds on the expected size of any differentially private confidence set showing that our the parameters are optimal up to polylogarithmic factors. 
			
			
\end{abstract}

\section{Introduction}  \label{sec:intro}
\subsection{Overview} \label{sec:intro-overview}

Differential privacy~\citep{DMNS06} is a strong and by now widely accepted definition of privacy for statistical analysis of datasets with sensitive information about individuals.   While there is now a rich and flourishing body of research on differential privacy, extending well beyond theoretical computer science, the following three basic goals for research in the area have not been studied in combination with each other:

\paragraph{Differentially private statistical inference:}  The vast majority of work in differential privacy has
studied how well one can approximate statistical properties of the dataset itself, i.e. empirical quantities, rather than
inferring statistics of an underlying {\em population} from which a dataset is drawn.  Since the latter is the ultimate goal of most data analysis, it should also be a more prominent object of study in the differential privacy literature.

\paragraph{Conservative statistical inference:}  An important purpose of statistical inference is to limit the chance that data analysts draw incorrect conclusions because their dataset may not accurately reflect the underlying population, for example due to the sample size being too small.   For this reason, classical statistical inference also offers measures of statistical significance such as $p$-values and confidence intervals.   Constructing such measures for differentially private algorithms is more complex, as one must also take into account the additional noise that is introduced for the purpose of privacy protection.  For this reason, we advocate that differentially private inference procedures should be {\em conservative}, and err on the side of underestimating statistical significance, even at 
small sample sizes and for all settings of other parameters.

\paragraph{Rigorous analysis of the inherent price of privacy:}  As has been done extensively in the differential privacy
literature for empirical statistics, we should also investigate the fundamental ``privacy--utility tradeoffs'' for (conservative) differentially private statistical inference.   This involves both designing and analyzing differentially private statistical inference procedures, as well as proving negative results about the performance
that can be achieved, using the best non-private procedures as a benchmark.\\
\ \\
\ \\
In this paper, we pursue all of these goals, using as a case study the problem of constructing a confidence interval for the mean of normal data.  The latter is one of the most basic problems in statistical inference, yet already turns out to be nontrivial to fully understand under the constraints of differential privacy.  We expect that most of our modeling and methods will find analogues for other inferential problems (e.g. hypothesis testing, Bayesian credible intervals, non-normal data, and estimating statistics other than the mean).

\subsection{Confidence Intervals for a Normal Mean} \label{sec:intro-normal}

We begin by recalling the problem of constructing a $(1-\alpha)$-level 
confidence interval for a normal mean without privacy.
Let $X_1, \ldots, X_n$ be an independent and identically distributed (\emph{iid}) random sample from a normal distribution with an unknown mean $\mu$ and variance $\sigma^2 $.  The goal is to design an estimator
$I$ that given $X_1,\ldots,X_n$, outputs an interval $I(X_1,\ldots,X_n)\subseteq \R$ such that  
$$\mathbb P \left(I(X_1,\ldots,X_n)\ni \mu \right) \geq 1-\alpha,$$
for all $\mu$ and $\sigma$.  Here $1-\alpha$ is called the {\em coverage probability}.  Given a desired coverage probability, the goal is minimize the {\em expected length} of the interval, namely $\Exp[|I(X_1, \ldots, X_n)|]$.

\paragraph{Known Variance.}
In the case that variance $\sigma^2$ is known (so only $\mu$ is unknown), the classic confidence interval
for a normal mean is:
$$I(X_1,\ldots,X_n) = \bar{X} \pm \frac{\sigma}{\sqrt{n}}\cdot z_{1-\alpha/2},$$ 
where $\bar{X}$ is the sample mean and $z_{a}$ represents the $a^{th}$ quantile of a standard normal distribution.\footnote{The proof that this is in fact a $(1-\alpha)$-confidence interval follows by observing that 
$\sqrt{n}\cdot (\bar{X}-\mu)$ has a standard normal distribution, and $[-z_{1-\alpha/2},z_{1-\alpha/2}]$ covers
a $1-\alpha$ fraction of the mass of this distribution.}
It is known that this interval has the smallest expected size among all $1-\alpha$ level confidence sets for a normal mean, see for example, \cite{lehmann2006testing}.  In this case, the length of the confidence interval is fixed and equal to $$|I(X_1,\ldots,X_n) | = (2\sigma z_{1 - \alpha/2})/\sqrt{n} = \Theta \left(\sigma \sqrt{\log(1/\alpha)/n}\right).$$ 

\paragraph{Unknown Variance.}
In the case that the variance $\sigma^2$ is unknown, the variance must be estimated from the data itself, and the classic confidence interval is:
$$I(X_1,\ldots,X_n) = \bar{X} \pm \frac{s}{\sqrt{n}}\cdot t_{n-1,1-\alpha/2},$$ 
where $s^2$ is the {\em sample variance} defined by
$$s^2 = \frac{1}{n-1} \sum_{i=1}^n (X_i - \bar{X})^2,$$
and $t_{n-1,a}$ is the $a^{th}$ quantile of a $t$-distribution with $n-1$ degrees of freedom (see the appendix for definitions).\footnote{Again the proof follows by observing that $s\cdot (\bar{X}-\mu)$ follows a $t$ distribution, with no
dependence on the unknown parameters.}  Now the length of the interval is a random variable with expectation
$$\Exp[|I(X_1,\ldots,X_n)|] = \frac{2\sigma}{\sqrt{n}}\cdot k_n\cdot t_{n-1,1 - \alpha/2}= \Theta \left(\sigma \sqrt{\log(1/\alpha)/n}\right),$$
for an appropriate constant $k_n=1-O(1/n)$. (See \cite{lehmann2006testing}.) 

\paragraph{Relation to Hypothesis Tests.}
In general, including both cases above, a confidence interval for a population parameter also gives rise to
hypothesis tests, which is often how the confidence intervals are used in applied statistics.  For example, if our null hypothesis is that the mean $\mu$ is nonnegative, then we could reject the null hypothesis if the interval $I(X_1,\ldots,X_n)$ does not intersect the positive real line.  The significance level of this hypothesis test is
thus at least $1-\alpha$.  Minimizing the length of the confidence interval corresponds to being able to reject the
alternate hypotheses that are closer to the null hypothesis; that is, when the confidence interval is of length at most $\beta$ and $\mu$ is distance greater than $\beta$ from the null hypothesis, then the test will reject with probability 
at least $1-\alpha$.

\subsection{Differential Privacy} \label{sec:intro-dp}

Let $\underline{x} = (x_1, \ldots, x_n)$ be a dataset of $n$ elements where each $x_i \in \Omega$. In the problems that we consider, $\Omega = \mathbb{R}$ and $\underline{x} \in \mathbb{R}^n$. Two datasets $\underline{x}$ and $\underline{x}'$, both of size $n$, are called \emph{neighbors} if they differ by one element.  
\begin{definition}[Differential Privacy, \cite{DMNS06}]
	A randomized algorithm $M: \Omega^n \rightarrow \Omega_{M}$ is $(\eps, \delta) $ \emph{differentially private} if for all neighboring $\underline{x}, \underline{x}' \in \Omega^n$ and for all measurable sets of outputs $S \subseteq \Omega_{M}$, we have
	$$\pr{M(\underline{x}) \in S} \leq e^{\eps} \cdot \pr{M(\underline{x}') \in S} + \delta,$$
	where the probability is over the randomness of $M$. 
\end{definition}
Intuitively, this captures privacy because arbitrarily changing one individual's data (i.e. changing a row of $\underline{x}$ to
obtain $\underline{x}'$) has only a small effect on the output distribution of $M$.  Typically we think of $\eps$
as a small constant (e.g. $\eps=.01$) while $\delta$ should be cryptographically small (in particular, much smaller than $1/n$) to obtain a satisfactory formulation of privacy.   The case that $\delta=0$ is often called {\em pure} differential privacy, while $\delta>0$ is known as {\em approximate} differential privacy. 

Nontrivial differentially private algorithms are necessarily randomized (as the probabilities in the definition above
are taken only over the randomness of $M$, which is important for the interpretation and composability of the definition), and thus such algorithms work by injecting randomness into statistical computations to obscure the effect of each individual.

The most basic differentially private algorithm is the Laplace Mechanism~\citep{DMNS06}, which approximates an arbitrary
a function $f : \Omega^n\rightarrow \R$ by adding Laplace noise:
$$M(\ux) = f(\ux) + Z, \text{where $Z\sim \Lap(\GS_f/\eps)$.}$$
Here $\Lap(\tau)$ is the Laplace distribution with scale $\tau$ (which has standard deviation proportional to $\tau$),
and $\GS_f$ is the {\em global sensitivity} of $f$ --- the maximum of $|f(\ux)-f(\ux')|$ over all pairs of neighboring
datasets $\ux,\ux'$.

In particular, if $f(\ux)$ is the {\em empirical mean} 
of the dataset $\ux$ and $\Omega=[-B,B]$, then $\GS_f = 2B/n$,
so $M(\ux)$ approximates the empirical mean to within additive error $O(B/\eps n)$ with high probability.

\subsection{Statistical Inference with Differential Privacy}

The Laplace mechanism described above is about estimating a function $f(\ux)$ of the dataset $\ux$, rather than
the population from which $\ux$ is drawn, and much of the differential privacy literature is about
estimating such empirical statistics.  There are several important exceptions, the earliest being the work
on differentially private PAC learning (\cite{BlumDwMcNi05,kasiviswanathan2011can}), but still many basic
statistical inference questions have not been addressed.

However, a natural approach was already suggested in early works on differential privacy.  In many cases, we know
that population statistics are well-approximated by empirical statistics, and thus we can try to estimate these empirical statistics with differential privacy.  For example, the population mean $\mu$ for a normal population is well-approximated by the sample mean $\overline{X}$, which we can estimate using the Laplace mechanism: 
$$M(X_1,\ldots,X_n) = \overline{X} + Z, \text{where $Z\sim \Lap(2B/\eps n)$.}$$

On the positive side, 
observe that the noise being introduced for privacy vanishes
linearly in $1/n$, whereas
$\overline{X}$ converges to the
population mean at a rate of
$1/\sqrt{n}$, so asymptotically
we obtain privacy ``for free'' compared to the (optimal) non-private estimator $\overline{X}$.

However, this rough analysis hides some important issues.  First, it is misleading to look
only at the dependence on $n$.  The other parameters, such as $\sigma$, $\epsilon$, and $B$ can be quite significant
and should not be treated as constants.  Indeed $\sigma/\sqrt{n} \gg B/\epsilon n$ only when $n \gg (B/\epsilon \sigma)^2$, which means that the asymptotics only kick in at a very large value of $n$.  Thus it is important to determine
whether the dependence on these parameters is necessary or can be improved.  Second, the parameter $B$
is supposed to be a (worst-case) bound on the range of the data, which is incompatible with a modeling the population as following a normal distribution (which is supported on the entire real line).  
Thus, there have been several works seeking the best asymptotic approximations we can obtain for population statistics
under differential privacy, such as \cite{dworkLei,smith2011privacy,wassermanzhou,wasserman2012minimaxity,hall2013differential,duchi2013localNIPS,duchi2013localFOCS,barber2014privacy}.

\subsection{Conservative Statistical Inference with DP} \label{sec:intro-conservative}

The works discussed in the previous section focus on providing point estimates for population quantities, but
as mentioned earlier, it is also important to be able to provide measures of statistical significance, to prevent
analysts from drawing incorrect conclusions from the results.  These measures of statistical significance
need to take into account the uncertainty coming both from the sampling of the data and from the noise introduced for privacy.  Ignoring the noise introduced for privacy can result in wildly incorrect results at finite sample sizes, as
demonstrated empirically many times (e.g. \cite{Fienberg,karwa2012differentially,karwa2016inference}) and this can have severe consequences.  For example, \cite{fredrikson2014privacy} found that naive use of differential privacy in calculating warfarin dosage would lead to unsafe levels of medication, but of course one should never use any sort of statistics for life-or-death decisions without some analysis of statistical significance.  

Since calculating the exact statistical significance of differentially private computations seems difficult in general, we advocate {\em conservative} estimates of significance.  
That is,  we require that 
$\mathbb P (I(X_1,\ldots,X_n)\ni \mu)\geq 1-\alpha,$
for {\em all} values of $n$, values of the population parameters, and values of the privacy parameter.

For sample sizes that are too small or privacy parameters that are too aggressive, we may achieve this property by allowing the algorithm to sometimes produce an extremely large confidence interval, but that is preferable to producing a small interval that does
not actually contain the true parameter which may violate the desired coverage property.  Note that what constitutes a sample size that is ``too small'' can depend on the unknown parameters of the population (e.g. the unknown variance $\sigma^2$) and their interplay with other parameters (such as the privacy parameter $\epsilon$). 

Returning to our example of estimating a normal mean with known variance under differential privacy, if we use the Laplace Mechanism to approximate the empirical mean (as discussed above), we can obtain a conservative confidence interval for the population mean by increasing the length of classical, non-private confidence interval to account for the likely magnitude
of the Laplace noise.   More precisely, starting with the differentially private mechanism 
$$M(X_1,\ldots,X_n) = \overline{X} + Z, \text{where $Z\sim \Lap(2B/\eps n)$,}$$
the following is a $(1-O(\alpha))$-level confidence interval for the population mean $\mu$:
$$I(X_1,\ldots,X_n) = M(X_1,\ldots,X_n)  \pm \left(\frac{\sigma}{\sqrt{n}}\cdot z_{1-\alpha/2} 
+ \frac{B}{\eps n}\cdot \log(1/\alpha)\right).$$
The point is that with probability $1-O(\alpha)$, the Laplace noise $Z$ has magnitude at most
$(B/\eps n)\cdot \log(1/\alpha)$, so increasing the interval by this amount will preserve coverage (up to an $O(\alpha)$
change in the probability).   Again, the privacy guarantees of the Laplace mechanism relies on the data points being
guaranteed to lie in $[-B,B]$; otherwise, points need to be clamped to lie in the range, which can bias the empirical mean and compromise the coverage guarantee.  Thus, to be safe, a user may choose a very large value of $B$, but then this
makes for a much larger (and less useful) interval, as the length of the interval grows linearly with $B$.  Thus, a natural research question (which we investigate) is whether such a choice and corresponding cost is necessary.

Conservative hypothesis testing with differential privacy, where we require that the significance level is
at least $1-\alpha$, was advocated by ~\cite{gaboardi2016differentially}.  Methods aimed at calculating the combined
uncertainty due to sampling and privacy (for various differentially private algorithms) were given in \cite{vu2009differential,williams2010probabilistic,karwa2012differentially,karwa2015private,karwa2014differentially, karwa2016inference, gaboardi2016differentially,solea2014differentially,wang2015differentially,kiferRogers}, but generally the utility of these methods (e.g. the expected length of a confidence interval or
power of a hypothesis test) is only evaluated empirically or the conservativeness only holds in a particular asymptotic
regime.  Rigorous, finite-sample analyses of conservative inference were given in \cite{sheffet2017differentially} for confidence intervals on the coefficients from ordinary least-squares regression (which can be seen as a generalization of the problem we study to multivariate Gaussians) and in \cite{cai2017priv} for hypothesis testing of discrete distributions.  However, neither paper provides matching lower bounds, and in particular, the algorithms of \cite{sheffet2017differentially} only apply for bounded data (similar to the basic Laplace mechanism). 
In our work, we provide a comprehensive theoretical analysis of conservative differentially private confidence intervals for a normal mean, with both algorithms and lower bounds, without any bounded data assumption.

Before stating our results, we define more precisely the notion of a (conservative) $(1-\alpha)$-level confidence set. Let $\mathcal D = \{ \mathbb D_{\theta, \gamma}\}_{\theta \in \Theta, \gamma \in \Gamma}$ be a family of distributions supported on $\mathbb R$ where $\theta \in \Theta \subseteq \mathbb R$ is a real valued parameter and $\gamma \in \Gamma \subseteq \mathbb R^k$ is a vector of \emph{nuisance parameters}. A nuisance parameter is an unknown parameter that is not a primary object of study, but must be accounted for in the analysis. For example, when we consider estimating the
mean of a normal distribution with unknown variance, the variance is a nuisance parameter. We write $X_1, \ldots, X_n \overset{iid}{\sim}  \mathbb D_{\theta,\gamma}$ if $X_1, \ldots, X_n$ is an independent and identically distributed random sample from a distribution $ \mathbb D_{\theta,\gamma}$.  We sometimes abuse the notation and write $\mathbb D_{\theta}$ instead of $\mathbb D_{\theta, \gamma}$ when $\gamma$ is clear from the context.
\begin{definition}[$(1-\alpha)$-level confidence set]
	Let $\alpha \in (0,1)$. Let $X_1, \ldots, X_n \overset{iid}{\sim} \mathbb D_{\theta, \gamma}$ where $\theta \in \Theta \subseteq \mathbb R$ and $\gamma \in \Gamma \subseteq \mathbb R^k$ is a vector of nuisance parameters. A $(1-\alpha)$-level confidence set for $\theta$ with sample complexity $n$ is a (possibly randomized) measurable function $I: \mathbb R^n \rightarrow \mathbb S$, where $\mathbb S$ is a set of measurable subsets of $\mathbb R$, such that for all $\theta \in \Theta$ and $\gamma \in \Gamma$, we have
	$$\prm[X_1,\ldots, X_n \sim \mathbb D_{\theta,\gamma}][I]{I(X_1, \ldots, X_n) \ni \theta  } \geq 1 - \alpha,$$
	where the probability is taken over the randomness of both $I$ and the data $X_1, \ldots, X_n$.
\end{definition}

\subsection{Our Results}

As discussed above, in this paper we develop conservative differentially private estimators of confidence intervals for the mean $\mu$ of a normal distribution with known and unknown variance $\sigma^2$. Our algorithms are designed to be differentially private for all input datasets and they provide $(1-\alpha)$-level coverage whenever the data is generated from a normal distribution. Unlike the Laplace mechanism described above and many other differentially private algorithms, we do not make any assumptions on the boundedness of the data. Our pure DP (i.e. $(\eps, 0)$-DP) algorithms assume that the mean $\mu$ and variance $\sigma^2$ lie in a bounded (but possibly very large) interval, and we show (using lower bounds) that such an assumption is necessary. Our approximate (i.e. $(\eps, \delta)$) differentially private algorithms do not make any such assumptions, i.e. both the data and the parameters ($\mu, \sigma^2$) can remain unbounded. We also show that the differentially private estimators that we construct have nearly optimal expected length, up to logarithmic factors. This is done by proving lower bounds on the length of differentially private confidence intervals. A key aspect of the confidence intervals that we construct is their conservativeness --- the coverage guarantee holds in finite samples, as opposed to only holding asymptotically. We also show that as $n \rightarrow \infty$, the length of our differentially private confidence intervals is at most $1 + o(1)$ factor larger than length of their non-private counterparts.  

Let $X_1, \ldots, X_n$ be an independent and identically distributed (\emph{iid}) random sample from a normal distribution with an unknown mean $\mu$ and variance $\sigma^2 $, where $\mu \in (-R,R)$ and $\sigma \in (\sigma_{\min}, \sigma_{\max})$.  Our goal is to construct $(\eps, \delta)$-differentially private $(1 - \alpha)$-level confidence sets for $\mu$ in both the known and the unknown variance case, i.e. we seek a set $I = I(X_1, \ldots, X_n)$ such that
\begin{enumerate}
	\item $I(X_1, \ldots, X_n)$ is a $(1 - \alpha)$-level confidence interval, and
	\item $I(x_1, \ldots, x_n)$ is $(\eps, \delta)$-differentially private.
    \item $\Em[X_1, \ldots, X_n, I][]{I(X_1, \ldots, X_n) }$ is as small as possible.
\end{enumerate}

\paragraph{Known Variance:} 
We prove the following result for estimating the confidence interval with the privacy constraint:

\begin{theorem}[known variance case]
	\label{thm:intro1}
	There exists an $(\eps, \delta)$-differentially private algorithm that on input $X_1, \ldots, X_n \overset{iid}{\sim} N(\mu, \sigma^2)$ with known $\sigma^2$ and unknown mean $\mu \in (-R,R)$ outputs a $(1-\alpha)$-level confidence interval for $\mu$. Moreover, if 
	$$n > \frac{c_1}{\eps}\min\left\{
	\log \left(\frac{R}{\sigma}\right), 
	\log \left(\frac{1}{\delta}\right)
	\right\} + \frac{c_2}{\eps} \log\left(\frac{1}{\alpha}\right),$$ (where $c_1$ and $c_2$ are universal constants) 
	then the interval is of fixed width $\beta$ where
	$$
	\beta \leq \max\left\{\frac{\sigma}{\sqrt{n}}\O{\sqrt{\log\left(\frac{1}{\alpha}\right)}} , 
	\frac{\sigma }{\eps n}  \mathrm{polylog}{\left(\frac{n}{\alpha}\right)}
	\right\}
	$$
\end{theorem}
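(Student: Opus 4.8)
The plan is to use a two-stage ``locate-then-refine'' algorithm. In the first stage I would compute a coarse private estimate $\tilde\mu$ of $\mu$, accurate to within $O(\sigma\sqrt{\log(n/\alpha)})$; in the second stage I would clamp the data to a window $[\tilde\mu - T, \tilde\mu + T]$ of width $2T = O(\sigma\sqrt{\log(n/\alpha)})$ and release the clamped empirical mean via the Laplace mechanism. The whole point of the first stage is to turn the unbounded-data problem into a bounded one whose diameter is $O(\sigma\cdot\mathrm{polylog})$ rather than $O(R)$, so that the injected noise scales with $\sigma$ and not with $R$.

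For the first stage I would partition $(-R,R)$ into bins of width $\Theta(\sigma)$ and privately find a heavy bin, scoring each bin by the number of samples it contains. Since the $X_i \sim N(\mu,\sigma^2)$, a constant fraction of the sample lands in the bins nearest $\mu$, so the bin containing $\mu$ has count $\Omega(n)$, while a bin at distance $d$ from $\mu$ has count concentrated near $n\cdot\Theta(e^{-d^2/2\sigma^2})$. I would find a heavy bin either by the exponential mechanism (for pure DP, paying $O(\log(R/\sigma)/\eps)$ for the $\Theta(R/\sigma)$ candidate bins) or by a stability-based histogram that only reports bins whose noisy count exceeds a threshold $\tau = O(\log(1/\delta)/\eps)$ (for approximate DP, with no dependence on the number of bins, hence valid even for unbounded $R$). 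Taking the cheaper of the two produces the $\min\{\log(R/\sigma),\log(1/\delta)\}$ term, and requiring that the heavy bin's count $\Omega(n)$ dominate both the mechanism's slack and the threshold yields the stated sample-complexity condition.

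For the second stage, conditioned on $\tilde\mu$ the window $[\tilde\mu - T, \tilde\mu + T]$ has fixed width $2T$, so the clamped mean has data-independent global sensitivity $2T/n$ and the Laplace step is $(\eps/2,0)$-DP for every fixed $\tilde\mu$; adaptive composition with the first stage then gives the overall $(\eps,\delta)$ guarantee after splitting the privacy budget between the two stages. For the width bound I would union-bound four good events: (i) the first stage returns a bin within $O(\sigma\sqrt{\log(n/\alpha)})$ of $\mu$; (ii) every sample lies in the clamping window, so clamping is a no-op and the clamped mean equals $\bar X$; (iii) $|\bar X - \mu| \le \sigma z_{1-\alpha/c}/\sqrt n$; and (iv) the Laplace noise is at most $O((T/\eps n)\log(1/\alpha))$. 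On the intersection the released center lies within $\frac{\sigma}{\sqrt n}\,O(\sqrt{\log(1/\alpha)}) + \frac{\sigma}{\eps n}\,\mathrm{polylog}(n/\alpha)$ of $\mu$, which gives the claimed $\beta$ as the maximum of the two terms.

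The step I expect to be the main obstacle is guaranteeing $(1-\alpha)$ coverage for \emph{all} $n$ — in particular for sample sizes too small for the refinement to succeed — while keeping the data unbounded. The resolution I would pursue is to emit the short interval only when the first stage reports a bin whose noisy count clears the threshold, and to fall back to the trivial interval $(-R,R)$ otherwise (the stability-based histogram does this automatically). The key probabilistic fact, which holds \emph{independently of $n$}, is that a bin far from $\mu$ is exponentially unlikely to accumulate a high count because of the Gaussian tails; hence whenever a short interval is emitted its center is within the claimed distance of $\mu$ with probability $\ge 1-\alpha$, while whenever the fallback fires the output contains $\mu$ trivially. Making this conditioning rigorous — charging the ``short interval emitted around a far bin'' error against $\alpha$ uniformly in $n$, without any assumption that $n$ is large — is the delicate part of the argument.
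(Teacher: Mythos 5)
Your proposal follows essentially the same route as the paper: a differentially private coarse localization of $\mu$ via a histogram with bins of width $\Theta(\sigma)$ over $(-R,R)$ (Laplace-noised counts for the pure-DP branch, a stability-based histogram for the approximate-DP branch, taking the cheaper of the two), followed by clamping the data to a window of width $O(\sigma\sqrt{\log(n/\alpha)})$ around the heavy bin and releasing the clamped empirical mean with Laplace noise, the interval being widened to account for both the sampling error and the noise. The one place you anticipate delicacy — coverage for all $n$ — is handled in the paper by a purely deterministic, data-independent check: since $n$, $R$, $\sigma$, $\eps$, $\delta$, $\alpha$ are all public, the algorithm simply outputs the trivial interval $(-R,R)$ (which always contains $\mu$) whenever $n$ is below the stated sample-complexity threshold, so no data-dependent conditioning on whether a noisy count clears a threshold is required.
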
 
Theorem \ref{thm:intro1} asserts that there exists a differentially private algorithm that outputs a fixed width $(1-\alpha)$-level confidence interval for any $n$. Moreover, when $n$ is large enough, the algorithm outputs a confidence interval of length $\beta$ which is non-trivial in the sense that $\beta \ll R$. Specifically, $\beta$ is a maximum of two terms: The first term is $\O{\sigma \sqrt{\log(1/\alpha)/n}}$ which is the same as the length of the non-private confidence interval discussed in Section~\ref{sec:intro-normal} up to constant factors. The second term is $\O{\sigma/(\eps n)}$ up to polylogarithmic factors $-$ it goes to $0$ at the rate of $\Otilde{1/n}$ which is faster than the rate at which the first term goes to $0$. Thus for large $n$ the increase in the length of the confidence interval due to privacy is mild.   Note that, unlike the basic approach based on the Laplace mechanism discussed in
Section~\ref{sec:intro-conservative}, the length of the confidence interval has no dependence on the range of the data, or even the range $(-R,R)$ of the mean $\mu$. 

The sample complexity required for obtaining a non-trivial confidence interval is the minimum of two terms: $\O{(1/\eps)\log (R/\alpha \sigma)}$ and $\O{(1/\eps) \log (1/\alpha \delta)}$. 
The dependence of sample complexity on $R/\sigma$ is only logarithmic. Thus one can choose a very large value of $R$.  Moreover, when $\delta >0$, we can set $R = \infty$ and hence there is no dependence of the sample complexity on $R$.  

The first term in the length of the confidence interval in Theorem \ref{thm:intro1} hides some constants which can lead to a constant multiplicative factor increase in the length of the differentially private confidence intervals when compared to the non-private confidence intervals. We show that it is possible to eliminate this multiplicative increase and obtain differentially private confidence intervals with only additive increase in the length:
\begin{theorem}
	\label{thm:intro2}
	There exists an $(\eps, \delta)$-differentially private algorithm that on input $X_1, \ldots, X_n \overset{iid}{\sim} N(\mu, \sigma^2)$ with known $\sigma^2$ and unknown mean $\mu \in (-R,R)$ outputs a $(1-\alpha)$-level confidence interval of $\mu$. Moreover, if
		$$n > \frac{c}{\eps}\min\left\{
		\log \left(\frac{R}{\sigma}\right), 
		\log \left(\frac{1}{\delta}\right)
		\right\} + \frac{c}{\eps} \log\left(\frac{\log(1/\eps)}{\alpha}\right)$$ (where $c$ is a universal constant)
	then
	\begin{align*}
	\beta =\frac{2\sigma}{\sqrt{n}}z_{1 - \alpha/2} + \frac{\sigma}{\eps n} \mathrm{polylog}\left(\frac{n}{\alpha}\right)
	\end{align*}
	
\end{theorem}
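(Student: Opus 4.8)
The plan is to follow the localization-and-clamping strategy that underlies Theorem~\ref{thm:intro1}, but to replace its final interval construction with one that combines the sampling error and the privacy error \emph{additively}, so that the Gaussian fluctuation contributes its exact quantile $z_{1-\alpha/2}$ rather than a loose $\O{\sqrt{\log(1/\alpha)}}$. Concretely, I would first run a differentially private sub-routine returning a coarse estimate $\tilde\mu$ with $|\tilde\mu-\mu|\le c\sigma$ with probability at least $1-\alpha_0$: for pure DP this is a private histogram / stability search over $(-R,R)$ at resolution $\sigma$ (cost $\O{(1/\eps)\log(R/\sigma)}$), and for approximate DP a stable variant with no dependence on $R$ (cost $\O{(1/\eps)\log(1/\delta)}$), which supplies the first term of the threshold. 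Next I would clamp each $X_i$ to the fixed-width window $[\tilde\mu-(c+t)\sigma,\ \tilde\mu+(c+t)\sigma]$ with $t=\O{\sqrt{\log(n/\alpha)}}$ and release $M=\overline{X}^{\mathrm{clamp}}+Z$, where $Z\sim\Lap(W/\eps n)$ and $W=2(c+t)\sigma$ is the data-independent window width. On $E_{\mathrm{loc}}\cap E_{\mathrm{data}}$ — localization succeeds and all $n$ points lie in $\mu\pm t\sigma$ (the latter with probability $\ge1-\alpha_1$) — the window contains every point, so $\overline{X}^{\mathrm{clamp}}=\overline{X}$ and $M=\overline{X}+Z$ with $Z$ independent of the data.

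The refined interval is $I=M\pm\big(\tfrac{\sigma}{\sqrt n}z_{1-\alpha'/2}+b\big)$, where $b=(W/\eps n)\log(1/\alpha_2)$ bounds $|Z|$ with probability $\ge1-\alpha_2$ and $\alpha'=\alpha-\alpha_0-\alpha_1-\alpha_2$. Coverage rests on the deliberately crude triangle inequality $|\overline X-\mu+Z|\le|\overline X-\mu|+|Z|$: letting $E=E_{\mathrm{loc}}\cap E_{\mathrm{data}}\cap\{|Z|\le b\}$, on $E$ the event $\mu\notin I$ forces $|\overline X-\mu|>\tfrac{\sigma}{\sqrt n}z_{1-\alpha'/2}$, so dropping $E$ and using $\overline X-\mu\sim N(0,\sigma^2/n)$ gives
$$\pr{\mu\notin I}\le \pr{E^c}+\pr{\,|\overline X-\mu|>\tfrac{\sigma}{\sqrt n}z_{1-\alpha'/2}}\le(\alpha_0+\alpha_1+\alpha_2)+\alpha'=\alpha.$$
The point is that the Gaussian term keeps its own quantile while the privacy noise is quarantined into the separate additive term $b=\frac{\sigma}{\eps n}\mathrm{polylog}(n/\alpha)$.

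The main obstacle is the quantile bookkeeping: controlling the inflation $z_{1-\alpha'/2}-z_{1-\alpha/2}$ incurred by lowering the level from $\alpha$ to $\alpha'=\alpha-\eta$, where $\eta:=\alpha_0+\alpha_1+\alpha_2$. Since $\Phi^{-1}$ has derivative $1/\phi(\Phi^{-1}(\cdot))$ and is convex on $(1/2,1)$, for $\eta\le\alpha/2$ one has $z_{1-\alpha'/2}-z_{1-\alpha/2}\le C\,\eta/\phi(z_{1-\alpha/2})$ for an absolute constant $C$ (lower-bounding $\phi$ by $\phi(z_{1-\alpha/4})\ge c'\phi(z_{1-\alpha/2})$ over the interval). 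With the Gaussian tail estimate $\phi(z_{1-\alpha/2})=\Theta(\alpha\sqrt{\log(1/\alpha)})$, the induced extra width is $\frac{\sigma}{\sqrt n}\cdot\O{\eta/(\alpha\sqrt{\log(1/\alpha)})}$, so to keep it below the privacy term it suffices to take $\eta=\O{\alpha\sqrt{\log(1/\alpha)}/(\eps\sqrt n)\cdot\mathrm{polylog}}$. The tail probabilities $\alpha_1,\alpha_2$ can be pushed this small essentially for free, since they enter $W$ and $b$ only through their logarithms; the binding cost is $\alpha_0$, whose logarithm feeds into the localization sample complexity and, after optimizing the localization routine, yields the second threshold term $\frac{c}{\eps}\log(\log(1/\eps)/\alpha)$.

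Finally, privacy is immediate and holds for \emph{all} inputs, not just Gaussian data: the localization sub-routine is $(\eps_1,\delta_1)$-differentially private and the clamped-mean release is $(\eps_2,0)$-differentially private by the Laplace mechanism, since clamping to a window of width $W$ caps each coordinate's influence and forces global sensitivity $W/n$ independently of the data; the two compose to $(\eps,\delta)$-DP. Assembling the coverage bound, the quantile-inflation estimate, and $W=\O{\sigma\sqrt{\log(n/\alpha)}}$, $b=\frac{\sigma}{\eps n}\mathrm{polylog}(n/\alpha)$ then yields $\beta=\frac{2\sigma}{\sqrt n}z_{1-\alpha/2}+\frac{\sigma}{\eps n}\mathrm{polylog}(n/\alpha)$; the delicate part throughout is to keep the leading constant at exactly $z_{1-\alpha/2}$ while simultaneously holding the sample-complexity threshold at its claimed size.
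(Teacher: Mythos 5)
Your proposal is correct and follows essentially the same route as the paper: the same range-localization/clamping/Laplace construction (Algorithm~\ref{alg:ciknownvar}), the same union-bound coverage decomposition with the Gaussian term keeping its own quantile, and the same key refinement of shrinking the auxiliary failure probabilities to roughly $\alpha/\sqrt{n}$ so that the quantile inflation $z_{1-\alpha'/2}-z_{1-\alpha/2}=\O{1/\sqrt{n}}$ (your convexity bound on $\Phi^{-1}$ is the paper's Proposition~\ref{prop:asymptoticQuantile} proved via the mean value theorem, with an immaterially sharper constant) is absorbed into the additive $\frac{\sigma}{\eps n}\mathrm{polylog}(n/\alpha)$ term, with the enlarged $\log$ arguments absorbed into the sample-complexity threshold exactly as in Theorem~\ref{thm:finitesample}.
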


Theorem \ref{thm:intro2} asserts that with a small change to the sample complexity on $n$ by an additive term of $(1/\eps) \cdot \log (1/\eps)$, we can achieve an additive increase in the length $\beta$ of the confidence interval as opposed to a multiplicative increase.
Note that the first term in $\beta$ exactly matches the length of the 
non-private confidence interval, namely $2\sigma z_{1-\alpha/2}/\sqrt{n}$, while the second term vanishes
more quickly as a function of $n$.\footnote{We note that when the range $R$ of the mean is bounded, as we require
for our pure differentially private algorithms, the length of a non-private algorithm can be improved, but the improvement is insignificant in the regime of parameters we are interested in, namely when $R\geq \Omega(\sigma)$.  See
Theorem~\ref{thm:lowerboundwithoutprivacy}.}
An important point is that we retain an $n$ rather than an $\eps n$ in the first term, contrary to the common belief that differential privacy has a price of $1/\eps$ in sample size. (See \cite{steinke2015between} for a proof of such a statement for computing summary statistics of a dataset rather than inference, and \cite{hay2016principled} for an informal claim along these lines.) 
 
\paragraph{Unknown Variance:} 
Our $(\eps, \delta)$-differentially private confidence interval in the unknown variance case is as follows:

\begin{theorem}[unknown variance case]
	\label{thm:intro3}
	There exists an $(\eps, \delta)$-differentially private that on input $X_1, \ldots, X_n \overset{iid}{\sim} N(\mu, \sigma^2)$ with unknown mean $\mu \in (-R,R)$ and variance $\sigma^2 \in (\sigma_{\min}, \sigma_{\max})$ always outputs a $(1-\alpha)$-level confidence interval of $\mu$. Moreover, if 
	$$n \geq  \frac{c}{\eps} 
	\min \left\{
	\max \left\{
	\log \left(\frac{R}{\sigma_{\min}}\right), \log \left(\frac{\sigma_{\max}}{\sigma_{\min}}\right)
	\right\},
	\log \left(\frac{1}{\delta}\right)
	\right\} + \frac{c}{\eps}\log \left(\frac{\log \left(\frac{1}{ \eps}\right)}{\alpha}\right),
	$$ (where $c$ is a universal constant),
	then the expected length of the interval $\beta$ is such that
	$$
	\beta \leq \max\left\{\frac{\sigma}{\sqrt{n}}  \O{\sqrt{\log\left(\frac{1}{\alpha}\right)}} , 
	\frac{\sigma }{\eps n}  \mathrm{polylog}{\left(\frac{n}{\alpha}\right)}
	\right\}
	$$
		
\end{theorem}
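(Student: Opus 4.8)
The plan is to reduce the unknown-variance problem to the known-variance case already handled by Theorems~\ref{thm:intro1} and~\ref{thm:intro2}: first produce a differentially private constant-factor estimate $\hat\sigma$ of the standard deviation, and then run the known-variance confidence interval algorithm with $\hat\sigma$ (inflated by a fixed constant so that it upper-bounds $\sigma$ with high probability, keeping the interval conservative). I would split the sample and the privacy budget into two stages and combine them by basic composition to obtain the overall $(\eps,\delta)$ guarantee.

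Stage 1 (estimating $\sigma$). The first step removes the dependence on the unknown mean by passing to the pairwise differences $Y_i = (X_{2i}-X_{2i-1})/\sqrt2$, which are i.i.d.\ $N(0,\sigma^2)$ \emph{regardless} of $\mu$. I then place a geometric grid of candidate scales $\sigma_{\min},2\sigma_{\min},4\sigma_{\min},\dots,\sigma_{\max}$ and, for each dyadic band, count how many $|Y_i|$ land in it. Since a constant fraction of the $Y_i$ lie within a constant factor of $\sigma$, the bands adjacent to the true $\sigma$ carry an $\Omega(1)$ fraction of the mass while distant bands are nearly empty, so a private selection returns a band whose scale $\hat\sigma\in[\sigma/C,C\sigma]$ with probability $1-O(\alpha)$. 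This is exactly where the $\min\{\cdot,\log(1/\delta)\}$ structure of the sample complexity enters: for pure differential privacy I would use report-noisy-max / the exponential mechanism, whose reliable selection over the $O(\log(\sigma_{\max}/\sigma_{\min}))$ bands costs $\O{(1/\eps)\log(\sigma_{\max}/\sigma_{\min})}$ samples, whereas for approximate privacy a stability-based ``stable histogram'' outputs a heavy band at a cost of only $\O{(1/\eps)\log(1/\delta)}$ samples, independent of the number of bands; taking whichever is cheaper yields the first branch of the minimum. If no band is heavy enough, Stage~1 declares failure.

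Stage 2 (estimating $\mu$) and the interval. Given $\hat\sigma=\Theta(\sigma)$, the sample is now effectively bounded, and I would invoke the known-variance algorithm of Theorem~\ref{thm:intro1} (or~\ref{thm:intro2}) with variance parameter $(C\hat\sigma)^2$. Internally this localizes $\mu$ over $(-R,R)$ on a grid of spacing $\Theta(\hat\sigma)$ --- contributing the $\log(R/\sigma_{\min})$ term, again reducible to $\log(1/\delta)$ via a stable histogram, which is why the two sources of cost combine as $\max\{\log(R/\sigma_{\min}),\log(\sigma_{\max}/\sigma_{\min})\}$ inside the outer minimum --- clamps the data to a window of half-width $\O{\hat\sigma\sqrt{\log(n/\alpha)}}$ about the localized mean, and releases a noisy clamped mean $\tilde\mu$ whose privacy noise has scale $\O{\hat\sigma\sqrt{\log(n/\alpha)}/(\eps n)}$. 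With high probability all $n$ points lie inside the window, so clamping adds negligible bias. I would then output $\tilde\mu\pm\beta/2$ with half-width dominated by the maximum of the non-private sampling term $\O{\hat\sigma\sqrt{\log(1/\alpha)/n}}$ and the privacy term $\O{\hat\sigma\,\mathrm{polylog}(n/\alpha)/(\eps n)}$; since $\hat\sigma=\Theta(\sigma)$ this matches the claimed bound on $\beta$.

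For the ``always a $(1-\alpha)$-level'' (conservativeness) requirement, the algorithm falls back to the trivial interval $(-R,R)$ whenever either stage declares failure; since $\mu\in(-R,R)$ this fallback always covers, and coverage then follows from a union bound over the $O(1)$ bad events --- wrong scale band, wrong localization band, a point escaping the clamping window, the privacy noise exceeding its high-probability magnitude, and the sampling fluctuation of $\bar X$ exceeding the sampling term --- each made $O(\alpha)$ by the choice of constants, so that on their complement the inflated half-width genuinely dominates the true requirement $\sigma z_{1-\alpha/2}/\sqrt n$. The main obstacle is Stage~1: obtaining a \emph{multiplicative} private estimate of $\sigma$ with no knowledge of $\mu$ and no boundedness assumption; the pairwise-difference reduction followed by the log-scale stable histogram is what makes this possible at the stated sample cost. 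A secondary subtlety I would have to handle carefully is in the expected-length bound, where the rare fallback to $(-R,R)$ contributes its width $2R$ times the failure probability: here the domain-size term $\frac{c}{\eps}\log(R/\sigma_{\min})$ in the sample complexity does double duty, driving the failure probability low enough (and, in the approximate-DP regime, $\delta$ being cryptographically small) that this contribution is absorbed into the $\sigma$-proportional width.
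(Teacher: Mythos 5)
Your overall architecture is the same as the paper's: pairwise differences $X_{2i}-X_{2i-1}$ to remove the dependence on $\mu$, a dyadic-scale private histogram to get $\hat\sigma\in[\sigma,8\sigma]$ with probability $1-O(\alpha)$, then the known-variance range-finding/clamping/Laplace-mean pipeline run with $\hat\sigma$. The paper's actual algorithm additionally releases a noisy, upward-shifted private estimate of the sample variance and uses a $t$-quantile, but that refinement is only needed for the tighter additive-increase version; your constant-factor route suffices for the $\O{\cdot}$ bound claimed in this statement.

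The genuine gap is in the expected-length bound. The width of your interval is proportional to $\hat\sigma$ (both the $\hat\sigma\sqrt{\log(1/\alpha)/n}$ sampling term and the privacy term scale with it), and $\hat\sigma$ is a random variable that, on the failure event of the scale-selection step, can be as large as $\Theta(\sigma_{\max})$. The bad outcome is not ``no band is heavy,'' which you could detect and convert into a fallback to $(-R,R)$, but ``the wrong band wins,'' which the algorithm cannot detect. A union bound giving total failure probability $O(\alpha)$ only yields $\E{\hat\sigma}\le 8\sigma+O(\alpha)\,\sigma_{\max}$, which is not $\O{\sigma}$ unless $\alpha\lesssim\sigma/\sigma_{\max}$ --- an assumption the theorem does not make. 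The paper closes this with a per-bin selection bound for the histogram learner, $\pr{\arg\max_k\tilde p_k=j}\le np_j+2e^{-(\eps n/8)\max_k p_k}$, so the probability of selecting a bin at dyadic distance $t$ from the true scale decays like $n\exp(-2^{2t}/8)$, fast enough to beat the $2^{2t}$ growth of the squared scale; summing this series (plus an $e^{-\Omega(\eps n)}\sigma_{\max}^2$ tail that is absorbed precisely when $n\ge(c/\eps)\log(\sigma_{\max}/\sigma_{\min})$) gives $\E{\hat\sigma^2}\le\sigma^2\bigl(c_1+c_2\alpha\log^2 n\bigr)$, after which the scale-step failure parameter is set to $\min\{\alpha/4,1/\log^2 n\}$ to make this $\O{\sigma^2}$. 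This is also the real reason the sample complexity carries a $\log(\sigma_{\max}/\sigma_{\min})$ term rather than the $\log\log(\sigma_{\max}/\sigma_{\min})$ that reliable selection alone would cost; your accounting attributes that term to the selection step, where it does not actually arise. Your closing remark addresses only the analogous issue for the $(-R,R)$ fallback, not for $\hat\sigma$ itself.
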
 

As in the known variance case, Theorem \ref{thm:intro3} asserts that there exists an $(\eps,\delta)$ differentially private algorithm that always outputs an $(1-\alpha)$ confidence interval of $\mu$ for all $n$.  If $n$ is large enough, the length of the confidence interval is a maximum of two terms, where the first term is same as the length of the non-private confidence interval and the second term goes to $0$ at a faster rate.  

As before the dependence of sample complexity on $R/\sigma_{\min}$ and $\sigma_{\max}/\sigma_{\min}$ is logarithmic, as opposed to linear. Hence we can set these parameters to a large number. Moreover, when $\delta > 0$, we can set $R$ and $\sigma_{\max}$ to be $\infty$ and $\sigma_{\min}$ to be $0$. Thus when $\delta > 0$, there are no assumptions on the boundedness of the parameters. 

Finally, along the lines of Theorem \ref{thm:intro2}, at the cost of a minor increase in sample complexity, we can obtain a differentially private algorithm that has only additive increase in the length of the confidence interval that is asymptotically vanishing relative to the non-private length.  Specifically, we can obtain an interval with length
\begin{align*}
	\beta  \leq \frac{2\sigma}{\sqrt{n}}\cdot k_nt_{n-1,1 - \alpha/2} + \frac{\sigma}{\eps n} \mathrm{polylog}\left(\frac{n}{\alpha}\right),
\end{align*}
where again the first term is exactly the same as in the non-private case (see Section~\ref{sec:intro-normal}) and the
second term vanishes more quickly as a function of $n$.

\paragraph{Lower Bounds.}  We also prove lower bounds on the length of any $(1-\alpha)$-level $(\eps, \delta)$-differentially private confidence set of expected size $\beta$:

\begin{theorem}[Lower bound] \label{thm:intro-lower}
	Let $M$ be any $(\eps, \delta)$-differentially private algorithm that on input $X_1, \ldots, X_n \overset{iid}{\sim} N(\mu, \sigma^2), \mu \in (-R,R)$ produces a $(1 - \alpha)$-level confidence set of $\mu$ of expected size $\beta$. 
	If $\delta < \alpha/2n$, then
	$$ \beta \geq c \cdot \min\left\{ \frac{\sigma}{\eps n} \log\left(\frac{1}{\alpha}\right), R\right\}
	$$
	Moreover, if $\beta < \sigma < R$, then
	$$
	n \geq c\min \left(\frac{1}{\eps}\log \left(\frac{R}{\sigma}\right), \frac{1}{\eps}\log \left(\frac{1}{\delta}\right)\right) + \frac{c}{\epsilon}\log \left(\frac{1}{\alpha}\right)
	$$
	where $c$ is a universal constant.
\end{theorem}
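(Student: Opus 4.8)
The plan is to prove the two inequalities separately, both resting on a single tool: a \emph{coupling plus group-privacy} lemma that I would establish first. For two product measures $N(\mu,\sigma^2)^{\otimes n}$ and $N(\mu',\sigma^2)^{\otimes n}$ there is a coordinatewise coupling under which the expected number of differing coordinates equals $n\cdot d_{TV}(N(\mu,\sigma^2),N(\mu',\sigma^2))$, and this total variation distance is $\Theta(\min\{|\mu-\mu'|/\sigma,1\})$. A Chernoff bound shows the number of differing coordinates is, with probability $1-\eta$, at most some $k=\Theta(n\min\{|\mu-\mu'|/\sigma,1\})$. Feeding this into $(\eps,\delta)$ group privacy gives, for every measurable event $S$, $\Pr_{\mu}[M\in S]\le e^{\eps k}\Pr_{\mu'}[M\in S]+\delta\tfrac{e^{\eps k}-1}{e^\eps-1}+\eta$. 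Everything else is an application of this lemma together with the coverage requirement and Markov's inequality to pass between the expected length $\beta$ and a high-probability bound on the length.

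For the width bound $\beta\ge c\min\{(\sigma/\eps n)\log(1/\alpha),R\}$, I would fix the true mean at $0$, set $\Delta=\min\{c_0(\sigma/\eps n)\log(1/\alpha),R/4\}$, and argue by contradiction that $\Pr_0[|I|\ge\Delta]\ge 1/2$. If not, Markov and coverage give $\Pr_0[\,0\in I,\ |I|<\Delta\,]\ge 1/4$, and on this event $I\subseteq(-\Delta,\Delta)$; thus the ``localization'' event $A=\{I\subseteq(-\Delta,\Delta)\}$ has $\Pr_0[A]\ge 1/4$. On the other hand, when the true mean is $2\Delta$, coverage forces $2\Delta\in I$ with probability $\ge 1-\alpha$, so $\Pr_{2\Delta}[A]\le\alpha$. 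Applying the coupling lemma between means $0$ and $2\Delta$ (here $k=\Theta(n\Delta/\sigma)$, so $\eps k=\Theta(\log(1/\alpha))$ with my choice of $\Delta$) yields $1/4\le e^{\eps k}\alpha+\delta'+\eta=\alpha^{1-\Theta(c_0)}+\delta'+\eta$, which is a contradiction once $c_0$ is a small enough constant and $\alpha$ is small. Hence $\beta\ge\Delta/2$, exactly the claimed bound (the $R/4$ cap produces the $\min$ with $R$). The crucial point is that comparing a \emph{constant} probability $1/4$ against the $\alpha$-probability event extracts a clean $\log(1/\alpha)$, whereas the more naive two-point ``test'' argument loses a spurious factor of $\alpha$.

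For the sample-complexity bound I would prove its two summands by separate applications of the same tool, then combine via $\max(a,b)\ge(a+b)/2$. The $\tfrac{c}{\eps}\log(1/\alpha)$ term comes from the localization argument above but with the comparison mean taken $\Theta(\sigma)$ away: since $\beta<\sigma$, Markov still makes the localization event have constant probability under $\mu=0$, while coverage makes it $\le\alpha$ under the far mean; as the two means are now $\Theta(\sigma)$ apart we have $k=\Theta(n)$, so the inequality $\mathrm{const}\le e^{\eps n}\alpha+\cdots$ rearranges to $n\ge(c/\eps)\log(1/\alpha)$. The $\tfrac{c}{\eps}\min\{\log(R/\sigma),\log(1/\delta)\}$ term is a packing argument: place $m=\Theta(R/\sigma)$ means $\mu_1,\dots,\mu_m$ across $(-R,R)$ with spacing $\Theta(\sigma)$, and let $A_j$ be the event that $I$ is localized in the $\Theta(\sigma)$-window around $\mu_j$; since $\beta<\sigma$ each $\Pr_{\mu_j}[A_j]$ is a constant, the $A_j$ are disjoint, and any two datasets of size $n$ differ in at most $n$ coordinates, so the coupling lemma with $k=n$ gives $\Pr_{\mu_1}[A_j]\ge e^{-\eps n}(\mathrm{const}-\delta')$. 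Summing the disjoint events, $1\ge\sum_j\Pr_{\mu_1}[A_j]\ge m\,e^{-\eps n}(\mathrm{const}-\delta')$, which forces $n\ge(c/\eps)\log(R/\sigma)$ in the pure case; for approximate DP the additive term $\delta'=\Theta(\delta e^{\eps n}/\eps)$ swamps the constant precisely once $n$ exceeds $\Theta((1/\eps)\log(1/\delta))$, which produces the $\min$ with $\log(1/\delta)$.

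The step I expect to be the main obstacle is handling the \emph{expected} (rather than worst-case) length throughout while preserving the $\log(1/\alpha)$ factor: the conversion must be done through the ``localization has constant probability'' framing rather than a direct hypothesis test, since the latter contaminates the separation between the two hypotheses with a constant and collapses $\log(1/\alpha)$ to $O(1)$. The second delicate point is the bookkeeping of the $(\eps,\delta)$ additive error: I must verify that the assumption $\delta<\alpha/2n$ keeps $\delta'$ below the relevant constant in the width argument (using $e^{\eps k}=\alpha^{-\Theta(c_0)}$ there), and separately track how $\delta'$ grows to order $1$ in the packing argument to obtain the $\log(1/\delta)$ cap; the coupling-failure probability $\eta$ must likewise be controlled by a Chernoff bound, which is only meaningful once the expected number of changes $\Theta((1/\eps)\log(1/\alpha))$ is at least a constant.
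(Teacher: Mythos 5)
Your overall strategy is the same as the paper's: the engine is a ``distributional group privacy'' lemma obtained by coupling $N(\mu,\sigma^2)^{\otimes n}$ and $N(\mu',\sigma^2)^{\otimes n}$ coordinatewise so that the Hamming distance is $\mathrm{Binomial}(n,p)$ with $p=d_{TV}$, and then comparing the two output distributions of $M$ (this is the paper's Lemma 5.2, and its Corollary 5.1 with $\eps'=6\eps n\min\{|\mu-\mu'|/\sigma,1\}$). However, there are three concrete gaps in your execution. First, your ``localization'' step --- from $0\in I$ and $|I|<\Delta$ you conclude $I\subseteq(-\Delta,\Delta)$ --- is valid only when $I$ is an interval, whereas the theorem is stated for arbitrary measurable confidence sets of expected Lebesgue measure $\beta$ (a set of tiny measure containing $0$ can still contain $2\Delta$). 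The paper sidesteps geometry entirely by writing $\Em[{}][{}]{|S|}=\int_{-R}^{R}\pr{\mu'\in S}\,d\mu'$ and applying the group-privacy comparison pointwise under the integral, which is what lets the argument cover general sets.

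Second, for the $\frac{c}{\eps}\log\frac{1}{\alpha}$ sample-complexity term you must place the comparison mean a constant multiple of $\sigma$ beyond the localization window; since Markov only gives a constant probability for a window whose width is a constant factor larger than $\beta$ (and $\beta$ can be as large as $\sigma$), the comparison mean can be forced outside $(-R,R)$ when $R\in(\sigma,C\sigma]$, a regime the theorem still covers. The paper's fix is the averaging observation that $\Em[{}][{}]{|S|}<R$ already guarantees some $\mu_1\in(-R,R)$ with $\mathbb{P}_{\mu_0}(\mu_1\notin S)>1/2$, with no placement constraint. Third, your Chernoff truncation of the coupling introduces an additive failure probability $\eta=e^{-\Theta(np)}=\alpha^{\Theta(c_0/\eps)}$ in the width argument; because you must take $c_0$ small to control $e^{\eps k}\alpha$, this $\eta$ is not clearly below the constant $1/4$ you compare against for moderate $\alpha$ and $\eps$. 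The paper avoids the issue by computing $\E{e^{\eps H}}=(1-p+pe^{\eps})^n$ exactly, which yields a purely multiplicative factor $e^{6\eps np}$ (plus a $\delta$-term) with no coupling-failure remainder. All three gaps are repairable, and the repair in each case is precisely the integral/averaging form of the argument that the paper uses; your discrete packing argument for the $\min\{\log(R/\sigma),\log(1/\delta)\}$ term is a legitimate alternative to the paper's continuum version, modulo the same interval-versus-set caveat.
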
 
Note that the first lower bound says that we must pay $\Omega \left(\sigma/(\eps n) \cdot \log(1/\alpha) \right)$ in the length of the confidence interval when $R$ is very large. Our algorithms come quite close to this lower bound with an extra factor of $\mathrm{polylog}(n/\alpha)$. The second lower bound shows that the  sample complexity required by Theorem \ref{thm:intro1} is necessary to obtain a confidence interval that saves more than a factor of 2 over the trivial interval $(-R,R)$.  By setting $\sigma=\sigma_{\min}$, the sample complexity lower bound also matches that of Theorem~\ref{thm:intro3} in our parameter regime of interest, namely when $R\geq \Omega(\sigma_{\max})$.

\subsection{Techniques}
\label{sec:techniques}

\paragraph{Known Variance Algorithms:}
Our algorithms for the known variance case (Theorems~\ref{thm:intro1} and \ref{thm:intro2}) are based on simple 
Laplace-mechanism-based confidence interval discussed in Section~\ref{sec:intro-conservative}, except that we calculate a suitable bound $B$ based on the data in a differentially private manner, rather than having it be an input provided by
a data analyst.  Specifically, we give a differentially private algorithm $\Mrange$ that takes $n$ real numbers and
outputs an interval (which need not be centered at 0) such that for every $\mu\in (-R,R)$, when
$X_1,\ldots,X_n\sim\Normal(\mu,\sigma^2)$, we have:
\begin{enumerate}
\item With probability at least $1-\alpha$ over $X_1,\ldots,X_n$ and the coins of $\Mrange$, we have $\{X_1,\ldots,X_n\}\subseteq \Mrange(X_1,\ldots,X_n)$, and \label{prop:contain}
\item With probability 1, $|\Mrange(X_1,\ldots,X_n)| \leq O(\sigma\cdot \sqrt{\log(n/\alpha)})$.
\end{enumerate}
Thus, if we clamp all datapoints to lie in the interval $\Mrange(X_1,\ldots,X_n)$ (which will usually have no effect
for data that comes from our normal model, by Property~\ref{prop:contain}), we can calculate an approximate mean and thus construct a confidence interval using Laplace noise of scale $O(\sigma\cdot \sqrt{\log(n/\alpha)}/\eps n)$.  

Now, estimating the range of a dataset with differential privacy is impossible to do with any nontrivial accuracy in the worst case, so we must exploit the distributional assumption on our dataset to construct $\Mrange$.  Specifically, we exploit the following properties of normal data:
\begin{enumerate}
\item A vast majority of the probability mass of $\Normal(\mu,\sigma^2)$ is concentrated in an interval of width $O(\sigma)$ around the mean $\mu$. \label{prop:normalcenter}
\item With probability at least $1-\alpha$, all datapoints $X_1,\ldots,X_n$ are at distance at most $O(\sigma\cdot \sqrt{\log(n/\alpha)})$ from $\mu$. \label{prop:normalrange}
\end{enumerate}
Similar properties hold for many other natural parameterized families of distributions, changing the factor of $\sqrt{\log(n/\alpha)}$ according to the concentration properties of the family.

Given these properties, $\Mrange$ works as follows:  we partition the original range $(-R,R)$ (where $R$
might be infinite) into ``bins'' (intervals) of width $O(\sigma)$, and calculate an differentially private approximate histogram of how many points lie in each bin.  By Property~\ref{prop:normalcenter} and a Chernoff bound, with high probability, the vast majority of our normally distributed data points will be in the bin containing $\mu$ or one of the neighboring bins.  Existing algorithms for differentially private 
histograms~\citep{DMNS06,bun2016simultaneous} allow us to identify one of these heavy bins 
with probability $1-\alpha$, provided 
$n\geq O(\min\{\log(K/\alpha),\log(1/(\delta\alpha))\})/\eps$, where $K=O(R/\sigma)$ is the number of bins.  After identifying such a bin,
Property~\ref{prop:normalrange} tells us that 
we can simply expand the bin by $O(\sigma\cdot\sqrt{\log(n/\alpha)})$ on each side and include all of the datapoints with high probability.  This proof sketch gives a $(1-O(\alpha))$-level confidence interval, and redefining $\alpha$ yields
Theorem~\ref{thm:intro1}.  To obtain, Theorem~\ref{thm:intro2}, we set parameters more carefully so that the 
failure probability in estimating the range is much smaller than $\alpha$, say $\alpha/\poly(n)$, which increases the sample complexity of the histogram algorithms only slightly.

This general approach, of finding a differentially private estimate of the range and using it to compute a differentially private mean are inspired by the work of \cite{dworkLei}. They present an $(\eps, \delta)$ differentially private algorithm to estimate the \emph{scale} of the data. They use the estimate of scale to obtain a differentially private estimate of the median without making any assumptions on the range of the data.  Their algorithms require $\delta > 0$. In contrast, our range finding algorithms for Gaussian data work for $\delta =0$  without making any assumptions on the range of the data, but instead assume that the parameters need to be bounded). Our algorithms also handle the unknown variance case, as discussed below. Also, while the general idea of eliminating the dependence on range of the data is similar, the underlying techniques and privacy and utility guarantees are different. 

\paragraph{Unknown Variance Algorithms:}
For the case of an unknown variance (Theorem~\ref{thm:intro3}), we begin with the observation that our range-finding algorithm discussed above only needs a constant-factor approximation to the variance $\sigma^2$.  Thus, we will begin
by calculating a constant-factor approximation to $\sigma^2$ in a differentially private manner, and then estimate the range as above.  To do this, we consider the dataset of size $n/2$ given by $|X_1-X_2|, 
|X_3-X_4|, |X_5-X_6|, \ldots, |X_{n-1}-X_n|$.
Here each point is distributed as the absolute value of a $\Normal(0,2\sigma^2)$ random variable, which has the vast majority of its probability mass on points of magnitude $\Theta(\sigma)$.  
Thus, if we partition the interval
$(\sigma_{\min},\sigma_{\max})$ into bins of the form $(2^i,2^{i+1})$ and apply an approximate histogram algorithm, the heaviest bin will give us an estimate of $\sigma$ to within a constant factor.  Actually, to analyze the expected
length of our confidence interval, we will need that
our estimate of $\sigma$ is within a constant factor of the true value not only with high probability but also
in expectation; this requires a finer analysis of the histogram algorithm, where the probability of picking any bin 
decays linearly with the probability mass of that bin (so bins further away from $\mu$ have exponentially decaying
probability of being chosen).  
Note that this approach for approximating $\sigma$ also exploits the symmetry of a normal distribution, so that $|X_i-X_{i+1}|$ is likely to have magnitude $\Theta(\sigma)$, independent of $\mu$; it should generalize to many other common symmetric distribution families.  For non-symmetric families, one could instead use differentially private algorithms for releasing threshold functions (i.e. estimating quantiles) at the price of a small dependence on the ranges even when $\delta>0$.  (See \citet[Sec. 7.2]{vadhan2017complexity} and references therein.)

Now, once we have found the range as in the known-variance case, we can again use the Laplace mechanism to estimate the
empirical mean to within additive error $\pm O(\sigma\cdot \sqrt{\log(n/\alpha)}/\epsilon n)$.  And we can use our constant-factor
approximation of $\sigma$ to estimate the size of the non-private confidence interval to within a constant factor.
This suffices for Theorem~\ref{thm:intro3}. 
But to obtain the tighter bound, where we only pay an additive increase over the length of the non-private interval, we cannot just use a constant-factor approximation of the variance.  Instead, we
also use the Laplace mechanism to estimate the sample variance $s^2 = \frac{1}{n-1} \sum_{i=1}^n (X_i - \bar{X})^2.$
Our bound on the range (with clamping) ensures that $s^2$ has global sensitivity $O(\sigma^2\cdot \log(n/\alpha))/(n-1)$, and thus can be
estimated quite accurately.
\paragraph{Lower Bounds:}
For our lower bounds (Thm~\ref{thm:intro-lower}), we observe that the expected length of a confidence set
$M(X_1,\ldots,X_n)$ can be written as 
$$\int_{\mu'} \mathbb P_{\mu, M} \left(\mu'\in M(X_1,\ldots,X_n) \right)$$
where $\mu'$ ranges over $(-R,R)$ and the $\mathbb P_{\mu, M}$ notation indicates that probability is taken over $(X_1,\ldots,X_n)$ generated according to $\Normal(\mu,\sigma^2)$ for a particular value of $\mu$, and over the mechanism $M$. 
Next, we use the differential privacy guarantee to deduce that 
$$\mathbb P_{\mu,M} \left(\mu'\in M(X_1,\ldots,X_n)\right) \geq  e^{-6\eps n \cdot d}\cdot \mathbb P_{\mu',M} \left(\mu'\in M(X_1,\ldots,X_n)\right)-4n\delta \cdot d,$$
where $d\leq \min\{1,|\mu - \mu'|/\sigma\}$ is the total variation distance between $\Normal(\mu,\sigma^2)$ and $\Normal(\mu',\sigma^2)$.
(This can be seen as a distributional analogue of the ``group privacy'' property used in ``packing lower bounds''
for calculating empirical statistics under differential privacy~\citep{hardt2010geometry,beimel2010bounds,wassermanzhou, hall2011random},
and is also a generalization of the ``secrecy of the sample'' property of differential privacy~\citep{kasiviswanathan2011can,smith2009differential,BNSV}. 
Finally, we know that $\mathbb P_{\mu',M}\left(\mu'\in M(X_1,\ldots,X_n)\right) \geq 1-\alpha$ by the coverage property of $M$, yielding
our lower bound (after some calculations).

\subsection{Directions for Future Work}

The most immediate direction for future work is to close the (small) gaps between our upper and lower bounds.  Most
interesting is whether the price of privacy in the length of confidence intervals needs to be even additive, as
in Theorem~\ref{thm:intro1}.  Our lower bound only implies that the length of a differentially private confidence interval
must be at least the {\em maximum} of a privacy term (namely, the lower bound in Theorem~\ref{thm:intro-lower}) and the non-private length (cf. Theorem~\ref{thm:lowerboundwithoutprivacy}), rather than the sum.  In particular,
when $n$ is sufficiently large, the non-private length is larger than the privacy term, and Theorem~\ref{thm:intro-lower} leaves open the possibility that a differentially private confidence interval can have {\em exactly} the same length as
a non-private confidence interval.  This seems unlikely, and it would be interesting to prove that there must be some price to privacy even if $n$ is very large. 

We came to the problem of constructing confidence intervals for a normal mean as part of an effort to bring
differential privacy to practice in the sharing of social science research data through the design of the
software tool PSI~\citep{psipaper}, as confidence intervals are a workhorse of data analysis in the social sciences.
However,
our algorithms are not optimized for practical performance, but rather for asymptotic analysis of the confidence interval length. Initial experiments indicate that alternative approaches (not just tuning of parameters) may be needed to 
reasonably sized confidence intervals (e.g. length at most twice that of the non-private length)
handle modest sample sizes (e.g. in the 1000's).  
Thus designing practical differentially private algorithms for confidence intervals remains an important open problem, whose solution could have wide applicability.

As mentioned earlier, we expect that much of the modelling and techniques we develop should also be applicable
more widely.  In particular, it would be natural to study the estimation of other population statistics, and 
families of distributions, such as other continuous random variables, Bernoulli random
variables, and multivariate families.  In particular, a natural generalization of the problem we consider is to 
construct confidence intervals for the parameters of a (possibly degenerate) multivariate Gaussian, which is closely related to the problem of ordinary least-squares regression (cf. \cite{sheffet2017differentially}).  

Finally, while we have advocated for conservative inference at finite sample size, to avoid spurious conclusions
coming from the introduction of privacy, many practical, non-private inference methods rely on asymptotics also for measuring statistical significance.  In particular, the standard confidence interval for a normal mean with unknown variance and its corresponding hypothesis test (see Section~\ref{sec:intro-normal}) is often applied on non-normal data,
and heuristically justified using the Central Limit Theorem.  (This is heuristic since the rate of convergence depends
on the data distribution, which is unknown.)  Is there a criterion to indicate what asymptotics are ``safe''?  In particular, can we formalize the idea of only using the ``same'' asymptotics that are used without privacy? \cite{kiferRogers} analyze their hypothesis tests using asymptotics that constrain the setting of the privacy parameter 
in terms of the sample size $n$ (e.g. $\eps\geq \Omega(1/\sqrt{n})$), but it's not clear that this relationship is safe to assume in general.

\subsection{Organization}
The rest of the paper is organized in the following manner. In Section \ref{sec:prelim}, we introduce some preliminary results on DP and techniques such as Laplace mechanism and histogram learners that are needed for our algorithms. In Section \ref{sec:range}, we present differentially private algorithms to estimate the range of the data; these algorithms serve as building blocks for estimating differentially private confidence intervals. In Sections \ref{sec:ciknownvar} and \ref{sec:ciunkownvar}, we present $(\eps, \delta)$ differentially private algorithms to estimate an $(1-\alpha)$-level confidence interval of $\mu$ with known and unknown variance respectively. Section \ref{sec:lowerbounds} is devoted to lower bounds.

\section{Preliminaries}
\label{sec:prelim}
\subsection{Notation}
We use $\log$ to denote natural log to the base $e$, unless otherwise noted. Random variables are denoted by capital roman letters and their realization by small roman letters. For example $X$ is a random variable and $x$ is its realization. We write  $(X_1, \ldots, X_n) \overset{iid}{\sim} \mathbb D_{\theta, \gamma}$ or equivalently $\underline{X} \sim \mathbb D_{\theta}$ or to denote a sample of $n$ independent and identically random variables from the distribution $\mathbb D_{\theta, \gamma}$, where $\theta \in \Theta \subseteq \mathbb R$ and $\gamma \in \Gamma \subseteq \mathbb R^k$ is a vector of nuisance parameters. We sometimes abuse notation and write $\mathbb D_{\theta}$ instead of $\mathbb D_{\theta,\gamma}$. Estimators of parameters $\theta$ are denoted by $\hat \theta = \hat\theta(X_1, \ldots, X_n)$. A differentially private mechanism is denoted by $M(X_1,\ldots, X_n)$ or $M(\underline{X})$.

There are two sources of randomness in our algorithms: The first source of randomness is from the coin flips made by the estimator or algorithm and the dataset is considered fixed.  We use the notation of conditioning to denote the probabilities and expectations with respect to the privacy mechanism, when the data is considered to be fixed. Specifically, conditional probability is denoted by $\pr{\cdot | X = x}$ and conditional expectation is denoted by $\E{\cdot|X=x}$. The second source of randomness comes from assuming that the dataset is a sample from an underlying distribution $\mathbb D_{\theta,\gamma}$. The probability and expectation with respect to this distribution is denoted by $\mathbb E_{\underline{X} \sim \mathbb D_{\theta}}[\cdot]$, and $\mathbb  P_{\underline{X} \sim \mathbb D_{\theta}}(\cdot)$. While the privacy guarantees are with respect to a fixed dataset, the accuracy guarantees are with respect to both the randomness in the data and the mechanism. In such cases, we state both sources of randomness by writing 
$$\prm[\underline{X} \sim \mathbb D_{\theta}][M]{M(\underline{X}) \in S }, $$
where $S$ is any measurable event and the subscripts denote the sources of randomness. 
 
\subsection{Differential Privacy}
We will present some key properties of differential privacy that we make use of in this paper.
One of the attractive properties of Differential Privacy is its ability to compose. To prove the privacy properties of an algorithm, we will rely on the fact that an differentially private algorithm that runs on a dataset and an output of a previous differentially private computation is also differentially private.

\begin{lemma}[Composition of DP, \cite{DMNS06}]
	\label{thm:compose}
	Let $M_1 : \Omega^n \rightarrow \Omega_{M_1}$ be an $(\eps_1, \delta_1)$ differentially private algorithm. Let $M_2: \Omega^n\times \Omega_{M_1} \rightarrow \Omega_{M_2}$ be such that $M_2(\cdot, \omega)$ is an $(\eps_2, \delta_2)$ differentially private algorithm for every fixed $\omega \in \Omega_{M_1}$. Then the algorithm $M(\underline{x}) = M_2(\underline{x}, M_1(\underline{x}))$ is $(\eps_1 + \eps_2, \delta_1 + \delta_2)$ differentially private. 
\end{lemma}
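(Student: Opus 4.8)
The plan is to verify the defining inequality of $(\eps_1+\eps_2,\,\delta_1+\delta_2)$-differential privacy directly. I would fix an arbitrary pair of neighboring datasets $\ux,\ux'$ and an arbitrary measurable output set $S\subseteq\Omega_{M_2}$, and aim to show $\pr{M(\ux)\in S}\le e^{\eps_1+\eps_2}\pr{M(\ux')\in S}+\delta_1+\delta_2$. Since the internal coins of $M_2$ are independent of those of $M_1$, the first step is to condition on the intermediate output $\omega=M_1(\ux)$ and write, by the law of total probability,
$$\pr{M(\ux)\in S}=\int_{\Omega_{M_1}}\pr{M_2(\ux,\omega)\in S}\,dP(\omega),$$
where $P$ is the output distribution of $M_1(\ux)$ and $P'$ that of $M_1(\ux')$. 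Writing $a(\omega)=\pr{M_2(\ux,\omega)\in S}$ and $b(\omega)=\pr{M_2(\ux',\omega)\in S}$, both functions take values in $[0,1]$.

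The two privacy guarantees then get applied one at a time. First, because $M_2(\cdot,\omega)$ is $(\eps_2,\delta_2)$-DP for every fixed $\omega$, I get the pointwise bound $a(\omega)\le e^{\eps_2}b(\omega)+\delta_2$. Second, because $M_1$ is $(\eps_1,\delta_1)$-DP, the laws $P$ and $P'$ are $(\eps_1,\delta_1)$-indistinguishable on indicator sets; I would record the elementary upgrade (a layer-cake / Fubini argument, integrating the indistinguishability inequality over the super-level sets $\{h>t\}$ for $t\in[0,1]$) that this extends to any measurable $h:\Omega_{M_1}\to[0,1]$, namely $\int h\,dP\le e^{\eps_1}\int h\,dP'+\delta_1$.

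The one real subtlety is combining these so as to land on exactly $\delta_1+\delta_2$ rather than the weaker $e^{\eps_2}\delta_1+\delta_2$ that a naive two-step substitution produces. The fix is to introduce the capped function $c(\omega)=\min\{a(\omega),\,e^{\eps_2}b(\omega)\}$, which satisfies $a\le c+\delta_2$ and $0\le c\le 1$. Integrating $a\le c+\delta_2$ against $P$ gives $\pr{M(\ux)\in S}\le\int c\,dP+\delta_2$; applying the bounded-function indistinguishability lemma with $h=c$ gives $\int c\,dP\le e^{\eps_1}\int c\,dP'+\delta_1$; and finally $c\le e^{\eps_2}b$ gives $\int c\,dP'\le e^{\eps_2}\int b\,dP'=e^{\eps_2}\pr{M(\ux')\in S}$. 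Chaining these three inequalities yields the claimed bound. I expect this capping step — together with the routine but necessary checks that $\omega\mapsto a(\omega)$ is measurable and that the layer-cake lemma goes through — to be the only place requiring care; in the pure case $\delta_1=\delta_2=0$ the whole argument collapses to the one-line estimate $\int a\,dP\le e^{\eps_2}\int b\,dP\le e^{\eps_1+\eps_2}\int b\,dP'$.
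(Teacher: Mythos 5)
The paper does not actually prove this lemma --- it is stated as a known result and attributed to \cite{DMNS06} --- so there is no in-paper argument to compare against. Your blind proof is correct and is essentially the standard argument for basic composition. The decomposition $\pr{M(\ux)\in S}=\int a(\omega)\,dP(\omega)$, the layer-cake upgrade of $(\eps_1,\delta_1)$-indistinguishability from events to $[0,1]$-valued test functions, and the chaining are all sound. You are also right that the capping step is the genuine crux: substituting the two privacy guarantees in either naive order yields $e^{\eps_2}\delta_1+\delta_2$ or $\delta_1+e^{\eps_1}\delta_2$, and introducing $c=\min\{a,e^{\eps_2}b\}$ (which stays in $[0,1]$ because $c\le a\le 1$, and satisfies $a\le c+\delta_2$ by cases) is exactly what lets the bounded-function lemma apply before the factor $e^{\eps_2}$ is peeled off, landing on $\delta_1+\delta_2$. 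The only hypotheses you are implicitly using --- that $M_2$'s coins are independent of $M_1$'s output conditioned on $\omega$, and that $\omega\mapsto\pr{M_2(\ux,\omega)\in S}$ is measurable --- are the standard ones for this statement, and you flag both.
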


In many algorithms, we will rely on a basic mechanism for Differential Privacy that works by adding Laplace noise. Let $f: \Omega^n \rightarrow \mathbb{R}^k$ be any function of the dataset that we wish to release an approximation of. The \emph{global sensitivity} of $f$ is defined as
$$GS_f = \underset{\underline{x}, \underline{x}' neighbors}{\max} |f(\underline{x}) - f(\underline{x}')|_1$$
where $|y|_1 = \sum_{i} |y_i|$ is the $l_1$ norm of the vector $y$.
\begin{lemma}[The Laplace Mechanism, \cite{DMNS06}]
	\label{thm:LapMech}
	Let $f: \Omega^n \rightarrow \mathbb{R}^k$ be a function with global sensitivity at most $\Delta$. The mechanism 
    $$M(\underline{x}) = f(\underline{x}) + Z$$ 
is $(\eps, 0)$ differentially private, where $\underline{x}$ is the input dataset, $Z$ is a $k$ dimensional random vector where each component of $Z$ is an independent Laplace distribution (defined in \ref{prop:lap}) with mean $0$ and scale parameter $b = \Delta/ \epsilon$ 
\end{lemma}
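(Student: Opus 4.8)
The plan is to prove the privacy guarantee at the level of probability densities and then integrate, exploiting the fact that the Laplace noise has a positive density everywhere on $\mathbb{R}^k$ (so there is no need for the additive $\delta$ term). Since $M(\ux) = f(\ux) + Z$ is just a deterministic shift of the noise $Z$, its density at an output point $y \in \mathbb{R}^k$ is the density of $Z$ evaluated at $y - f(\ux)$. Writing $p(z) = \prod_{i=1}^{k} \frac{1}{2b}\,e^{-|z_i|/b}$ for the density of $Z$ (the product of $k$ independent one-dimensional Laplace densities of scale $b = \Delta/\eps$), the densities of $M(\ux)$ and $M(\ux')$ at $y$ are $p(y - f(\ux))$ and $p(y - f(\ux'))$ respectively.

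First I would form the pointwise ratio of these two densities at an arbitrary fixed $y$:
\[
\frac{p(y - f(\ux))}{p(y - f(\ux'))} = \prod_{i=1}^{k} \exp\left(\frac{|y_i - f_i(\ux')| - |y_i - f_i(\ux)|}{b}\right).
\]
By the triangle inequality, each exponent obeys $|y_i - f_i(\ux')| - |y_i - f_i(\ux)| \leq |f_i(\ux) - f_i(\ux')|$, so the whole product is bounded by
\[
\exp\left(\frac{1}{b}\sum_{i=1}^{k} |f_i(\ux) - f_i(\ux')|\right) = \exp\left(\frac{|f(\ux) - f(\ux')|_1}{b}\right) \leq \exp\left(\frac{\Delta}{b}\right) = e^{\eps},
\]
where the penultimate inequality uses $\GS_f \leq \Delta$ together with the assumption that $\ux, \ux'$ are neighbors, and the final equality uses $b = \Delta/\eps$. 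Crucially, this bound holds uniformly over all $y$.

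Finally I would lift the pointwise bound to the statement about arbitrary measurable sets by integrating: for any measurable $S \subseteq \mathbb{R}^k$,
\[
\pr{M(\ux) \in S} = \int_S p(y - f(\ux))\, dy \leq e^{\eps}\int_S p(y - f(\ux'))\, dy = e^{\eps}\cdot \pr{M(\ux') \in S},
\]
which is exactly the $(\eps,0)$-differential privacy condition. The only step requiring genuine care is the triangle-inequality estimate: one must correctly aggregate the per-coordinate bounds into the $\ell_1$ norm $|f(\ux) - f(\ux')|_1$ and recognize that this matches the $\ell_1$-based definition of global sensitivity. This alignment is not a coincidence — it is precisely the product (hence $\ell_1$-additive-in-the-exponent) structure of the Laplace density that makes $\ell_1$ sensitivity the right quantity to calibrate against. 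Everything else is routine.
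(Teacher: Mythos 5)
Your proof is correct and is exactly the standard density-ratio argument for the Laplace mechanism; the paper itself states this lemma without proof, deferring to \cite{DMNS06}, and your argument reproduces that canonical proof, correctly aggregating the per-coordinate triangle-inequality bounds into the $\ell_1$ norm that matches the paper's definition of global sensitivity. Nothing is missing.
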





In many estimators we design, we need a differentially private mechanism for finding a heaviest bin from a (possibly countably infinite) collection of bins, i.e. the bin with the maximum probability mass under the data generating distribution $\mathbb D$. Formally, let $B_1, \ldots, B_K$ be any collection of $K$ disjoint measurable subsets of $\Omega$, which we will refer to as \emph{bins}, where $K$ can be $\infty$.
The \emph{histogram} of a distribution $\mathbb D$ corresponding to the bins $B_1, \ldots, B_K$ is given by the vector $(p_1, \ldots, p_K)$ where each $p_k = \underset{X \sim \mathbb D}{\mathbb P}\left( X \in B_k\right)$. In Lemma \ref{thm:binfinding}, we assert the existence of an differentially private mechanism that on input $n$ iid samples from $\mathbb D$ outputs a noisy histogram from which the heaviest bin can be extracted.

\begin{lemma}[Histogram Learner, following \cite{DMNS06}, \cite{bun2016simultaneous}, \cite{vadhan2016complexity}]
\label{thm:binfinding}
For every $K \in \mathbb N \cup \{\infty\}$, domain $\Omega$,  for every collection of disjoint bins $B_1, \ldots, B_K$ defined on $\Omega$, $n \in \mathbb{N}$, $\eps, \delta \in (0, 1/n)$, $\beta > 0$ and $\alpha \in (0,1)$ there exists an $(\eps,\delta)$-differentially private algorithm $M: \Omega^n \rightarrow \mathbb R^K$ such that for every distribution $\mathbb D$ on $\Omega$, if 
\begin{enumerate}
	\item $X_1, \ldots, X_n \overset{iid}{\sim} \mathbb D$,  $p_k = \pr{X_i \in B_k}$
	\item $(\tilde p_1, \ldots, \tilde  p_K) \leftarrow M(X_1, \ldots, X_n)$, and 
    \item \begin{align}
\label{eq:samplecomplexityDL}
	n \geq \max \left\{\min \left\{ \frac{8}{\eps \beta}\log\left(\frac{2K}{\alpha} \right), \frac{8}{\eps \beta}\log\left(\frac{4}{\alpha \delta}\right) \right\}, \frac{1}{2\beta^2}\log \left(\frac{4}{\alpha}\right) \right\}
\end{align} 
\end{enumerate}
then,

\begin{align}
& \underset{\substack{ \underline{X} \sim \mathbb D \\  M}}{\mathbb P} \left(| \tilde p_k - p_k| \leq \beta \right)  \geq 1 - \alpha \text{ and,}\\
& \underset{\substack{\underline{X} \sim \mathbb D \\  M}}{\mathbb P} \left( \arg\max_{k} \tilde p_k = j \right) \leq 
\begin{cases} 
np_j + 2e^{-(\eps n/8) \cdot (\max_k p_k)} & \text{if } K < 2/\delta \\
np_j       & \text{if } K \geq 2/\delta
\end{cases}
\end{align} 
where the probability is taken over the randomness of $M$ and the data $X_1, \ldots, X_n$.
\end{lemma}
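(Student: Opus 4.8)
The plan is to realize $M$ as the standard noisy-histogram learner, branching on the public parameters into two regimes that correspond to the two arguments of the $\min$ in \eqref{eq:samplecomplexityDL}. In both regimes the starting point is the empirical histogram $\hat p_k = \frac1n \sum_{i=1}^n \mathbf 1[X_i \in B_k]$, whose integer counts $c_k = n\hat p_k$ have the property that a single change to the dataset moves one point between two bins, so the count vector has $\ell_1$-sensitivity $2$. When $K < 2/\delta$ (in particular, $K$ finite) I would use the Laplace mechanism of Lemma~\ref{thm:LapMech}: release $\tilde p_k = \hat p_k + Z_k$ with $Z_k \sim \Lap(2/\eps n)$ independently for every bin; privacy is then immediate from Lemma~\ref{thm:LapMech}, the branch being selected using only the public parameters $K,\delta$. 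When $K \geq 2/\delta$ (covering infinite $K$) adding noise to every bin is impossible, so I would instead use the stability-based histogram of \cite{bun2016simultaneous}: compute noisy counts only for the (at most $n$) nonempty bins, and report a bin only if its noisy count exceeds a threshold $t = \Theta\!\left((1/\eps)\log(1/\delta)\right)$, reporting $0$ otherwise. The threshold is calibrated so that any bin that is empty in both neighbors, or that differs between them, crosses it with probability at most $\delta$; this yields $(\eps,\delta)$-privacy and is the source of the $\log(1/\delta)$ (rather than $\log K$) dependence.

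For the accuracy conclusion I would split the error by the triangle inequality, $|\tilde p_k - p_k| \le |\tilde p_k - \hat p_k| + |\hat p_k - p_k|$, and control each piece by $\beta/2$. The sampling term $|\hat p_k - p_k|$ is the deviation of an average of $n$ i.i.d.\ Bernoulli$(p_k)$ variables, so a Chernoff/Hoeffding bound gives failure probability $\le \alpha/2$ once $n \ge \frac{1}{2\beta^2}\log(4/\alpha)$, the last term of \eqref{eq:samplecomplexityDL}. The noise term $|\tilde p_k - \hat p_k|$ is a Laplace (or thresholded-Laplace) tail: in the $K < 2/\delta$ regime a union bound over the $K$ bins gives failure $\le \alpha/2$ when $n \ge \frac{8}{\eps\beta}\log(2K/\alpha)$, and in the $K \ge 2/\delta$ regime the threshold argument replaces $\log K$ by $\log(1/\alpha\delta)$, producing the $\frac{8}{\eps\beta}\log(4/\alpha\delta)$ term; together these realize the $\min$.

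For the argmax conclusion the key observation is that $\{\arg\max_k \tilde p_k = j\}$ forces bin $j$ to contribute to the reported maximum. In the stability-based regime ($K \ge 2/\delta$) empty bins are suppressed to $0$ and are never selected, so $\arg\max = j$ requires $c_j \ge 1$; since $c_j \sim \mathrm{Binomial}(n, p_j)$, Markov's inequality gives $\mathbb P(c_j \ge 1) \le \mathbb E[c_j] = n p_j$, which is exactly the claimed bound. In the Laplace regime ($K < 2/\delta$) I would split on whether bin $j$ is empty: the nonempty case again contributes at most $np_j$, while the empty case requires the pure-noise value $\tilde p_j = Z_j$ to exceed $\tilde p_{j^*} = \hat p_{j^*} + Z_{j^*}$ for the heaviest bin $j^*$. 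Conditioning on the Chernoff event $\hat p_{j^*} \ge (\max_k p_k)/2$ and bounding the tail of the difference $Z_j - Z_{j^*}$ of two independent $\Lap(2/\eps n)$ variables yields the additive $2e^{-(\eps n/8)\max_k p_k}$ term.

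The main obstacle I anticipate is the argmax analysis rather than the accuracy bound: one must verify that the stability-based histogram genuinely never selects an empty bin, so that ties among suppressed zeros are never resolved in favor of a light bin (this is what makes the clean $np_j$ bound with no additive term possible for $K \ge 2/\delta$, and requires a tie-breaking convention together with the assumption that some point lands in a bin), and in the Laplace regime one must correctly combine the concentration of $\hat p_{j^*}$ with the two-sided Laplace-difference tail to land the constant $1/8$ and the factor $2$. A secondary, more bookkeeping obstacle is confirming that the single branching algorithm is $(\eps,\delta)$-private and meets all three guarantees simultaneously under the one hypothesis \eqref{eq:samplecomplexityDL}, so that the $\min$ and the $K$ versus $2/\delta$ case split line up.
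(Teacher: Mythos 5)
Your proposal matches the paper's proof essentially step for step: the same branching between the Laplace mechanism for $K < 2/\delta$ and the stability-based histogram for $K \geq 2/\delta$, the same triangle-inequality split of $|\tilde p_k - p_k|$ into a privacy term and a sampling term realizing the $\min$ and the $\frac{1}{2\beta^2}\log(4/\alpha)$ term, and the same argmax analysis (empty bins are never selected in the stability regime, giving $np_j$; in the Laplace regime one combines $\{\hat p_j > 0\}$, the event that the empirical mass of the heaviest bin falls below half its mean, and Laplace tails at $(\max_k p_k)/4$ to get the extra $2e^{-(\eps n/8)\max_k p_k}$). The only cosmetic difference is that the paper invokes the Dvoretzky--Kiefer--Wolfowitz inequality to obtain the sampling bound uniformly over all (possibly infinitely many) bins at once, where you cite a per-bin Chernoff/Hoeffding bound.
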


Lemma \ref{thm:binfinding} asserts the existence of an $(\eps, \delta)$-differentially private \emph{histogram learner} that takes as input an iid sample and a collection of $K$ disjoint bins, and outputs estimates $\tilde p_k$ of $p_k$, the probability of falling in bin $B_k$ for all $k$. It has the property that with high probability, the maximum generalization error is at most $\beta$. It is important to note that the error is measured with respect to the population $(p_k)$ and not the sample. In particular, the error term includes the noise due to sampling and differential privacy. To bound the generalization error, we follow a standard technique of bounding the generalization error without privacy (i.e. the difference between the sample and the population) and the error introduced for privacy (see for example the equivalence between differentially private query release and differentially private threshold learning in \cite{BNSV}).

If we take $n$ equal to the RHS of inequality \ref{eq:samplecomplexityDL} in Lemma \ref{thm:binfinding}, we obtain a bound on $\beta$ as a function of $n$, $K$, $\alpha$, and $\epsilon$, namely
\begin{align}
	\beta = \max \left\{\min \left\{ \frac{8}{\eps n}\log\left(\frac{2K}{\alpha} \right), \frac{8}{\eps n}\log\left(\frac{4}{\alpha \delta}\right) \right\}, \sqrt{\frac{1}{2n}\log \left(\frac{4}{\alpha}\right)} \right\}
\end{align}
The last term, $\Theta\left(\sqrt{\log(1/\alpha)/n}\right)$ is the sampling error, which is incurred even without privacy. For privacy, we incur an error that is the {\em minimum} of two terms: $\O{\log(K)/(\eps n)}$ and $\O{\log(1/\delta)/(\eps n)}$. Note that these two terms vanish linearly in $n$, faster than the sampling error, which vanishes as $\O{1/\sqrt{n}}$. Moreover, the dependence on the number of bins $K$ is only logarithmic or in case of $\delta > 0$, even non-existent. 
When $\delta > 0$, the choice of $K = \infty$ allows us to construct $(\eps, \delta)$-DP algorithms that have no dependence on the range of the parameters.

We will use the histogram learner to obtain the largest noisy bin from a possibly infinite collection of bins. Hence, apart from a bound on the maximum generalization error, we also need a bound on the probability of picking the wrong bin as the largest bin. Lemma \ref{thm:binfinding} asserts that the probability of choosing any bin $j$ as the largest bin is roughly upper bounded by $np_j$, the expected number of points falling in bin $j$. As before, this probability is over the sampling and the noise added by the differential privacy mechanism. Note that this bound is useful only when $p_j$ is small, in particular $p_j \ll 1/n$. Hence, the theorem bounds the probability of incorrectly choosing a bin that has very few expected points as the largest bin. 
\begin{proof}[Proof of Lemma \ref{thm:binfinding}]
	Let $C_k = \sum_{i=1}^n I(X_i \in B_k)$ be the number of points that fall in bin $k$ and $\hat p_k = C_k/n$ be the corresponding proportion of points. The distribution learner operates as follows: When $K < 2/\delta$, it uses an $(\eps,0)$-DP algorithm, and when $K \geq 2/\delta$, it uses an $(\eps, \delta)$-DP algorithm to output a noisy histogram.
	
	The key idea behind the proof of the existence of a histogram learner is the following. There exist basic $(\eps, 0)$ and $(\eps, \delta)$ differentially private mechanisms $M$ with the property that on input $\underline{X} = (X_1, \ldots, X_n)$ and bins $B_1, \ldots, B_K,$ they output $(\tilde p_1, \ldots, \tilde p_K) = M(\underline{X})$, such that if
	$$ n \geq \min \left\{ \frac{8}{\eps \beta}\log\left(\frac{2K}{\alpha} \right), \frac{8}{\eps \beta}\log\left(\frac{4}{\alpha \delta}\right) \right\}$$
	then for every $\underline{x}$, 
	$$\prm[M][]{\max_k |\tilde p_k - \hat p_k | \geq \beta \mline \underline{X} =\underline{x}} \leq \alpha/2$$
	That is, with high probability, there is a small difference between the differentially private output $\tilde{p}_k$ and the empirical estimates $\hat p_k$. 
	Moreover, the Dvoretzky-Kiefer-Wolfowitz inequality \citep{massart1990tight} tells us that with high probability the empirical estimates are close to the population parameters:
	$$\prm[\underline{X} \sim \mathbb D][]{\max_k |\hat p_k - p_k| > \beta} \leq 2\exp(-2n\beta^2).$$
	So, if $n \geq (1/2\beta^2) \cdot \log\left(4/\alpha\right)$, then 
	$\pr{\max_k |\hat p_k - p_k| > \beta} \leq \alpha/2$.
	Thus, by a union bound, if
	$$n \geq \max \left\{\min \left\{ \frac{8}{\eps \beta}\log\left(\frac{2K}{\alpha} \right), \frac{8}{\eps \beta}\log\left(\frac{4}{\alpha \delta}\right) \right\}, \frac{1}{2\beta^2}\log \left(\frac{4}{\alpha}\right) \right\}$$ 
	then,
	\begin{align*}
	&\prm[\underline{X}\sim \mathbb D][M]{\max_k |\tilde p_k - p_k| \geq \beta} \\ 
	&\leq \underset{\underline{X} \sim \mathbb D}{\mathbb E} \left[\prm[M][]{\max_k |\tilde p_k - \hat p_k| \geq \beta \mline \underline{X} = \underline{x}} \right] + \prm[\underline{X} \sim \mathbb D][]{\max_k|\hat p_k-p_k| \geq \beta} \\
	&\leq \alpha/2 + \alpha/2.
	\end{align*}
	
	Now we review the two differentially private algorithms we use and also prove the additional claim regarding the probability of any bin being selected as the maximum.
	
	 \paragraph{$(\eps, 0)$ case:} We will first start with  the $(\eps, 0)$ differentially private algorithm which we use when $K < 2/\delta$. Consider the Laplace mechanism, given in Lemma \ref{thm:LapMech} applied to the \emph{empirical histogram}. The \emph{empirical histogram} of a dataset $\underline{x} = (x_1, \ldots, x_n)$ on bins $B_1, \ldots, B_K$ is given by $f_{B_1, \ldots, B_K}(\underline{x}) = (\hat p_1, \ldots, \hat p_K)$ 
Note that the global sensitivity of $f_{B_1, \ldots, B_K}$ is $2/n$. Hence we can release the empirical histogram by adding Laplace noise with scale $b = 2/(\eps n)$ to each $\hat p_k$. Formally, let $\tilde p_k = \hat p_k  +Z_k$ where $Z_k \sim Lap(0, 2/(\eps n))$. If $n \geq 2\log (2K /\alpha)/(\eps \beta)$, then,
	  \begin{align*}
	  \prm[M][]{\max_k |\tilde p_k - \hat p_k| \leq \beta \mline \underline{X} = \underline{x}} 
	  &= \pr{\max_k |Z_k| \leq \beta} \\
      &= \left(1-\exp\left(-\eps n \beta/2\right)\right)^K\\
	  &\geq 1 - K\exp\left(-\eps n \beta /2\right)\geq 1 - \alpha/2,
	  \end{align*}
      where the second line follows from the tails of a Laplace distribution, see Proposition \ref{prop:lap}.
	 Now, let us prove that $$\prm[\underline{X} \sim \mathbb D][M]{\arg\max_{k} \tilde p_k = j} \leq np_j + 2e^{-(\eps n/8) \cdot (\max_k p_k)}.$$
	 Let $p_r = \max_{k} p_k$. Consider the event $\{\arg\max_{k} \tilde p_k = j\}$. Note that $\tilde p_k$ is continuous. Hence the probability of ties is $0$ and the argmax is unique and well defined. This implies that $\{\tilde p_j \geq \tilde p_r \}$. In turn, this 
    implies that either, $\hat p_j > 0 $, or $Z_j \geq p_r/4$ or $\hat p_r < p_r/2$ or $Z_r \leq -p_r/4$. 
Hence, by a union bound,
	 \begin{align*}
	 \pr{\arg\max_k \tilde p = j} &\leq \pr{\hat p_j > 0} + \pr{Z_j \geq p_r/4} + \pr{\hat p_r < p_r/2} + \pr{Z_r \leq -p_r/4} \\
	 &\leq 1 - (1-p_j)^n + \frac{1}{2}\exp\left(-\frac{\eps n p_r}{8}\right) +  \exp\left(-\frac{np_r}{8}\right) + \frac{1}{2}\exp\left(-\frac{\eps n p_r}{8}\right)\\
	 &\leq np_j + 2\exp\left(-\frac{\eps n}{8} \cdot \max_k p_k \right),
	 \end{align*}
	 where we have used the Chernoff bound (Proposition \ref{prop:Chernoff}), and a tail bound for a Laplace random variable from (Proposition \ref{prop:lap}).
	 
	 \paragraph{$(\eps, \delta)$ Case:} For the case that $K > 2/\delta$, we will use an $(\eps, \delta)$-differentially private algorithm, called \emph{stability-based} histogram \citep{bun2016simultaneous} to estimate $\hat p_k$, which removes the dependence on $K$ by allowing for $K$ to be infinity. Specifically, the algorithm on input $x_1, \ldots, x_n$ and $B_1, \ldots, B_K$, where $K$ is possibly $\infty$ runs as follows:
	 
	 \begin{enumerate}
	 	\item Let $\hat p_k = \sum_{i=1}^n{I(x_i \in B_k)}/n$
	 	\item If $\hat p_k = 0$, set $\tilde p_k = 0$
	 	\item If $\hat p_k > 0$, 
	 	\begin{enumerate}
	 		\item Let $\tilde p_k = \hat p_k + Z_k$, where $Z_k \sim Lap(0,2/(\eps n))$.
	 		\item Let $t = 2\log(2/\delta)/(\eps n) + (1/n)$ 
	 		\item If $\tilde p_k < t$, set $\tilde p_k = 0$
	 	\end{enumerate}
	 	\item Output $\tilde p_k, k = 1, \ldots, K$
	 \end{enumerate}
A proof of $(\eps, \delta)$ differential privacy of this algorithm can be found in Theorem 3.5 of \cite{vadhan2016complexity}. For utility, we will show that if
$$n \geq \frac{8}{\eps \beta}\log\left(\frac{4}{\delta \alpha}\right) $$
then $\pr{|\tilde p_k - \hat p_k| > \beta \mline \underline{X} = \underline{x}} \leq \alpha/2$.
Note that for any $k$ such that $\hat p_k > 0$, we have,
\begin{align*}
\pr{|\tilde p_k - \hat p_k| > \beta \mline \underline{X} = \underline{x}} 
= & \pr{|\tilde p_k - \hat p_k| > \beta, \hat p_k + Z_k > t \mline \underline{X} = \underline{x}} + \\
{ } & \pr{|\tilde p_k - \hat p_k| > \beta, \hat p_k + Z_k < t \mline \underline{X} = \underline{x}} \\
= & \pr{|Z_k| > \beta, \hat p_k + Z_k > t \mline \underline{X} = \underline{x}} + \pr{\hat p_k  > \beta, \hat p_k + Z_k < t \mline \underline{X} = \underline{x}} \\
\leq & \pr{|Z_k| > \beta} + \pr{\hat p_k > \beta, \hat p_k + Z_k < t \mline \underline{X} = \underline{x}} \\
\leq & \pr{|Z_k| > \beta} + \pr{Z_k < t- \beta} \\
\leq  &\exp\left(-\frac{\eps n \beta}{2}\right) + \exp\left(\frac{-\eps n (\beta -t)}{2}\right) \\
\leq &  \exp\left(-\frac{\eps n \beta}{2}\right) + \exp\left(\frac{-\eps n \beta}{4}\right)  \leq 2\exp\left(-\frac{\eps n \beta}{4}\right) \leq \alpha\delta /2
\end{align*}
where the last line holds if $\beta > 2t$ and if 
$$n \geq \frac{4}{\eps \beta} \log \left(\frac{4}{\alpha \delta}\right)$$
For $\beta > 2t$, we need
$$n \geq \frac{4}{\eps \beta}\log\left(\frac{2}{\delta}\right) + \frac{2}{\beta}.$$
Since $\alpha < 1$, it suffices to have $$ n \geq \frac{8}{\eps \beta}\log\left(\frac{4}{\delta \alpha}\right).$$
There are at most $n$ points such that $\hat p_k > 0$. Also, for the points where $\hat p_k =0$, we have $\tilde p_k = 0$. Hence for such points, $\pr{|\tilde p_k - \hat p_k| > \beta \mline \underline{X} = \underline{x}} = 0$. Thus by a union bound we have,
\begin{align*}
\pr{\max_k |\tilde p_k - \hat p_k| > \beta \mline \underline{X} = \underline{x}} \leq n\alpha \delta/2 < \alpha/2
\end{align*}
since $\delta < 1/n$.
	 
Finally, we need to prove that $\pr{\arg\max_k \tilde p_k = j} \leq np_j$ where the probability is over the randomness of the data and the mechanism. In the stability based algorithm, if $\tilde p_k = 0$ for all $k$, the largest bin is undefined, and we set $\arg\max_k \tilde p_k$ as $\perp$. On the other hand, if $\arg \max_k \tilde p_k = j$, then $\hat p_j > 0$. 
Hence $\pr{\arg\max \tilde p_k = j} \leq \pr{\hat p_j > 0} = 1 - (1-p_j)^n \leq np_j$.	 
	 
\end{proof}

%
%

\section{A differentially private estimate of the range of Gaussian random variables}
\label{sec:range}
In this section, we present differentially private algorithms to estimate the range of an iid sample from a Gaussian distribution with known and unknown variances. These algorithms serve as building blocks for estimating the confidence intervals, but may also be of independent interest.

Let $X_1, \ldots, X_n$ be iid samples from a normal distribution with mean $\mu$ and variance $\sigma^2.$ Our algorithms don't make any assumption on the boundedness of the data. The $(\eps, 0)$ DP algorithms assume that $\mu$ and $\sigma^2$ lie in a bounded domain and the $(\eps, \delta)$ algorithms make no boundedness assumptions. Under these assumptions, the algorithms below output a range with the guarantee that with high probability, the range includes all the data points, and if the sample size is large enough, the estimated range is at most a constant factor larger than the true range. We will begin with the simpler case of estimation of range when the variance is known, followed by the case when the variance is unknown.

 
\subsection{Estimation of range with known variance}

\begin{theorem}
	\label{thm:rangeKnownVariance}
	For every $n \in \mathbb{N}$, $\sigma, \eps, \delta > 0$, $\alpha \in \left(0,1/2\right)$, $R \in (0, \infty]$, there is a $w > 0$ and an $(\eps,\delta)$-differentially private algorithm $M: \mathbb{R}^n \rightarrow \mathbb{R} \times \mathbb{R}$ such that whenever $\mu \in (-R,R)$ and 
	$$
	n \geq   c \min \left\{\frac{1}{\eps}\log \left(\frac{R}{\sigma \alpha}\right), \frac{1}{\eps}\log \left(\frac{1}{\delta \alpha}\right) \right\},
	$$
	(where $c$ is a universal constant), 
	if  $X_1, \ldots, X_n$ are iid Gaussian random variables with mean $\mu$ and variance $\sigma_0^2 \leq \sigma^2$ (where $\sigma^2$ is known) and 
	$$(X_{\min}, X_{\max}) \leftarrow M(X_1, \ldots, X_n),$$ we have:

	$$\prm[\underline{X} \sim N(\mu,\sigma_0^2)][M]{ \forall i \text{ } X_{\min} \leq X_i \leq X_{\max} } \geq 1 - \alpha.$$
	and  with probability $1$,
	$$ |X_{\max} - X_{\min}| = w = \O{\sigma \sqrt{\log (n/\alpha)}}$$
\end{theorem}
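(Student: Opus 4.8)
The plan is to instantiate the histogram-based range finder sketched in Section~\ref{sec:techniques}. Concretely, I would partition the interval $(-R,R)$ into $K = O(R/\sigma)$ consecutive bins $B_1,\ldots,B_K$, each of a fixed width $w_{\mathrm{bin}} = 2\sigma$ (with $K=\infty$ when $R=\infty$), run the histogram learner of Lemma~\ref{thm:binfinding} with a \emph{constant} accuracy parameter $\beta$ to obtain $(\tilde p_1,\ldots,\tilde p_K)$, set $\hat k = \arg\max_k \tilde p_k$, and output the interval obtained by expanding $B_{\hat k}$ by a fixed amount $t = O(\sigma\sqrt{\log(n/\alpha)})$ on each side. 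Privacy is then immediate: the only data-dependent step is the $(\eps,\delta)$-DP histogram learner, and the $\arg\max$ together with the deterministic expansion are post-processing, so $M$ is $(\eps,\delta)$-DP. The width bound is likewise immediate and holds with probability $1$, since every bin has the same width and the expansion is a fixed quantity, giving $|X_{\max}-X_{\min}| = w_{\mathrm{bin}} + 2t = O(\sigma\sqrt{\log(n/\alpha)})$.

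The coverage guarantee is where the real work lies, and I would control two failure events, each of probability at most $\alpha/2$. The first event is that the histogram learner is accurate. Here a concentration argument shows the heaviest bin is genuinely heavy: since $\sigma_0 \le \sigma$, for a standard normal $Z$ the interval $[\mu-\sigma,\mu+\sigma]$ has mass at least $\mathbb P(|Z|\le 1) > 0.34$ under $N(\mu,\sigma_0^2)$, and because this interval has width $w_{\mathrm{bin}}$ it meets at most two adjacent bins, so $\max_k p_k \ge c_0$ for a universal constant $c_0$. Choosing $\beta < c_0/4$, the accuracy event $\{\max_k|\tilde p_k - p_k|\le \beta\}$ of Lemma~\ref{thm:binfinding} forces the selected bin to satisfy $p_{\hat k} \ge \max_k p_k - 2\beta \ge c_0/2$.

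The second, and I expect the most delicate, step is to convert ``$B_{\hat k}$ has constant mass'' into ``$B_{\hat k}$ is geometrically close to $\mu$''. Bounding $p_{\hat k}$ above by $w_{\mathrm{bin}}$ times the maximum Gaussian density on $B_{\hat k}$ shows that the distance $d$ from $\mu$ to $B_{\hat k}$ satisfies $d \le \sigma_0\sqrt{2\log(O(\sigma/\sigma_0))}$; the point to check carefully is that $\sigma_0\sqrt{\log(\sigma/\sigma_0)} = O(\sigma)$ uniformly over all $\sigma_0 \le \sigma$ (the function $t\sqrt{\log(1/t)}$ is bounded on $(0,1]$), so that $B_{\hat k}\subseteq[\mu-O(\sigma),\mu+O(\sigma)]$ even when the true variance $\sigma_0^2$ is far below the known upper bound $\sigma^2$. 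The second failure event is that some data point is extreme: a Gaussian tail bound and a union bound give that with probability $\ge 1-\alpha/2$ every $X_i$ lies within $\sigma_0\sqrt{2\log(2n/\alpha)} \le \sigma\sqrt{2\log(2n/\alpha)}$ of $\mu$. Combining the two, choosing the expansion $t = O(\sigma) + \sigma\sqrt{2\log(2n/\alpha)} = O(\sigma\sqrt{\log(n/\alpha)})$ ensures the output interval contains all $X_i$, and a union bound over the two events yields the claimed $1-\alpha$ coverage. Finally, to recover the stated sample complexity I would substitute $\beta = \Theta(1)$ into the hypothesis of Lemma~\ref{thm:binfinding}: its $\min\{(8/\eps\beta)\log(2K/\alpha),(8/\eps\beta)\log(4/\alpha\delta)\}$ term becomes $O(\min\{\eps^{-1}\log(R/\sigma\alpha),\eps^{-1}\log(1/\delta\alpha)\})$ using $K=O(R/\sigma)$, while its sampling term $O(\log(1/\alpha))$ is absorbed since $\eps<1$.
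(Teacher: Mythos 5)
Your proposal is correct and follows essentially the same route as the paper: bin $(-R,R)$ at scale $\Theta(\sigma)$, run the histogram learner of Lemma~\ref{thm:binfinding} with constant accuracy, take the noisy argmax, and expand that bin by $O(\sigma\sqrt{\log(n/\alpha)})$, with privacy by post-processing and coverage from a union bound over the histogram-accuracy event and the Gaussian-tail event. The one place you genuinely diverge is the localization step. The paper proves that $\mu$ lies in the bin of largest mass, that the two heaviest bins are adjacent, and that $p_{(1)}-p_{(3)}\geq 0.1$, so that accuracy $\beta<g/2$ forces the noisy argmax to be one of those two specific bins; you instead show that \emph{any} bin whose true mass survives the $2\beta$ accuracy loss (hence is at least a constant) must, by the density bound $p_{\hat k}\le w_{\mathrm{bin}}\cdot\sup_{x\in B_{\hat k}}\phi_{\mu,\sigma_0}(x)$, sit within distance $O(\sigma)$ of $\mu$ --- and your observation that $u\sqrt{\log(1/u)}$ is bounded on $(0,1]$ correctly handles the case $\sigma_0\ll\sigma$, which is the point where this shortcut could have failed. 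Your version is a bit more robust (it never needs to identify which bins are the top two), at the cost of a slightly worse constant in the $O(\sigma)$ offset; both give the stated bound. Two small housekeeping points: extend the bins slightly past $\pm R$ (the paper uses $[-R-\sigma/2,R+\sigma/2]$) or note, as your argument implicitly allows, that even when $\mu$ is near the boundary a one-sided interval of width $\sigma$ still carries constant mass; and the claim that the $\frac{1}{2\beta^2}\log(4/\alpha)$ sampling term is absorbed into the privacy term holds for bounded $\eps$, which is the regime of interest and is also glossed over in the paper.
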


Theorem \ref{thm:rangeKnownVariance} asserts the existence of a differentially private algorithm that takes as input $n$ iid samples from a normal distribution and outputs a range $[X_{\min}, X_{\max}]$ with a guarantee that with high probability all points are included in the range. Note that when $\delta =0$, $n$ needs to be at least $c(1/\eps)\log(R/(\sigma \alpha)$. Hence, when $\delta=0$, $R$ needs to be finite i.e., we need $\mu$ to be bounded. On the other hand, when $\delta > 0$, $R$ can be $\infty$. Hence Theorem \ref{thm:rangeKnownVariance} asserts that when $\delta > 0$, there is no dependence on the range of $\mu$. Moreover, the range of the data $|X_{\max} - X_{\min}|$ estimated by the algorithm is within a constant factor of the true range. This is because, it is well known that for Gaussian data, $\mathbb E[\max_i X_i] = \Theta\left( \sigma \sqrt{\log n} \right)$, and in fact, $\max_i X_i \geq \sigma \sqrt{\log(n/\alpha)}$ with probability $\Omega(\alpha)$.
\begin{proof}[Proof of Theorem \ref{thm:rangeKnownVariance}]
Consider the following algorithm.

\begin{algorithm}[H]
	\caption{Differentially Private estimate of range with known variance}
	\begin{algorithmic}[1]
		\label{alg:dataRange}
		\REQUIRE $X_1, \ldots, X_{n}$, $\eps$, $\alpha$, $R \in (0,\infty]$, $\sigma$.
		\ENSURE An $(\eps,\delta)$ differentially private estimate of the range of $X_1, \ldots, X_n$.
		\STATE Let $r = \lceil R/\sigma \rceil$. Divide $[-R-\sigma/2,R+\sigma/2]$ into $2r+1$ bins of length at most $\sigma$ each in the following manner - a bin $B_j$ equals $((j-0.5)\sigma, (j+0.5)\sigma]$, for $j \in \{-r,\ldots, r\}$.
		\STATE Run the histogram learner of Lemma \ref{thm:binfinding} with privacy parameters $(\eps, \delta)$ and bins $B_{-r}, \ldots, B_{r}$ on input $X_1, \ldots, X_n$ to obtain noisy estimates $\tilde p_{-r}, \ldots, \tilde p_{r}$. Let the largest noisy bin be $\hat l$, i.e.
		$$\hat l = \underset{j = -r, \ldots, r}{\mbox{argmax }} \tilde p_j$$
		\STATE Output $(X_{\min}, X_{\max})$, where
		$$
		X_{\min} = \sigma\hat{l}  - 4\sigma \sqrt{\log \left(\frac{n}{\alpha}\right)}, 
		X_{\max} = \sigma\hat{l} + 4\sigma \sqrt{\log \left(\frac{n}{\alpha}\right)}
		$$
	\end{algorithmic}
\end{algorithm}
Let $\sigma^* \leq \sigma$ be the standard deviation of $X_1$. We will prove the following two claims:
	
\noindent \emph{Claim 1:} If 	
$$n \geq   c \min \left\{\frac{1}{\eps}\log \left(\frac{R}{\sigma \alpha}\right), \frac{1}{\eps}\log \left(\frac{1}{\delta \alpha}\right) \right\},$$ 
then with probability at least $1 - \alpha/2$, we have 
	$$|\mu - \hat l \sigma| \leq 2\sigma.$$
\emph{Claim 2:} With probability at least $1 - \alpha/2$, we have  
	$$\left\{\forall i: |X_i - \mu | \leq \sigma^* \sqrt{2\log \left(\frac{4n}{\alpha}\right)} \right\}.$$
By Claims 1, 2, a union bound and the fact that $\sigma^* \leq \sigma$, we have, with probability at least $1 - \alpha$, for all $i$,
	$$|X_i - \hat l \sigma| \leq |X_i - \mu| + |\hat l \sigma - \mu| \leq \sigma^* \sqrt{2\log \left(\frac{4n}{\alpha}\right)} + 2\sigma \leq 4\sigma \sqrt{\log \left(\frac{n}{\alpha}\right)}$$
which gives the result. We will now prove the two claims.

\paragraph{Proof of Claim 1}
Let $B_j$ be the $j^{th}$ bin and $p_j = \pr{X_i \in B_j}$. Let $p_{(1)} \geq p_{(2)} \geq p_{(3)}\geq \ldots$ be sorted $p_j$'s and let $j_{(1)}, j_{(2)}, \ldots$ be the corresponding bins, the tie breaking rule will be described below.
Consider the event $E = \{\hat l = j_{(1)} \mbox{ or } j_{(2)}\}$. 
We will first show that bins with largest and second largest mass are adjacent to each other, and $\mu$ always lies in a bin with the largest mass. This will imply that under the event $E$, $|\hat l - \mu| \leq 2\sigma$. 

We begin by finding the bins with the largest, second largest and third largest mass. Let $\Phi(\cdot)$ denote the cdf of a standard normal distribution. Note that
	\begin{align*}
	p_j &= \pr{X_i \in ((j-0.5)\sigma, (j+0.5)\sigma]} \\
    &= \Phi\left( \frac{(j+0.5)\sigma - \mu}{\sigma^*} \right) - 
    \Phi\left(\frac{(j-0.5)\sigma - \mu}{\sigma^*} \right) \\
	&= f\left(j\frac{\sigma}{\sigma^*} - \frac{\mu}{\sigma^*}\right),
	\end{align*} 
where $f(\gamma) = \Phi\left(\gamma + \frac{\sigma}{2\sigma^*}\right) - \Phi\left(\gamma - \frac{\sigma}{2\sigma^*}\right)$.	
One can verify that $f(\gamma)$ is maximized at $\gamma =0$, is symmetric and decreasing with $|\gamma|$. 
Hence maximizing $p_j$ on integers amounts to minimizing $|\gamma| = |j(\sigma/\sigma^*) - \mu/\sigma^*|$, or equivalently minimizing $|j- \mu/\sigma|$. So we are sorting the integers $j$ according to their distance from $\mu/\sigma$. Thus we can take $j_{(1)} = \lceil\mu/\sigma \rfloor$ where $\lceil x \rfloor$ is $x$ rounded to the nearest integer, where we break ties by rounding down. 

Notice that $\mu \in ( (j_{(1)} - 0.5)\sigma, (j_{(1)} + 0.5)\sigma ] = B_{j_{(1)}}$. To determine $j_{(2)}$ and $j_{(3)}$, we consider the cases whether $j_{(1)} = \lceil\mu/ \sigma \rceil$ or $j_{(1)} = \lfloor\mu/ \sigma \rfloor$.
\begin{enumerate}
\item If $j_{(1)} = \lceil\mu/ \sigma \rceil$, then we can take $j_{(2)} = \lceil\mu/ \sigma \rceil - 1$ and $j_{(3)} = \lceil\mu/ \sigma \rceil +1$.
\item If $j_{(1)} = \lfloor\mu/ \sigma \rfloor$, then we can take $j_{(2)} = \lfloor\mu/ \sigma \rfloor + 1$ and $j_{(3)} = \lfloor\mu/ \sigma \rfloor -1$.
\end{enumerate}

Notice that in either case, $|j_{(1)} - j_{(2)}| = 1$, which implies that the largest and the second largest bins are always adjacent to each other, and $|j_{(3)} - \mu/\sigma | = |j_{(1)} - \mu/\sigma| + 1$. Hence we have,
\begin{align*}
g = p_{(1)} - p_{(3)} 
&= f\left(\left| j_{(1)} - \frac{\mu}{\sigma}\right| \frac{\sigma}{\sigma^*}  \right) - 
f\left( \left| j_{(3)} - \frac{\mu}{\sigma}\right| \frac{\sigma}{\sigma^*}   \right)  \\
&\geq f\left( (1/2)\cdot (\sigma/\sigma^*) \right) - f\left( (3/2) \cdot (\sigma/\sigma^*)\right) \\
\intertext{(Since $f(x) - f(x + \sigma/\sigma^*)$ is decreasing for $x \in [0,\infty)$ and $|j_{(1)}- \mu/\sigma| \leq 1/2$),}
&= \Phi(\sigma/\sigma^*) - \Phi(0) - \Phi(2\sigma/\sigma^*) + \Phi(\sigma/\sigma^*) \\
&\geq 2\Phi(1) - \Phi(0) - \Phi(2) \\
\intertext{(Since $2\Phi(x) - \Phi(0) - \Phi(2x)$ is a decreasing function for $x \in (0,\infty)$ and $\sigma/\sigma^* \leq 1$}
&\geq 0.1 \mbox{ (By explicit calculation.)}
\end{align*}
Hence under the event $E = \{\hat l = j_{(1)} \mbox{ or } j_{(2)}\}$, $|\hat l - \mu| \leq 2\sigma$.



Next we will lower bound the probability of event $E$. If $\max_k |\tilde p_k - p_k| < g/2$, then the largest noisy bin $\hat l$ is either $j_{(1)}$ or $j_{(2)}$.  Since $\mu \in B_{(j_{(1)})}$, and the bin width is $\sigma$, we have $|\hat l\sigma - \mu| \leq 2\sigma$. Applying Lemma \ref{thm:binfinding}, with $\beta = g$, we get, if 
$$ n \geq \O{\min \left\{ 
	\frac{\log\left( \frac{K}{\alpha}\right)}{g\eps}, 
	\frac{\log\left(\frac{1}{\alpha \delta}\right)}{g\eps }
	\right\}}$$
then with probability at least $1 - \alpha/2$, we have $\max_k |\tilde p_k - p_k| < g/2$. Here $K = 2R/\sigma$ is the number of bins.

\paragraph{Proof of Claim 2:}
From Proposition \ref{prop:lap} due to Gaussian tails we have that for all $i$,
		 $$\pr{|X_i-\mu| > c} \leq 2e^{-c^2/2\sigma^{*^2}}$$ 
		 By applying a union bound, we get,
		 \begin{align*}
         \pr{\exists \mbox{ i} |X_i -\mu| \geq c} \leq 2ne^{-c^2/2\sigma^{*^2}}
		 \end{align*}
		 Letting $c = \sigma^* \sqrt{2\log \left(4n/\alpha\right)}$ gives the desired result.
\end{proof}

\subsection{An estimate of variance}
\label{sec:varianceest}
In this section, we will construct a differentially private estimator of the variance of data from a Gaussian distribution with unknown variance. This estimate will be used for finding the range of a sample with unknown variance.

\begin{theorem}
	\label{lem:varLabel}
	For every $n \in \mathbb{N}$, $\sigma_{\min} < \sigma_{\max} \in [0, \infty], \eps, \delta > 0$, $\alpha \in \left(0,1/2\right)$, there is an $(\eps,\delta)$-differentially private algorithm $M: \mathbb{R}^n \rightarrow [0, \infty)$ such that
	if  $X_1, \ldots, X_n$ are iid Gaussian random variables with mean $\mu$ and with variance $\sigma^2 \in (\sigma_{\min}^2, \sigma_{\max}^2)$, and 
	$$\hat \sigma \leftarrow M(X_1, \ldots, X_n),$$ we have:
	\begin{enumerate}
		\item High probability bound: If 
		$$
		n \geq   c \min \left\{\frac{1}{\eps}
		\log \left(
		\frac{\log \left(\frac{\sigma_{\max}}{\sigma_{\min}}\right)}{\alpha}
		\right), \frac{1}{\eps}\log \left(\frac{1}{\delta \alpha}\right) \right\},
		$$
		(where $c$ is a universal constant), 
		$$\prm[\underline{X} \sim N(\mu,\sigma^2)][M]{ \sigma \leq \hat \sigma \leq 8 \sigma } \geq 1 - \alpha$$
		\item Expectation bound: If 
        $$	n \geq   c \min \left\{\frac{1}{\eps}
				\log\left(\frac{\sigma_{\max}}{\sigma_{\min}}\right), \frac{1}{\eps}\log \left(\frac{1}{\delta \alpha}\right) \right\},$$
				then 
           $$
                \Em[\underline{X} \sim N(\mu,\sigma^2)][M]{\hat \sigma^2} \leq \sigma^2 \cdot \left(c_1 + c_2 \log^2(n) \cdot \alpha\right)  
             $$ 
    for some universal constants $c_1$ and $c_2$.
	\end{enumerate}
\end{theorem}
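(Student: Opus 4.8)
The plan is to reduce the unknown-variance problem to a histogram-learning problem on a $\mu$-independent family of bins, exactly paralleling the known-variance range estimator of Theorem~\ref{thm:rangeKnownVariance}. Concretely, I would form the $n/2$ pairwise differences $Y_i=|X_{2i-1}-X_{2i}|$. Each $Y_i$ is distributed as $|Z|$ with $Z\sim\Normal(0,2\sigma^2)$, so its law depends only on $\sigma$ and not on the unknown mean $\mu$; this is where the symmetry of the Gaussian is used. I would then partition $(\sigma_{\min},\sigma_{\max})$ into the geometric bins $B_j=(2^j,2^{j+1}]$, of which there are $K=\Theta(\log(\sigma_{\max}/\sigma_{\min}))$, run the histogram learner of Lemma~\ref{thm:binfinding} on $Y_1,\dots,Y_{n/2}$ to obtain the heaviest noisy bin $\hat\jmath$, and output $\hat\sigma=c_0 2^{\hat\jmath}$ for a suitable absolute constant $c_0$; write $\hat\sigma_j:=c_0 2^j$ for the value associated with bin $B_j$.

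For the high-probability bound (Part~1), I would mirror the ``gap'' argument in the proof of Theorem~\ref{thm:rangeKnownVariance}. Writing $p_j=\Pr[Y\in B_j]$, a short calculation shows that $p_j\propto\Phi(2^{j+1}/\sqrt2\sigma)-\Phi(2^j/\sqrt2\sigma)$ is unimodal in $j$ with its peak at $2^{j^\ast}=\Theta(\sigma)$, and that the top bin beats every bin at distance $\ge 2$ by an absolute constant gap $g>0$ (verified by explicit computation, just as the constant $0.1$ is obtained there). Consequently, whenever $\max_k|\tilde p_k-p_k|<g/2$, the noisy argmax $\hat\jmath$ must be $j^\ast$ or one of its two neighbors, so $\hat\sigma=c_0 2^{\hat\jmath}\in[\sigma,8\sigma]$ after calibrating $c_0$ (a factor-$4$ spread across three bins fits comfortably inside the factor-$8$ window). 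Applying Lemma~\ref{thm:binfinding} with accuracy parameter $\beta=g/2=\Theta(1)$ then gives $\max_k|\tilde p_k-p_k|<g/2$ with probability $\ge1-\alpha$ under the stated sample complexity $\min\{(1/\eps)\log(K/\alpha),(1/\eps)\log(1/\delta\alpha)\}$, which is exactly Part~1.

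The expectation bound (Part~2) is the main obstacle, because the crude Part~1 guarantee is useless here: on the (probability-$\alpha$) failure event $\hat\sigma$ could be as large as $\sigma_{\max}$, so $\alpha\cdot\sigma_{\max}^2$ does not even converge. The fix is to use the finer bin-selection bound $\Pr[\hat\jmath=j]\le n p_j$ (plus an additional $2e^{-(\eps n/8)\max_k p_k}$ in the $K<2/\delta$ regime) from Lemma~\ref{thm:binfinding}, and to split $\E{\hat\sigma^2}=\sum_j \hat\sigma_j^2\,\Pr[\hat\jmath=j]$ into three ranges. For central bins with $\hat\sigma_j\le 8\sigma$ I bound $\Pr[\hat\jmath=j]\le1$, contributing $O(\sigma^2)$, the $c_1$ term. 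For very-far bins with $\hat\sigma_j>\sigma\cdot\Theta(\log n)$ I use $\Pr[\hat\jmath=j]\le n p_j$ together with the Gaussian tail $p_j\le 2e^{-(2^j)^2/(4\sigma^2)}$ (Proposition~\ref{prop:lap}); past $2^j=\Theta(\sigma\log n)$ this tail is $n^{-\omega(1)}$, which beats both the factor $n$ and the growing weight $(2^j)^2$, so the whole range contributes $o(\sigma^2)$. For the intermediate ``moderate-overestimate'' bins with $8\sigma<\hat\sigma_j\le\sigma\cdot\Theta(\log n)$, neither bound alone works ($np_j$ is $\Theta(n)$ at the low end, while $\Pr\le1$ loses the $\alpha$): here I instead bound the \emph{collective} probability $\sum\Pr[\hat\jmath=j]$ by the failure probability of the histogram argmax, which the stated sample complexity $\min\{(1/\eps)\log(\sigma_{\max}/\sigma_{\min}),(1/\eps)\log(1/\delta\alpha)\}$ drives down to $\alpha$, while the largest weight in this range is $(\sigma\cdot\Theta(\log n))^2$. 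This yields the $c_2\sigma^2\log^2(n)\,\alpha$ term, with the $\log^2 n$ arising precisely from squaring the $\Theta(\sigma\log n)$ threshold that separates the failure-controlled and tail-controlled regimes.

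The delicate points I expect to spend the most care on are: verifying the constant gap $g$ and the calibration of $c_0$ so the three-bin spread lands in $[\sigma,8\sigma]$; checking that in the pure-DP ($K<2/\delta$) regime the extra selection term $2e^{-(\eps n/8)\max_k p_k}$, summed against weights up to $\sigma_{\max}^2$ over $K$ bins, is killed by $n\ge c(1/\eps)\log(\sigma_{\max}/\sigma_{\min})$; and reconciling the threshold choice so that the tail-controlled sum is genuinely negligible while the failure-controlled sum carries only the advertised $\log^2(n)\,\alpha$ factor. Throughout, replacing $n$ by $n/2$ from the pairing step changes only absolute constants.
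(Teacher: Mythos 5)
Your proposal is correct and follows essentially the same route as the paper: the same pairwise-difference reduction with geometric bins and the histogram learner, the same constant-gap argmax argument for the high-probability bound, and the same three-way split of $\E{\hat\sigma^2}$ (central bins bounded by total probability, intermediate bins up to $\Theta(\sigma\log n)$ charged to the $\alpha$ failure probability from Part~1, far bins controlled by the $np_j$ selection bound against Gaussian tails, with the extra $e^{-\Omega(\eps n)}$ term in the pure-DP regime absorbed by $n\geq (c/\eps)\log(\sigma_{\max}/\sigma_{\min})$). The only differences are cosmetic (your calibration constant $c_0$ versus the paper's fixed output $2^{\hat l+2}$).
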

As before, note that when $\delta >0$, $\sigma_{\max}$ can be set to $\infty$ and $\sigma_{\min}$ can be set to $0$, and the required sample complexity remains finite. However, when $\delta =0$, we need $\sigma_{\min}$ and $\sigma_{\max}$ to be bounded away from $0$ and $\infty$ respectively. Theorem \ref{lem:varLabel} asserts that with high probability the estimate of $\sigma$ is not too far from the true $\sigma$. Moreover, we can also obtain a bound on the expected value of $\hat \sigma^2$ with an increase in sample complexity (to a logarithmic dependence on $\sigma_{\min}/\sigma_{\max}$ rather than doubly-logarithmic). For instance, if we set $\alpha = \alpha_0/\log^2 n$, we get $\mathbb E[\hat \sigma^2] = \O{\sigma^2}$. Reduction of $\log^2n$ in $\alpha$ amounts to an extra requirement that $n \geq c \cdot \log^2 n /\eps$, i.e., $n \geq \Otilde{1/\eps}$. The upper bound on $\mathbb E[\hat \sigma^2]$ will be used to derive the upper bound on the expected length of the width of the confidence interval in Theorem \ref{thm:ciunknownvar}.   

\begin{proof}[Proof of Theorem \ref{lem:varLabel}]
We claim that Algorithm \ref{alg:estvar} given below is $(\eps, \delta)$ differentially private with the required properties.
\begin{algorithm}[H]
	\caption{}
	\begin{algorithmic}[1]
		\label{alg:estvar}
		\REQUIRE $X_1, \ldots, X_{n}$, $\eps$, $R$, $\sigma_{\min}$, $\sigma_{\max}, \alpha$.
		\ENSURE An $(\eps, \delta)$ differentially private approximate estimate $\hat \sigma$ of the standard deviation of $X_1, \ldots, X_n$.
		
		\STATE If 	$$n <   c \min \left\{\frac{1}{\eps}
		\log \left(
		\frac{\log \left(\frac{\sigma_{\max}}{\sigma_{\min}}\right)}{\alpha}
		\right), \frac{1}{\eps}\log \left(\frac{1}{\delta \alpha}\right) \right\},$$
		output $\perp.$
		\STATE Divide the positive half of the real line into bins of exponentially increasing length. The bins are of the form $B_j = (2^{j},2^{j+1}]$ for 
		$j = j_{\min}, \ldots,j_{\max}$.
		where $j_{\max} = \lceil \log_2 \sigma_{\max} \rceil + 1$ and $j_{\min} = \lfloor \log_2 \sigma_{\min} \rfloor - 2$. 
		  
	
		\STATE Let $Y_i = X_{2i} - X_{2i-1}$ for $i = 1, \ldots, \lfloor n/2 \rfloor$ 
		
		\STATE Run the histogram learner of Lemma \ref{thm:binfinding} with budget $(\eps, \delta)$ and bins $B_{j_{\min}}, \ldots, B_{j_{\max}}$ on input $|Y_1|, \ldots, |Y_{\lfloor n/2 \rfloor}|$ to obtain noisy estimates $\tilde p_{j_{\min}}, \ldots, \tilde p_{j_{\max}}$. Let the largest noisy bin be $\hat l$, i.e.
			$$\hat l = \underset{j}{\mbox{argmax }} \tilde p_j$$
			
		
		\STATE Output $\hat \sigma = 2^{\hat l+2}$
	\end{algorithmic}
\end{algorithm}
\paragraph{Proof of Part (1):}
Let us start by proving Part (1). The proof will be based on the following two informal steps:
	
\noindent \emph{Step 1}: The bins with largest or second largest probability mass are always the bin containing $\sigma$ or the bin next to the bin containing $\sigma$.

\noindent \emph{Step 2:} Under the right conditions, the largest noisy bin $\hat l$ is the either the bin with largest or the second largest mass.

Let us start with some simple facts. Since $\sigma \in  (\sigma_{\min},\sigma_{\max})$,  there exists a bin $B_l$ with label $l \in (\lfloor \log_2 \sigma_{\min} \rfloor - 1, \lceil \log_2 \sigma_{\max} \rceil )$ such that $\sigma \in (2^{l},2^{(l+1)} ] = B_l$. Hence we can write $\sigma = 2^{l+c}, \mbox{ for some } c \in (0,1]$.
	Next, note that $Y_i \sim N(0,2\sigma^2)$ for $i = 1, \ldots, \lfloor n/2 \rfloor$.
	Following the statement of Lemma \ref{thm:binfinding}, define
	$$p_j = \pr{|Y_i| \in B_j }$$
  Sort the $p_j$'s as $p_{(1)}\geq p_{(2)} \geq p_{(3)} \geq \ldots$ and let  $j_{(1)}, j_{(2)}, j_{(3)}, \ldots$ be the corresponding bins, where the tie breaking rules will be specified below.
  
	\paragraph{Claim 1:} The bins corresponding to the largest and second largest mass $p_{(1)}, p_{(2)}$ are $(j_{(1)}, j_{(2)}) \in  \{(l,l-1), (l,l+1), (l+1,l)\}$.
	\paragraph{Claim 2:} $p_{(1)} - p_{(3)} > \frac{1}{100}$.
	
	Deferring the proof of these claims, let us prove the Part (1). Consider the following event 
	$$E = \{ \hat l \in (j_{(1)}, j_{(2)})\}.$$ 
	Under $E$, from Claim 1, it follows that $\hat \sigma = 2^{\hat l +2}$ will be $2^{l+1}, 2^{l+2}, $ or $2^{l+3}$. Since $\sigma =2^{l+c}$, for some $c \in (0,1]$, we have, $\sigma \leq \hat \sigma \leq 8\sigma.$ 
	To finish the proof of Part (1), we need to lower bound the probability of $E$. This follows from the properties of the histogram learner of Lemma \ref{thm:binfinding}. Let $g = p_{(1)} - p_{(3)}$. If  $\max_i |\tilde p_i - p_i | < g/2$, then the largest noisy bin is either $p_{(1)}$ or $p_{(2)}$. From Lemma \ref{thm:binfinding} this happens with probability at least $1 - \alpha$ if
	$$ n \geq \O{\min \left\{ 
		\frac{\log\left( \frac{K}{\alpha}\right)}{g\eps}, 
		\frac{\log\left(\frac{1}{\alpha \delta}\right)}{g    \eps }
		\right\}}
	$$
	where $K = c \cdot \log_2 \left(\sigma_{\max} / \sigma_{\min}\right)$ is the number of bins. 
	We will now prove the two claims.
	For any $\Delta \in \mathbb R$, define
		$$q(\Delta) = \phi\left(2^{\Delta+0.5}\right) - \phi\left(2^{\Delta - 0.5}\right),$$
		and note that,
		\begin{align*}
		p_{j}
		&=2\left(\Phi\left( \frac{2^{j+1}}{\sqrt{2}\sigma} \right) - 
		\Phi\left( \frac{2^{j}}{\sqrt{2}\sigma} \right)\right)\\
		&= 2q\left(j - l -c\right)
		\end{align*} 
		where $\Phi(\cdot)$ is the cdf of a standard normal distribution.
		
	\begin{proof}[Proof of Claim 1]
		One can verify that when $\Delta \in \mathbb R$, 
		\begin{align}
		\label{eq:derivativeOfScale}
		\frac{d q(\Delta)}{d \Delta} = g(\Delta) \left(1 -\frac{1}{2}\exp\left(3\cdot 2^{2(\Delta -1)}\right)\right)
		\end{align}
		where $g(\Delta) > 0$  $\forall \Delta$. Let $d =  1+ 0.5\log_2((\log 2)/3)$. The derivative is $0$ when $\Delta = d$.  $q(\Delta)$ is increasing when $\Delta < d$ and decreasing when $\Delta > d$, and the maximum occurs at $\Delta = d$. 
		Since $p_j = 2q(j-l-c)$, and $j \in \{j_{\min}, \ldots, j_{\max}\}$, the largest value of $p_{j}$ occurs at $j  = l+ \lceil c + d\rceil$ or $l+ \lceil c + d \rceil -1$. Since $c \in (0,1]$ and $d \approx -0.056$, $\lceil c+ d \rceil \in \{0,1\}$ and the possible values of $j$ where the maximum occurs are either $l$ or $l-1$ or $l+1$. One can verify that for all $c \in (0,1]$, $p_l > p_{l-1}$ (For example, note that $p_l - p_{l-1} = 2q(-c) - q(-1-c)$ and $q(-c) - q(-1-c)$ is a monotonic function for $c \in (0,1]$ and positive for $c=0$ and $c=1$). Hence the maximum occurs at either $l$ or $l+1$. 
		
		Now let us find the second largest bin. Note that due to the concavity of $q(\Delta)$, the second largest bin must be adjacent to the largest bin. When the largest bin is $l$, the second largest bin can be either $l-1$ or $l+1$. Similarly, when the largest bin is $l+1$, the second largest bin can be either $l$ or $l+2$. One can verify that for all $c \in (0,1]$, $p_{l+2} < p_l$. Hence the largest and the second largest bins $(j_{(1)}, j_{(2)})$ are always the pairs $(l,l-1), (l,l+1), (l+1,l)$ as desired.

		
	\end{proof}


\begin{proof}[Proof of Claim 2]
	 Note that
	$$p_{(1)} \geq p_l = 2q(-c) = 2\left[\Phi(2^{0.5-c}) - \Phi(2^{-0.5-c})\right]$$
	Also for any $k_1, k_2$, we have $p_{(3)} \leq \max_{k \neq k_1, k_2} p_k$.
	When $c \in (0,1/2]$, we will use $k_1 = l, k_2 = l-1$ and when $c \in (1/2,1]$ we will use $k_1 = l, k_2 = l+1$. This gives us the following:
	 \paragraph{Case 1:} When $c \in (0,1/2]$,
	 $$p_{(3)} \leq \underset{t \in \mathbb Z \backslash \{0,-1\}}{\max} 2q(t-c) 
	 $$
	 By inspecting the derivative  (equation \ref{eq:derivativeOfScale}), one can verify that the maximum occurs when $t = 1$.
	 Hence we have,
	 \begin{align*}
	 p_{(1)} - p_{(3)}
	 &\geq \underset{c \in (0,1/2]}{\min}2\left[\phi(2^{1/2-c}) - \phi(2^{-1/2-c})  - \phi(2^{3/2-c}) + \phi(2^{1/2-c})\right]
	 \end{align*}
	 By taking the derivative one can verify that this function is decreasing in $c$ and the minimum occurs at $c = 1/2$ and $g \geq 0.0139$.
	 \paragraph{Case 2:}When $c \in (1/2,1]$,
	 $$p_{(3)} \leq \underset{t \in \mathbb Z \backslash \{0,1\}}{\max}2q(t-c) 
	 $$
	 By inspecting the derivative (equation \ref{eq:derivativeOfScale}), one can verify that the maximum occurs when $t = 2$.
	 Hence we have,
	 \begin{align*}
	 g &= p_{(1)} - p_{(3)} \\
	 &\geq \underset{c \in (1/2,1]}{\min}2\left[\phi(2^{1/2-c}) - \phi(2^{-1/2-c})  - \phi(2^{5/2-c}) + \phi(2^{3/2-c})\right]
	 \end{align*}
	 By taking the derivative one can verify that this function is decreasing in $c$ and the minimum occurs at $c = 1$ and $g \geq 0.091$.
	 In both cases, $g \geq 0.01$
\end{proof}

\paragraph{Proof of Part (2):} Next we need to show that when 
\begin{align}
\label{assup:samplecomplexity}
n \geq (c/ \eps ) \cdot \min \{\log \left(\sigma_{\max}/\sigma_{\min}\right) , \log(1/\alpha \delta)\},
\end{align} we have,  
$\E{\hat \sigma^2} \leq \sigma^2 \cdot \left(c_1 + c_2 \log^2(n) \cdot \alpha \right)$. 
Note that under the Assumption \ref{assup:samplecomplexity} on $n$, the results of Part (1) apply, since this sample complexity is larger than what is needed for the high probability bound. Let $q_i = \pr{\arg\max_j \tilde p_j =i}$. 
Recall that $\sigma = 2^{l+c}$ for some $c \in (0,1]$. 

\begin{align*}
\E{\hat \sigma^2} &= \sum_{i=j_{\min}}^{i=j_{\max}} (2^{i+2})^2 \cdot \pr{\arg\max_j \tilde p_j = i }\\
&= 16 \left(
\sum_{i \leq l+3} 2^{2i} q_i +  
\sum_{l+3 < i \leq l + \log_2 (\log n)} 2^{2i} q_i +
\sum_{i > l + \log_2(\log n)} 2^{2i} q_i 
\right)\\
&= 16 \left( A + B + C \right)
\end{align*}

Now we will bound each of these three terms:
\begin{align*}
A = \sum_{i \leq l+3} 2^{2i} q_i \leq 2^{2(l+3)} \cdot \left(\sum_{i\leq l+3} q_i\right) \leq 64\sigma^2 \cdot 1
\end{align*}

\begin{align*}
B&= \sum_{l+3 < i \leq l+\log \log n} 2^{2i} q_i \leq 2^{2(l+\log \log n)} \cdot \left(\sum_{l+3 < i < l+ \log \log n} q_i\right) \\
&\leq \sigma^2 \log^2 n \cdot \pr{\hat \sigma > 8\sigma } \leq \sigma^2 \log^2 n \cdot \alpha
\end{align*}
where we have used Part (1) to bound $\pr{\hat \sigma > 8\sigma}$ by $\alpha$.
To bound the last summand $C$ we will consider two cases. First note that if $X \sim N(0,\sigma^2)$, we have.
$$
p_i = \pr{|X| \in (2^i, 2^{i+1}]} < \pr{ |X| > 2^i} \leq 2\exp(-2^{2i}/2\sigma^2) \leq 2\exp(-2^{2(i-l-1)}/2).
$$
Let $K = j_{\max} - j_{\min}$ be the number of bins.

\paragraph{Case 1: $\delta \geq 2/K$}
From Lemma \ref{thm:binfinding}, when $\delta \geq  2/K$, 
$$q_i \leq np_i \leq 2n\exp(-2^{2(i-l)-3}).$$ 
Hence
\begin{align*}
C &= \sum_{i > l + \log \log n} 2^{2i} q_i < 2n \cdot \left(\sum_{i > l + \log \log n} 2^{2i} \exp(-2^{2(i-l)-3}) \right) \\
&\leq 2n\sigma^2 \left(\sum_{t > \log \log n} 2^{2t} \exp(-2^{2t-3})\right) \\
&\leq 2n \sigma^2 \int_{\log n}^{\infty} k^2 \exp(-k^2/8)dk \\
&\leq \sigma^2 \mbox{, for sufficiently large $n$,}
\end{align*}
and $c_3$ is a fixed positive constant.
Combing the three terms, we get:
\begin{align*}
\E{\hat \sigma^2}
&\leq \sigma^2 \left(c_1 + c_2 \log^2(n) \alpha + c_3\right)
\end{align*}
\paragraph{Case 2: $\delta < 2/K$}

From Lemma \ref{thm:binfinding}, when $\delta < 2/K$, 
$$q_i \leq np_i +\exp\left(-\Om{-\eps n \max_j p_j}\right) \leq 2n \exp\left(-2^{2(i-l)-3}\right) + \exp\left(-\Om{-\eps n}\right),$$
since $\max_j p_j = p_{(1)} > 1/100$ from the proof of Part (1), Claim 2.
We get:
\begin{align*}
C&= \sum_{i > l + \log \log n} 2^{2i} q_i \\
&\leq \sum_{i > l+ \log\log n} 2^{2i} \exp(-2^{2(i-l)-3}) + \sum_{l + \log \log n < i \leq j_{\max} } 2^{2i} \exp\left(-\Om{-\eps n}\right)\\
&\leq  \sigma^2 +  \sigma_{\max}^2 \exp\left(-\Om{-\eps n}\right), 
\end{align*}
for all sufficiently large $n$.
Hence,
\begin{align*}
\E{\hat \sigma^2}
&\leq \sigma^2 \left(c_1 + c_2 \log^2(n) \alpha + c_3 + e^{-\Om{\eps n}}\cdot \frac{\sigma_{\max}^2}{\sigma_{\min}^2} \right)\\
&\leq \sigma^2 \left(c_1 + c_2 \log^2(n) \cdot \alpha \right)
\end{align*}
if 
$$n \geq \frac{c}{\eps} \log \left(\frac{\sigma_{\max}}{\sigma_{\min}}\right)$$
Hence in both cases, we have:
\begin{align*}
\mathbb E[\hat \sigma^2]
&\leq \sigma^2 \left(c_1 + c_2 \log^2(n) \cdot \alpha\right)
\end{align*}

\end{proof}

\subsection{Estimation of range with unknown variance}
In this section we will present an algorithm that estimates the range of data from a normal distribution when both the mean and the variance are unknown. 
\begin{theorem}
	\label{thm:rangeunkownvar}
	For every $n \in \mathbb{N}$, $\sigma_{\max} > \sigma_{\min} \in [0,\infty], \eps, \delta > 0$, $\alpha \in \left(0,1/2\right)$, $R \in [0,\infty]$, there is an $(\eps,\delta)$-differentially private algorithm $M: \mathbb{R}^n \rightarrow (0,\infty) \times \mathbb{R} \times \mathbb{R}$ such that whenever  
	$$n \geq  c \cdot
	\min \left\{
	\max \left\{ 
	\frac{1}{\eps}\log \left(\frac{R}{\sigma_{\min} \alpha}\right), \frac{1}{\eps}\log \left(\frac{\log (\sigma_{\max}/\sigma_{\min})}{\alpha}\right)
	\right\},
	\frac{1}{\eps}\log \left(\frac{1}{\delta \alpha}\right)
	\right\},
	$$
	(where $c$ is a universal constant), 
	if  $X_1, \ldots, X_n$ are iid Gaussian random variables with mean $\mu \in (-R,R)$ and with variance $\sigma^2 \in (\sigma_{\min}^2, \sigma_{\max}^2)$, and 
	$$(\hat \sigma, X_{\min}, X_{\max}) \leftarrow M(X_1, \ldots, X_n),$$ we have:
	\begin{enumerate}
		\item $\prm[\underline{X} \sim N(\mu,\sigma^2)][M]{ \forall i \text{ } X_{\min} \leq X_i \leq X_{\max} } \geq 1 - \alpha$
		\item $\prm[\underline{X} \sim N(\mu,\sigma^2)][M]{|X_{\max} - X_{\min}| \leq \O{\sigma \sqrt{\log (n/\alpha)}}} \geq 1  - \alpha$
		\item $\prm[\underline{X} \sim N(\mu,\sigma^2)][M]{\sigma \leq \hat \sigma \leq 8 \sigma} \geq 1 - \alpha$
	\end{enumerate}
Moreover, if 
	$$n \geq  c \cdot
	\min \left\{
					\frac{1}{\eps} \log \left(\frac{\sigma_{\max}}{\sigma_{\min}}\right)
				,
				\frac{1}{\eps}\log \left(\frac{1}{\delta \alpha}\right)
		\right\},
	$$
	then
	$$\Em[\underline{X} \sim N(\mu,\sigma^2)][M]{\hat \sigma^2} \leq \sigma^2\left(c_1 + c_2 \log^2(n) \cdot \alpha\right)$$
	
\end{theorem}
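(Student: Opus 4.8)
The plan is to obtain $M$ by composing the two algorithms already built in this section. First I would run the variance estimator of Theorem~\ref{lem:varLabel} with budget $(\eps/2,\delta/2)$ to get $\hat\sigma$, and then run the known-variance range estimator of Theorem~\ref{thm:rangeKnownVariance} with budget $(\eps/2,\delta/2)$, passing it $\hat\sigma$ in the slot that previously held the known upper bound on the standard deviation; the output is the triple $(\hat\sigma, X_{\min}, X_{\max})$. The key compatibility observation is that Theorem~\ref{thm:rangeKnownVariance} only needs the number it is handed to be an \emph{upper} bound on the true standard deviation (its hypothesis is $\sigma_0^2 \le \sigma^2$), so its conclusions apply verbatim whenever $\hat\sigma \ge \sigma$. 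Privacy is then immediate from the composition lemma (Lemma~\ref{thm:compose}): the range estimator is $(\eps/2,\delta/2)$-DP for every fixed value of its variance input, and that input $\hat\sigma$ is itself produced by an $(\eps/2,\delta/2)$-DP computation, so $M$ is $(\eps,\delta)$-DP.

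Running each sub-algorithm with failure probability a constant fraction of $\alpha$ and taking a union bound, the three high-probability claims follow. Part~3 is exactly Part~(1) of Theorem~\ref{lem:varLabel}: with high probability $\sigma \le \hat\sigma \le 8\sigma$; call this event $E_{\mathrm{var}}$. On $E_{\mathrm{var}}$ the upper bound $\hat\sigma \le 8\sigma$ gives Part~2, since the range estimator deterministically outputs an interval of width $8\hat\sigma\sqrt{\log(n/\alpha)} \le 64\sigma\sqrt{\log(n/\alpha)} = \O{\sigma\sqrt{\log(n/\alpha)}}$. For containment (Part~1) I would use the lower bound $\hat\sigma \ge \sigma$: this is precisely the condition under which the heaviest bin is adjacent to the bin holding $\mu$ --- the gap bound $p_{(1)}-p_{(3)} \ge 0.1$ from Claim~1 of the proof of Theorem~\ref{thm:rangeKnownVariance} holds for every bin width at least the true standard deviation --- so that $|\mu - \hat\sigma\hat l| \le 2\hat\sigma$, and combining this with the data event that every $X_i$ lies within $\sigma\sqrt{2\log(4n/\alpha)}$ of $\mu$ and the same triangle-inequality calculation as in that proof (now using $\sigma \le \hat\sigma$) places every $X_i$ inside $[\hat\sigma\hat l - 4\hat\sigma\sqrt{\log(n/\alpha)}, \hat\sigma\hat l + 4\hat\sigma\sqrt{\log(n/\alpha)}]$.

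The step that needs genuine care --- and which I expect to be the main obstacle --- is that $\hat\sigma$ is computed from the \emph{same} sample on which the range histogram is run, so the bins (of width $\hat\sigma$) are data-dependent and one cannot simply invoke Theorem~\ref{thm:rangeKnownVariance} as a black box over a fixed bin collection. I would resolve this by re-examining the containment analysis instead of quoting it. Conditioning on the data and on the variance estimator's coins fixes the bins, so the Laplace-noise step of the histogram learner (part~(i) of the proof of Lemma~\ref{thm:binfinding}, which holds for \emph{every} fixed dataset and bin collection) still controls $\max_k|\tilde p_k - \hat p_k|$; and the empirical-to-population step rests on the Dvoretzky--Kiefer--Wolfowitz bound $\max_k|\hat p_k - p_k| \le 2\sup_x|\hat F(x) - F(x)|$, which holds \emph{simultaneously for every interval partition} and hence uniformly over all possible values of $\hat\sigma$. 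Thus $\max_k|\tilde p_k - p_k| < g/2$ still occurs with high probability even though the bins depend on the data. (An alternative is to split the sample, estimating $\hat\sigma$ on one half and running the histogram on the other to decouple them, but I prefer the uniform-DKW route since it keeps all $n$ points available for containment.) Note also that $\hat\sigma \ge \sigma_{\min}/2$ always, so the histogram uses at most $K = \O{R/\sigma_{\min}}$ bins, which is what makes the stated sample complexity suffice.

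Finally, the sample-complexity threshold is the larger of the two sub-algorithms' thresholds. Combining $c\min\{\frac1\eps\log(\log(\sigma_{\max}/\sigma_{\min})/\alpha), \frac1\eps\log(1/\delta\alpha)\}$ from the variance step with $c\min\{\frac1\eps\log(R/\sigma_{\min}\alpha), \frac1\eps\log(1/\delta\alpha)\}$ from the range step and using the elementary identity $\max\{\min\{a,d\},\min\{b,d\}\} = \min\{\max\{a,b\},d\}$ recovers exactly the expression in the statement. The expectation bound on $\hat\sigma^2$ needs nothing new: since $\hat\sigma$ is produced solely by the variance estimator, it is inherited verbatim from Part~(2) of Theorem~\ref{lem:varLabel}, with the constant-factor change in $\eps,\delta$ absorbed into $c$.
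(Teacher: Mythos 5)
Your proposal is correct and follows the paper's own route exactly: compose the variance estimator of Theorem~\ref{lem:varLabel} (budget $(\eps/2,\delta/2)$, failure probability $\alpha/2$) with the known-variance range estimator of Theorem~\ref{thm:rangeKnownVariance} fed $\hat\sigma$ as the upper bound on the standard deviation, get privacy from Lemma~\ref{thm:compose}, get Parts 1--3 from a union bound over the event $\{\sigma\le\hat\sigma\le 8\sigma\}$, and inherit the expectation bound on $\hat\sigma^2$ verbatim from Part~(2) of Theorem~\ref{lem:varLabel}. The one place you go beyond the paper is the adaptivity issue you flag: the paper's proof simply conditions on $\{\sigma<\hat\sigma<8\sigma\}$ and invokes Theorem~\ref{thm:rangeKnownVariance} as a black box, without remarking that the bins of width $\hat\sigma$ are computed from the same sample, whereas you patch this explicitly via the observation that the DKW bound controls $\max_k|\hat p_k-p_k|\le 2\sup_x|\hat F(x)-F(x)|$ uniformly over all interval partitions while the Laplace-noise step holds conditionally on any fixed dataset and bin collection. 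That patch is sound (and Claim~2 of the range proof is bin-independent, so nothing else in the containment argument is affected); it addresses a real gap in the paper's write-up rather than introducing a different method, so your proof is, if anything, more complete than the one in the paper.
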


We briefly comment on Theorem \ref{thm:rangeunkownvar}. As before, when $\delta > 0$, $\mu$ and $\sigma^2$ can remain unbounded as there is no dependence of $n$ on $R$, $\sigma_{\min}$ and $\sigma_{\max}$. On the other hand, when $\delta =0$, $\mu$ and $\sigma^2$ must be bounded, and the dependence of $n$ is logarithmic on $R/\sigma_{\min}$ and $\sigma_{\max}/\sigma_{\min}$. Hence even when $\delta =0$, these parameters can be set to a large value as the dependence is only logarithmic. Moreover, with high probability, the estimated range is of the same order as the expected range for Gaussian data, which is $\Theta\left(\sigma \sqrt{\log n}\right)$

\begin{proof}[Proof of Theorem \ref{thm:rangeunkownvar}]
We first obtain an algorithm that estimates the range by combining Algorithms \ref{alg:estvar} and \ref{alg:dataRange}. The idea is to first find an good estimate $\hat \sigma$ of $\sigma$ using Algorithm \ref{alg:estvar}. By Theorem \ref{lem:varLabel}, $\hat \sigma$ is an upper bound on $\sigma$. We then run Algorithm \ref{alg:dataRange} which finds the range of the data when an upper bound on the variance is known.

\begin{algorithm}[H]
	\caption{Differentially private estimate of range with unknown variance}
	\begin{algorithmic}[1]
		\label{alg:dataRangeunkownVariance}
		\REQUIRE $x_1, \ldots, x_{n}$, $\eps$, $\alpha$, $R$, $\sigma_{\min}$, $\sigma_{\max}$.
		\ENSURE An $(\eps, \delta)$ differentially private estimate of the range of the data $x_1, \ldots, x_n$.
		\STATE If $$n <  c \cdot
		\min \left\{
		\max \left( 
		\frac{1}{\eps}\log \left(\frac{R}{\sigma_{\min} \alpha}\right), \frac{1}{\eps}\log \left(\frac{\log (\sigma_{\max}/\sigma_{\min})}{\alpha}\right)
		\right),
		\frac{1}{\eps}\log \left(\frac{1}{\delta \alpha}\right)
		\right\},
		$$
		Output $\perp$.
		\STATE Run Algorithm \ref{alg:estvar} of Lemma \ref{thm:binfinding} with $\eps_1 = \eps/2, \delta_1 = \delta/2$, $\alpha_1 = \alpha/2$, $\sigma_{\min}$ and $\sigma_{\max}$ on the data $x_1, \ldots, x_{n}$ to obtain an estimate of scale $\hat \sigma$.
		\STATE Run Algorithm \ref{alg:dataRange} with $\eps_2 = \eps/2$, $\delta_2 = \delta/2$, $\alpha_2 = \alpha/2$, $\sigma = \hat \sigma$ and $R$ to get $[X_{\min}, X_{\max}]$.
		\STATE Output $\hat \sigma$ and $[X_{\min}, X_{\max}]$.
	\end{algorithmic}
\end{algorithm}
By the composition property of differential privacy given in Lemma \ref{thm:compose}, and the fact that Algorithms \ref{alg:dataRange} and \ref{alg:estvar} are differentially private (from Theorems \ref{thm:rangeKnownVariance} and \ref{lem:varLabel}), it follows that Algorithm \ref{alg:dataRangeunkownVariance} is $(\eps, \delta)$ differentially private. 

Consider the event $E = \{\sigma < \hat \sigma <  8\sigma\} $. Under $E$, the results from Theorem \ref{thm:rangeKnownVariance} for the case of range finding when an upper bound on the variance is known can be used. Hence under the event $E$, we have if,
$$
n \geq   c \min \left\{\frac{1}{\eps}\log \left(\frac{R}{\sigma_{\min} \alpha}\right), \frac{1}{\eps}\log \left(\frac{1}{\delta \alpha}\right) \right\} 
\geq c \min \left\{\frac{1}{\eps}\log \left(\frac{R}{\hat \sigma \alpha}\right), \frac{1}{\eps}\log \left(\frac{1}{\delta \alpha}\right) \right\},
$$
(1) With probability at least $1 - \alpha/2$,
$$ \forall i, X_{\min} \leq X_i \leq X_{\max}, \mbox{ and} $$
(2) With probability 1 we have $|X_{\max} - X_{\min}| =\O{ \hat \sigma \sqrt{\log (n/\alpha)}} = \O{\sigma \sqrt{\log (n/\alpha)}}$
By the properties of the variance estimation algorithm given in Theorem \ref{lem:varLabel}, if 
	$$
	n \geq   c \min \left\{\frac{1}{\eps}
	\log \left(
	\frac{\log_2 \left(\frac{\sigma_{\max}}{\sigma_{\min}}\right)}{\alpha}
	\right), \frac{1}{\eps}\log \left(\frac{1}{\delta \alpha}\right) \right\},
	$$
	we have with probability at least $1 - \alpha/2$,
	$$\sigma \leq \hat \sigma \leq 8 \sigma.$$
	The results (1), (2), and (3) follow from a union bound.
	Moreover, by Theorem \ref{lem:varLabel}, if $$	
	n \geq   c \min \left\{\frac{1}{\eps}
	\log\left(\frac{\sigma_{\max}}{\sigma_{\min}}\right), \frac{1}{\eps}\log \left(\frac{1}{\delta \alpha}\right) \right\},
	$$
	then $\E{\hat \sigma^2} \leq \sigma^2 \cdot \left(c_1 + c_2 \log^2(n) \cdot \alpha\right)  $ for some universal constants $c_1$ and $c_2$.

\end{proof}
\section{Estimating a confidence interval of a mean with known variance}
\label{sec:ciknownvar}
In this section, we present algorithms to estimate a differentially private confidence interval of a mean of Gaussian distribution when the variance is known. When the variance is fixed and known, the non-private interval is $\bar{X} \pm (\sigma/\sqrt{n})z_{1 - \alpha/2}$ where $\bar{X}$ is the sample mean, $\sigma$ is the known standard deviation and $z_{1 - \alpha/2}$ is the $1 - \alpha/2$ quantile of a standard normal distribution. (See section \ref{sec:defs} for Definitions). Note that the interval is of fixed width namely $2\sigma/\sqrt{n} z_{1-\alpha/2} = \Theta \left((\sigma / \sqrt{n}) \cdot \sqrt{\log (n/\alpha)}\right)$. We will show that one can also obtain a fixed width $(\eps, \delta)$-DP confidence interval when the variance is known. 
\begin{theorem}
	\label{thm:ciknownvar}
	Let $\mathbb{I}$ be the set of all possible intervals in $\mathbb{R}$. For every $n \in \mathbb{N}$, $\sigma, \eps, \delta > 0$, $\alpha \in \left(0,1/2\right)$, $R \in (0, \infty]$, there is $\beta >0$ and an $(\eps,\delta)$-differentially private algorithm $M: \mathbb{R}^n \rightarrow \mathbb{I}$ such that 
	if $X_1, \ldots, X_n$ are iid random variables from $N(\mu,\sigma^2)$, with $\mu \in (-R,R)$ and 
	$I \leftarrow M(X_1, \ldots, X_n)$, then 
	\begin{align*}
	\prm[\underline{X} \sim N(\mu,\sigma^2)][M]{ I(X_1,\ldots, X_n) \ni \mu} &\geq 1-\alpha.
	\end{align*}
	Moreover, $|I| = \beta$ and if,
  	$$
	n \geq   c \min \left\{\frac{1}{\eps}\log \left(\frac{R}{\sigma \alpha}\right), \frac{1}{\eps}\log \left(\frac{1}{\delta \alpha}\right) \right\},
	$$
	(where $c$ is a universal constant), 
	then,
	\begin{align*}
	\beta &\leq \max \left\{
	\frac{\sigma}{\sqrt{n}}  \O{\sqrt{\log \left(\frac{1}{\alpha}\right)}},
	\frac{\sigma}{\epsilon n} \mathrm{polylog}\left(\frac{n}{\alpha}\right)
	\right\}.
	\end{align*}
\end{theorem}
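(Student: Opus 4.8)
The plan is to follow the Laplace-mechanism recipe of Section~\ref{sec:intro-conservative}, but with a \emph{privately estimated} clamping range in place of an analyst-supplied bound $B$. I would split the budget as $(\eps/2,\delta/2)$ across two stages. \textbf{Stage 1:} run the range finder $\Mrange$ of Theorem~\ref{thm:rangeKnownVariance} with parameters $\eps/2$, $\delta/2$, $\alpha/3$ to obtain $(X_{\min},X_{\max})$; by that theorem the interval has deterministic width $w=\O{\sigma\sqrt{\log(n/\alpha)}}$ and, whenever $n$ meets the stated sample-complexity bound, contains every data point with probability at least $1-\alpha/3$. \textbf{Stage 2:} clamp each point to this interval, $Y_i=\min\{X_{\max},\max\{X_{\min},X_i\}\}$, form $\bar Y=\frac1n\sum_i Y_i$, and release $\tilde\mu=\bar Y+Z$ with $Z\sim\Lap(2w/(\eps n))$. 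The output is the fixed-width interval $I=\tilde\mu\pm\beta/2$, with $\beta/2$ fixed below.

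For privacy I would invoke the composition lemma (Lemma~\ref{thm:compose}). Stage~1 is $(\eps/2,\delta/2)$-DP by Theorem~\ref{thm:rangeKnownVariance}. For Stage~2 the key point is that once the range $(X_{\min},X_{\max})$ is treated as a previously-released output $\omega$, the map $\underline x\mapsto\frac1n\sum_i\mathrm{clamp}(x_i,X_{\min},X_{\max})$ has global sensitivity $w/n$ (changing one coordinate moves a clamped value by at most $w$), so by Lemma~\ref{thm:LapMech} the noisy clamped mean is $(\eps/2,0)$-DP \emph{for every fixed} $\omega$. Composition then yields $(\eps,\delta)$-DP overall. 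This is precisely the structure demanded by Lemma~\ref{thm:compose}, and it is what legitimizes calibrating noise to a data-dependent range.

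For coverage I would union-bound three good events. Let $E_1$ be the event that no clamping occurs, so that $Y_i=X_i$ for all $i$ and $\bar Y=\bar X$; let $E_2=\{|\bar X-\mu|\le (\sigma/\sqrt n)\,z_{1-\alpha/6}\}$; and let $E_3=\{|Z|\le (2w/(\eps n))\log(3/\alpha)\}$. By Theorem~\ref{thm:rangeKnownVariance}, $\mathbb{P}(\overline{E_1})\le\alpha/3$; since $\bar X-\mu\sim N(0,\sigma^2/n)$ marginally, $\mathbb{P}(\overline{E_2})=\alpha/3$ exactly; and the Laplace tail gives $\mathbb{P}(\overline{E_3})\le\alpha/3$. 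On $E_1\cap E_2\cap E_3$ the triangle inequality yields $|\tilde\mu-\mu|=|\bar X-\mu+Z|\le (\sigma/\sqrt n)z_{1-\alpha/6}+(2w/(\eps n))\log(3/\alpha)=:\beta/2$, so $\mu\in I$; hence $\mathbb{P}(\mu\notin I)\le\alpha$. Substituting $w=\O{\sigma\sqrt{\log(n/\alpha)}}$ and $z_{1-\alpha/6}=\O{\sqrt{\log(1/\alpha)}}$, the first term is $\O{(\sigma/\sqrt n)\sqrt{\log(1/\alpha)}}$ and the second is $(\sigma/(\eps n))\,\mathrm{polylog}(n/\alpha)$, so $\beta$ is at most twice their maximum, as claimed; the sample-complexity requirement is inherited, up to constants, from Theorem~\ref{thm:rangeKnownVariance}.

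The main obstacle is the data-dependent clamping, which threatens the argument in two distinct ways that must be disentangled. First, calibrating Laplace noise to a range computed from the data could leak information; this is resolved by releasing the range first and treating it as fixed in the sensitivity calculation, so that Lemma~\ref{thm:compose} applies cleanly. Second, clamping can bias the empirical mean and thereby destroy coverage; this is resolved by the containment guarantee of Theorem~\ref{thm:rangeKnownVariance}, which ensures that on the high-probability event $E_1$ \emph{no} clamping occurs, so $\bar Y=\bar X$ with no bias---the bias is quarantined inside the failure probability $\alpha/3$. A secondary point worth stressing is that the privacy noise enters only the second term of $\beta$, which is why the dominant sampling term retains an $n$ rather than an $\eps n$ in its denominator. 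Finally, for $n$ below the sample-complexity threshold one cannot guarantee containment, so to preserve validity for \emph{all} $n$ the algorithm falls back to the trivial interval $(-R,R)$, which covers $\mu$ with probability one; this affects only the regime where the width bound is not claimed.
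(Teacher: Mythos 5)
Your proposal is correct and follows essentially the same route as the paper's own proof: privately estimate the range via Theorem~\ref{thm:rangeKnownVariance}, clamp, release a Laplace-noised clamped mean with sensitivity $w/n$, and union-bound the three failure events (sampling deviation, clamping, Laplace tail) with budget $\alpha/3$ each, falling back to $(-R,R)$ for small $n$. The only differences are cosmetic choices of how to split $\eps$ and name the events.
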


Theorem \ref{thm:ciknownvar} asserts that the $(\eps, \delta)$-differentially private algorithm produces a $(1-\alpha)$-fixed with confidence interval, no matter what the sample size is. Moreover, if $n$ is large enough, the width of confidence interval produced is the maximum of two terms, one of the terms being the width obtained without privacy which is $\Theta \left((\sigma/\sqrt{n})\cdot \sqrt{\log(1/\alpha)} \right)$, and the other term vanishing linearly in $n$. We will later show that a width of  $\Om{\sigma/(\eps n) \cdot \log(1/\alpha )}$ is necessary for privacy, (see Theorem \ref{thm:lowerbound} in Section \ref{sec:lowerbounds}.), so the second term cannot be significantly improved. 
 
\begin{proof}[Proof of Theorem \ref{thm:ciknownvar}]
We will show that the following algorithm produces the required confidence interval. First, we obtain a differentially private estimate of range of the data using Theorem \ref{thm:rangeKnownVariance}. The data is truncated to lie within this range and the mean of the truncated data is released using the Laplace mechanism calibrated to the estimated range.  Finally, the confidence interval is estimated using the noisy truncated mean by explicitly accounting for all sources of randomness. We present this algorithm followed by the proof of it's correctness.

\begin{algorithm}[H]
	\caption{Differentially private confidence interval with known variance.}
	\begin{algorithmic}[1]
		\label{alg:ciknownvar}
		\REQUIRE $X_1, \ldots, X_n$, $\alpha_0, \alpha_1, \alpha_2$, $\eps, \delta$, $\sigma$, $R$\\
		\ENSURE An $(\eps = \eps_1 + \eps_2, \delta)$-differentially private $(\alpha = \alpha_0+\alpha_1+\alpha_2)$-level confidence interval of $\mu$.
		
		\STATE If 	$$
		n <   c \min \left\{\frac{1}{\eps}\log \left(\frac{R}{\sigma \alpha_2}\right), \frac{1}{\eps}\log \left(\frac{1}{\delta \alpha_2}\right) \right\},
		$$
		output $(-R,R)$.
		\STATE Run the $(\eps_2, \delta)$ differentially private range estimation algorithm for known variance (Algorithm \ref{alg:dataRange}) on $(x_1, \ldots, x_n)$ to get a $(1 - \alpha_2)$ confident estimate $[X_{\min}, X_{\max}]$  of the range. Let $w_0 = X_{\max} - X_{\min}$.
		
		\STATE Let 
		$$ Y_i = \begin{cases}
		X_{i} & \text{if }X_i \in [X_{\min}, X_{\max}]\\
		X_{\max} & \text{if } X_i > X_{\max}\\
		X_{\min} & \text{if } X_i < X_{\min}
		\end{cases}$$
		
		
		\STATE Let 
		$$\tilde{\mu} = \frac{\sum_i{Y_i}}{n} + Z_1$$ where $Z_1$ is a Laplace random variable with mean $0$ and scale parameter 
		$$b_1  = \frac{w_0}{\eps_1 n}.$$
		
		\STATE Let 
		$$w = \frac{\sigma}{\sqrt{n}}z_{1-\alpha_0/2} + 
		b_1\log \left(\frac{1}{\alpha_1}\right) $$
		where $\alpha_0 + \alpha_1 + \alpha_2 = \alpha$, and $z_{1-\alpha_0/2}$ is the $(1-\alpha_0/2)$-quantile of a standard normal distribution.
		
		\STATE Output the interval: 
		$$\left( \tilde{\mu} - w, \tilde{\mu} + w \right).$$
	\end{algorithmic}
\end{algorithm}
In Step 1, if $$
n <  c \min \left\{\frac{1}{\eps}\log \left(\frac{R}{\sigma \alpha_2}\right), \frac{1}{\eps}\log \left(\frac{1}{\delta \alpha_2}\right) \right\},
$$
Algorithm \ref{alg:ciknownvar} outputs $(-R,R)$. This step is trivially $(\eps,\delta)$-differentially private. Moreover, since it is given that $\mu \in (-R,R)$, the interval is trivially an $(1-\alpha)$-level confidence interval. When the algorithm runs beyond Step 1, by composition (Theorem \ref{thm:compose}) and the privacy property of Laplace mechanism (Theorem \ref{thm:LapMech}), it follows that the algorithm is $(\eps,\delta)$-differentially private. We now prove that when the algorithm runs beyond Step 1, the interval produced in Step 6 is an $(1 - \alpha)$-level confidence interval, of fixed width $2w$ i.e.,
$$\prm[\underline{X} \sim N(\mu,\sigma^2)][M]{|\mu - \tilde{\mu}| < w} \geq 1-\alpha,$$ 
where $w$ is defined in Step 5 of the algorithm.
We have,
			\begin{align*}
			|\tilde{\mu}-\mu|  
			&= \left|\hat{\mu} + Z_1 + \frac{1}{n}\sum_i{(X_i - Y_i)} - \mu\right|  \\
			&\leq |\hat{\mu} - \mu| + \left|\frac{1}{n}\sum_i{(X_i - Y_i)}\right| + |Z_1|.
			\end{align*} 
We will now analyze these three terms. Note that
\begin{align*}
\pr{|\hat \mu - \mu| > \frac{\sigma}{\sqrt{n}}z_{1 - \alpha_0/2}} \leq \alpha_0,
\end{align*}
since $\hat \mu - \mu$ has a normal distribution with mean $0$ and variance $\sigma^2/n$ and $z_{1 - \alpha_0/2} = \phi^{-1}(1 - \alpha_0/2)$.
Next, 
\begin{align*}
\pr{\frac{1}{n} \left|\sum_i(X_i - Y_i)\right| > 0} \leq 1 - \pr{\forall i X_{\min} \leq X_i \leq X_{\max}} \leq \alpha_2,
\end{align*}
from Theorem \ref{thm:rangeKnownVariance}, since $[X_{\min}, X_{\max}]$ is a $(1 - \alpha_2)$ confident range of $X_1, \ldots, X_n$.
Finally,
\begin{align*}
\pr{|Z_1| > b_1\log\left(\frac{1}{\alpha_1}\right)} \leq \alpha_1,
\end{align*}
due to the tails of a Laplace distribution, Proposition \ref{prop:lap}. 
Thus, with probability at least $1 - (\alpha_0 + \alpha_1 + \alpha_2) = 1 -\alpha$, we have
\begin{align*}
|\tilde \mu - \mu| \leq \frac{\sigma}{\sqrt{n}}z_{1 - \alpha_0/2} + 0 + b_1\log\left(\frac{1}{\alpha_1}\right) = w,
\end{align*}
as desired.
To finish the proof, we need to upper bound the width of the confidence interval. Let $\alpha_0 =  \alpha_1 = \alpha_2 = \alpha/3$. From Theorem \ref{thm:rangeKnownVariance}, we have
$$w_0 = c \sigma \sqrt{\log \left(\frac{n}{\alpha_1}\right)}. $$
Hence, the width of the confidence interval is $2w$ where,
\begin{align*}
w 
&= \frac{\sigma}{\sqrt{n}}z_{1-\alpha_0/2} + 
\frac{c \sigma }{\eps n}
\sqrt{\log \left(\frac{n}{\alpha_1}\right)} 
\cdot \log\left( \frac{1}{\alpha_2}\right) \\
&=\frac{\sigma}{\sqrt{n}}z_{1-\alpha/6} + 
\frac{c \sigma }{\eps n}
\sqrt{\log \left(\frac{3n}{\alpha}\right)} 
\cdot \log\left( \frac{3}{\alpha}\right)
\end{align*}
By using the tail bound on the normal distribution in Proposition \ref{prop:gauss} and the relation between the tail bound and a quantile in Proposition \ref{prop:quantileBound}, we obtain the fact that $z_{1 - \alpha} \leq \sqrt{2 \log\left(1/\alpha\right)}$. 
Hence we get, 
\begin{align*}
w &\leq  \frac{\sigma}{\sqrt{n}} \O{\sqrt{\log \left(\frac{1}{\alpha}\right)}} + \frac{c_2\sigma}{\eps n} \sqrt{\log \left(\frac{n}{\alpha}\right)} \cdot \log \left(\frac{3}{\alpha}\right)\\
&= \frac{\sigma}{\sqrt{n}} \O{\sqrt{\log \left(\frac{1}{\alpha}\right)}} + \frac{\sigma}{\eps n}  \cdot \mathrm{polylog}{\left(\frac{n}{\alpha}\right)}
\end{align*}

\end{proof}


\paragraph{Nearly optimal width for finite sample sizes}
The differentially private confidence intervals obtained in Theorem \ref{thm:ciknownvar} have a multiplicative increase in their length when compared to a classical non-private confidence interval due to the hidden constant in the term $\sigma/\sqrt{n} \cdot \left(\sqrt{\log(1/\alpha)}\right)$. We show that one can avoid this behavior and obtain a differentially private confidence interval with an additive increase, i.e. the differentially private length is the sum of two terms: the first term is the non-private length and the second term that vanishes faster than the non-private length, with a mildly worse dependence on other parameters.

\begin{theorem}
	\label{thm:finitesample}
			
	 Let $\alpha_1 = \alpha_2 = \alpha/(2\sqrt{n})$. If, 
	$$n \geq c\min
	\left\{
	\frac{1}{\eps}\log \left(\frac{R}{\eps \sigma \alpha}\right),
	\frac{1}{\eps}\log \left(\frac{1}{\delta \alpha \eps}\right)
	\right\},$$ 
	then
	\begin{align*}
	w =\frac{\sigma}{\sqrt{n}}z_{1 - \alpha/2} + \frac{\sigma}{\eps n} \mathrm{polylog}\left(\frac{n}{\alpha}\right) 
	\end{align*}
The key point is that there is no hidden constant in the first term, which matches the length of the non-private confidence interval.
\end{theorem}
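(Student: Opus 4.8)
The plan is to invoke Algorithm~\ref{alg:ciknownvar} verbatim, exactly as in the proof of Theorem~\ref{thm:ciknownvar}, but to replace the balanced split $\alpha_0=\alpha_1=\alpha_2=\alpha/3$ by a deliberately lopsided one: push the two ``non-statistical'' failure probabilities---the range-finding failure $\alpha_2$ and the Laplace-tail failure $\alpha_1$---far below $\alpha$, taking $\alpha_1=\alpha_2=\alpha/(2\sqrt n)$ and hence $\alpha_0=\alpha-\alpha/\sqrt n$. Since the coverage guarantee of Theorem~\ref{thm:ciknownvar} reads $\pr{|\mu-\tilde\mu|<w}\ge 1-(\alpha_0+\alpha_1+\alpha_2)=1-\alpha$ and holds for \emph{every} admissible split, and since $(\eps,\delta)$-privacy is a property of the algorithm independent of the split, both the validity and privacy claims are immediate. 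The entire content of the statement is therefore (i) the bound on the half-width $w=\tfrac{\sigma}{\sqrt n}z_{1-\alpha_0/2}+b_1\log(1/\alpha_1)$, and (ii) verifying that the stated hypothesis on $n$ is enough to let the algorithm run past Step~1.

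The heart of the argument is showing that shrinking $\alpha_0$ from $\alpha$ down to $\alpha(1-1/\sqrt n)$ inflates the first term only additively, by $\O{\sigma/(\eps n)}$. Writing $z_p=\Phi^{-1}(p)$ and using $\tfrac{d}{dp}z_p=1/\phi(z_p)$, the mean value theorem gives, for some $\xi\in(1-\alpha/2,\,1-\alpha_0/2)$,
\[
z_{1-\alpha_0/2}-z_{1-\alpha/2}=\frac{(\alpha-\alpha_0)/2}{\phi(z_\xi)}=\frac{\alpha/(2\sqrt n)}{\phi(z_\xi)}\le\frac{\alpha/(2\sqrt n)}{\phi(z_{1-\alpha_0/2})},
\]
where the last step uses that $\phi(z_p)$ is decreasing on this range so its minimum sits at the upper endpoint. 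I would then lower-bound $\phi(z_{1-\alpha_0/2})$ via the Gaussian tail (Propositions~\ref{prop:gauss} and~\ref{prop:quantileBound}): since $1-\Phi(z)\ge\phi(z)/(2z)$ for $z$ bounded away from $0$, and $1-\Phi(z_{1-\alpha_0/2})=\alpha_0/2\ge\alpha/4$, one gets $\phi(z_{1-\alpha_0/2})=\Om{\alpha\, z_{1-\alpha/2}}$. Substituting yields $z_{1-\alpha_0/2}-z_{1-\alpha/2}=\O{1/(\sqrt n\, z_{1-\alpha/2})}$, so that
\[
\frac{\sigma}{\sqrt n}\,z_{1-\alpha_0/2}\le\frac{\sigma}{\sqrt n}\,z_{1-\alpha/2}+\O{\frac{\sigma}{n\, z_{1-\alpha/2}}}\le\frac{\sigma}{\sqrt n}\,z_{1-\alpha/2}+\frac{\sigma}{\eps n},
\]
using $z_{1-\alpha/2}\ge1$ and $\eps\le1$. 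This is precisely the desired ``non-private term plus a faster-vanishing additive term,'' with \emph{no} hidden constant on the leading piece.

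The second term is routine: by Theorem~\ref{thm:rangeKnownVariance} the estimated range satisfies $w_0=\O{\sigma\sqrt{\log(n/\alpha_1)}}$, so $b_1\log(1/\alpha_1)=\tfrac{w_0}{\eps_1 n}\log(1/\alpha_1)$, and substituting $\alpha_1=\alpha/(2\sqrt n)$ turns $\log(n/\alpha_1)$ and $\log(1/\alpha_1)$ into $\O{\log(n/\alpha)}$, giving $\tfrac{\sigma}{\eps n}\mathrm{polylog}(n/\alpha)$ as claimed; merging it with the $\sigma/(\eps n)$ contribution above leaves the width in the stated form. The one genuine bookkeeping subtlety---and the step I expect to be the main obstacle---is the sample-complexity hypothesis, because $\alpha_2=\alpha/(2\sqrt n)$ now feeds back into the range-finder's requirement $n\ge c\min\{\tfrac1\eps\log\tfrac{R}{\sigma\alpha_2},\,\tfrac1\eps\log\tfrac{1}{\delta\alpha_2}\}$, making $n$ appear on both sides through $\log\sqrt n=\tfrac12\log n$. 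I would resolve this self-referential inequality by the standard fact that $n\ge \tfrac{c}{2\eps}\log n$ holds once $n\gtrsim\tfrac1\eps\log(1/\eps)$, so that $\tfrac{c}{2\eps}\log n$ can be absorbed at the cost of an extra $\tfrac{c}{\eps}\log(1/\eps)$; folding $\log(1/\eps)$ into the existing logarithm is exactly what produces the $\eps$ inside the log in the hypotheses $n\ge c\min\{\tfrac1\eps\log\tfrac{R}{\eps\sigma\alpha},\,\tfrac1\eps\log\tfrac{1}{\delta\alpha\eps}\}$.
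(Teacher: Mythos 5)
Your proposal is correct and follows essentially the same route as the paper: the paper's proof also runs Algorithm~\ref{alg:ciknownvar} with the lopsided split $\alpha_1=\alpha_2=\alpha/(2\sqrt n)$, controls the quantile shift $z_{1-\alpha_0/2}-z_{1-\alpha/2}$ by the mean value theorem applied to $\Phi^{-1}$ with a bound on its derivative (Proposition~\ref{prop:asymptoticQuantile}, which gives $z_{1-\alpha_0/2}\le z_{1-\alpha/2}+c_1/\sqrt n$), and resolves the self-referential sample-complexity condition exactly as you describe, absorbing $\log\sqrt n$ at the cost of a $\log(1/\eps)$ inside the logarithm. One small slip: to lower-bound $\phi(z_{1-\alpha_0/2})$ you need the Mills-ratio inequality in the direction $1-\Phi(z)\le\phi(z)/z$ (equivalently $\phi(z)\ge z\,(1-\Phi(z))$), not $1-\Phi(z)\ge\phi(z)/(2z)$ as written, which bounds $\phi$ from above; with that correction your $\phi(z_{1-\alpha_0/2})=\Om{\alpha\, z_{1-\alpha/2}}$ and the rest of the argument go through.
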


We will first start with an asymptotic expansion of the quantile function of a standard normal variable, which is given in Proposition \ref{prop:asymptoticQuantile} below.
\begin{proposition}
	\label{prop:asymptoticQuantile}
	Let $z_{1 - p} = \phi^{-1}(1 - p)$ denote the quantile function of the standard normal distribution. For any $\alpha \in (0,1)$ and $ 0 < \Delta < \alpha$, we have
	\begin{align*}
		z_{1-(\alpha-\Delta)/2}
		&\leq z_{1-\alpha/2} + \frac{c_1\Delta}{\alpha - \Delta}
	\end{align*}	
	where $c_1$ is a fixed constant.
\end{proposition}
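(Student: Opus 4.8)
The plan is to bound the increment of the quantile function by controlling its derivative. Writing $\Phi$ for the standard normal cdf (this is the $\phi$ of the statement), $\varphi$ for its density, and $\bar\Phi = 1-\Phi$ for the upper tail, I would first record the elementary identity obtained by differentiating $\Phi(\Phi^{-1}(1-p)) = 1-p$:
$$\frac{d}{dp}\,\Phi^{-1}(1-p) = -\frac{1}{\varphi\bigl(\Phi^{-1}(1-p)\bigr)} = -\frac{1}{\varphi(z_{1-p})}.$$
Setting $p = (\alpha-\Delta)/2$ and $q=\alpha/2$ (so that $0 < p < q < 1/2$, since $0 < \Delta < \alpha < 1$), the fundamental theorem of calculus then gives
$$z_{1-p} - z_{1-q} = \int_p^q \frac{dt}{\varphi(z_{1-t})}.$$

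Next I would bound the integrand. On $(0,1/2)$ the quantile $z_{1-t}$ is positive and decreasing in $t$, so $\varphi(z_{1-t})$ is increasing and $1/\varphi(z_{1-t})$ is decreasing; it is therefore maximized at the left endpoint $t=p$, yielding
$$z_{1-p}-z_{1-q} \le (q-p)\cdot \frac{1}{\varphi(z_{1-p})} = \frac{\Delta}{2}\cdot\frac{1}{\varphi(z_{1-p})},$$
using $q-p = \Delta/2$. (Equivalently one can invoke the mean value theorem at some $\xi\in(p,q)$ and use $z_{1-\xi}\le z_{1-p}$.)

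The crux is to convert $1/\varphi(z_{1-p})$ back into an expression in $p = \bar\Phi(z_{1-p})$. For this I would use the monotonicity of the Mills ratio $m(x) = \bar\Phi(x)/\varphi(x)$: since $m'(x) = x\,m(x) - 1 < 0$ for all $x > 0$ (the inequality $x\,m(x) < 1$ being the standard upper tail bound $\bar\Phi(x) < \varphi(x)/x$), the ratio is decreasing and attains its maximum at $x=0$, namely $m(0) = \bar\Phi(0)/\varphi(0) = \sqrt{\pi/2}$. Applying this at $x = z_{1-p} > 0$ gives $\varphi(z_{1-p}) \ge \bar\Phi(z_{1-p})\sqrt{2/\pi} = p\sqrt{2/\pi}$, hence $1/\varphi(z_{1-p}) \le \sqrt{\pi/2}/p$. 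Substituting and using $p = (\alpha-\Delta)/2$ yields
$$z_{1-(\alpha-\Delta)/2} - z_{1-\alpha/2} \le \frac{\Delta}{2}\cdot\frac{\sqrt{\pi/2}}{(\alpha-\Delta)/2} = \sqrt{\pi/2}\cdot\frac{\Delta}{\alpha-\Delta},$$
so the claim holds with $c_1 = \sqrt{\pi/2}$.

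I expect the only genuine obstacle to be this Mills ratio step. A naive bound such as $\varphi(z_{1-p}) \ge p\, z_{1-p}$ (from $\bar\Phi(x)\le \varphi(x)/x$) degrades as $p\to 1/2$, where $z_{1-p}\to 0$, and so fails to give a uniform constant for $\alpha$ near $1$. Exploiting the monotonicity of $m$, i.e.\ that the extremal case is the center $x=0$ rather than the deep tail, is precisely what produces a clean universal $c_1$ valid across the whole range $\alpha\in(0,1)$. Everything else, namely the derivative identity, the sign bookkeeping, and the monotonicity of the integrand, is routine.
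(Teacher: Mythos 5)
Your proof is correct and follows essentially the same route as the paper: both arguments apply the mean value theorem (equivalently, the fundamental theorem of calculus) to the quantile function and then bound its derivative, $1/\varphi(z_{1-p})$, by a constant multiple of $1/p$. The only difference is in that last step --- you establish $\varphi(z_{1-p})\ge p\sqrt{2/\pi}$ via monotonicity of the Mills ratio, whereas the paper derives $e^{z_{1-p}^2/2}\le 1/p$ from the Gaussian tail bound $z_{1-p}\le\sqrt{2\log(1/p)}$; both yield a universal constant (yours, $c_1=\sqrt{\pi/2}$, is smaller by a factor of $2$, which is immaterial here).
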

\begin{proof}
	
	Let $q(p) = z_{1-p/2} = \phi^{-1}(1 - p/2)$. By the mean value theorem, we have
	\begin{align*}
		q\left(\alpha- \Delta\right) = q\left(\alpha\right) - \Delta q'\left(c\right)
	\end{align*}
	for some $c \in [ \alpha - \Delta, \alpha]$.
	One can show that 
	$$q'(p) = -\sqrt{\frac{\pi}{2}}\cdot e^{q(p)^2/2}$$ and
	Also, $q(p) = \phi^{-1}(1 - p/2) \leq \sqrt{2\log(2/p)}$, which can be obtained by applying Proposition \ref{prop:quantileBound} to a Gaussian tail bound. 
	Hence we have,
	\begin{align*}
		q(\alpha - \Delta) &= q(\alpha) + \sqrt{\frac{\pi}{2}}\cdot e^{q(c)^2/2} \cdot \Delta \\
		&\leq q(\alpha) + \sqrt{\frac{\pi}{2}}\cdot e^{q(\alpha-\Delta)^2/2} \cdot \Delta, \\
		&\mbox{ (Since $q(p)$ is decreasing in $[\alpha-\Delta, \alpha]$) } \\
		&\leq q(\alpha) + \sqrt{\frac{\pi}{2}}\cdot \frac{2}{\alpha-\Delta}\cdot \Delta  \mbox{ ( Since $q^2(p) \leq 2\log(2/p)$ )}
	\end{align*}	
\end{proof}

\begin{proof}[Proof of Theorem \ref{thm:finitesample}]
Let $\alpha_1 = \alpha_2 = \alpha/(2\sqrt{n})$. Hence $\alpha_0 = \alpha - \alpha/\sqrt{n}$.
Applying Proposition \ref{prop:asymptoticQuantile} with $\Delta = \alpha/\sqrt{n}$, we get,
	\begin{align*}
	z_{1 - \alpha_0/2} &\leq z_{1 - \alpha/2} + \frac{c_1}{\sqrt{n}}
	\end{align*}	
	The width of the confidence interval is
	\begin{align*}
	w &= \frac{\sigma}{\sqrt{n}} z_{1 - \alpha_0/2} + \frac{c\sigma}{\eps n}\sqrt{\log \left(\frac{n}{\alpha_1}\right)}\cdot \log \left(\frac{1}{\alpha_2}\right) \\
	&\leq \frac{\sigma}{\sqrt{n}} \cdot \left( z_{1 - \alpha/2} + \frac{c_1}{\sqrt{n}} \right) + \frac{c\sigma}{\eps n}\sqrt{\log \left(\frac{n\sqrt{n}}{\alpha}\right)}\cdot \log \left(\frac{\sqrt{n}}{\alpha}\right)\\
	&\leq \frac{\sigma}{\sqrt{n}}z_{1 - \alpha/2} + \frac{\sigma}{\eps n} \mathrm{polylog}\left(\frac{n}{\alpha}\right) 
	\end{align*}
	Finally, we need 
	$$ n \geq c_1\min
	\left\{
	\frac{1}{\eps}\log \left(\frac{R\sqrt{n}}{\sigma \alpha}\right),
	\frac{1}{\eps}\log \left(\frac{\sqrt{n}}{\delta \alpha }\right)
	\right\}.
	$$
	For this, it suffices to have
	$$ n \geq c_2\min
	\left\{
	\frac{1}{\eps}\log \left(\frac{R}{\eps \sigma \alpha}\right),
	\frac{1}{\eps}\log \left(\frac{1}{\delta \alpha \eps}\right)
	\right\}.
	$$
%
	
\end{proof}

\section{Estimating a confidence interval of a mean with unknown variance}

\label{sec:ciunkownvar}
In this section, we present a differentially private algorithm for estimating the confidence interval of the mean of a normal population when the variance is unknown. Let us first recall the interval in the non-private case. The  standard confidence interval for $\mu$ in the unknown variance case is given by $\bar{X} \pm s/\sqrt{n} \cdot t_{n-1,\alpha/2}$, where 
$$s^2 = \frac{1}{n-1} \sum_{i=1}^n (X_i - \bar{X})^2,$$
is the sample variance, $t_{n-1,\alpha/2}$ is the $\alpha/2^{th}$ quantile of a $t$-distribution with $n-1$ degrees of freedom (see Section \ref{sec:defs} for definitions.)

Note that unlike the known variance case, we need to use the sample variance $s^2$ as an estimate of $\sigma^2$, and to account for this, we need to replace the $z$-quantile with a $t$-quantile. Moreover, when the variance is unknown, the optimal confidence interval is no longer of fixed length, in contrast with the known variance case. The expected length of the interval is given by  
$$\frac{2\sigma}{\sqrt{n}}\cdot k_n\cdot t_{n-1,1 - \alpha/2}= \Theta \left(\sigma \sqrt{\log(1/\alpha)/n}\right),$$ 
where 
$$k_n = \sqrt{\frac{2}{n-1}}\cdot \frac{\Gamma\left(\frac{n}{2}\right)}{\Gamma\left(\frac{n-1}{2}\right)} = \O{1 - \frac{1}{n}}.$$
(see for example \cite{lehmann2006testing} and \cite{Bolch}.)

Let us now consider a differentially private algorithm for estimating the confidence interval when the variance is unknown. Recall that in the known variance case given in Algorithm \ref{alg:ciknownvar}, we first estimate the range of the data using $\sigma$, and then estimate the mean of the data truncated to the estimated range, using a Laplace mechanism. A conservative confidence interval is estimated by considering all sources of randomness, including the noise due to privacy, and the range estimation and truncation step. When $\sigma$ is unknown, we proceed by using Algorithm \ref{alg:estvar} to get a crude estimate of $\sigma$. Once we have an estimate of $\sigma$, we proceed as in the known variance case to estimate the range,followed by using a Laplace mechanism to compute the mean of the data truncated to the estimated range. For construction of a differentially private confidence interval in the unknown variance case, we also need a conservative estimate of the sample variance of the truncated data using Laplace Mechanism. Finally,  the analysis of the algorithm takes into account all sources of randomness to ensure the required coverage is obtained.
Algorithm \ref{alg:ciunknownvar} describes the estimator of confidence interval with unknown variance.

\begin{theorem}
	\label{thm:ciunknownvar}
	Let $\mathbb{I}$ be the set of all possible intervals in $\mathbb{R}$. For every $n \in \mathbb{N}$, $\sigma_{\min} < \sigma_{\max} \in [0,\infty],\eps, \delta > 0$, $\alpha \in \left(0,1/2\right)$, $R \in [0,\infty)$, there is an $(\eps,\delta)$-differentially private algorithm $M: \mathbb{R}^n \rightarrow \mathbb{I}$ such that 
	if $X_1, \ldots, X_n$ are iid random variables from $N(\mu,\sigma^2)$, where $\mu \in (-R,R)$ and $\sigma \in (\sigma_{\min} , \sigma_{\max})$,and 
	$I \leftarrow M(X_1, \ldots, X_n)$, then 
	\begin{align*}
	\prm[\underline{X} \sim N(\mu,\sigma^2)][M]{I(X_1,\ldots, X_n) \ni \mu} &\geq 1-\alpha.
	\end{align*}
	Moreover, if
	$$n \geq  \frac{c_1}{\eps} 
	\min \left\{
	\max \left\{
	\log \left(\frac{R}{\sigma_{\min}}\right), \log \left(\frac{\sigma_{\max}}{\sigma_{\min}}\right)
	\right\},
	\log \left(\frac{1}{\delta}\right)
	\right\} + \frac{c_2}{\eps}\log \left(\frac{\log \left(\frac{1}{ \eps}\right)}{\alpha}\right).
	$$		
(where $c_1$ and $c_2$ are universal constants), 
	then,
	\begin{align*}
	\beta := \Em[\underline{X} \sim N(\mu,\sigma^2)][M]{|I(X_1,\ldots, X_n)|} &\leq \max \left\{
	\frac{\sigma}{\sqrt{n}} \O{ \sqrt{\log\left(\frac{1}{\alpha}\right)}},
	\frac{\sigma}{\epsilon n} \mathrm{polylog} \left(\frac{1}{\alpha}\right)
	\right\}
	\end{align*}
\end{theorem}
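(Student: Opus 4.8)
The plan is to mimic the known-variance Algorithm \ref{alg:ciknownvar}, replacing the known $\sigma$ by the differentially private estimate $\hat\sigma$ produced by the unknown-variance range finder of Theorem \ref{thm:rangeunkownvar}. Concretely, I would first run Algorithm \ref{alg:dataRangeunkownVariance} with budget $(\eps/2,\delta)$ and a failure parameter $\alpha_3$ to be fixed below, obtaining a triple $(\hat\sigma, X_{\min}, X_{\max})$; clamp the data to $[X_{\min},X_{\max}]$ to get $Y_1,\ldots,Y_n$; release $\tilde\mu = \bar Y + Z_1$ with $Z_1 \sim \Lap(w_0/(\eps' n))$ and $w_0 = X_{\max}-X_{\min}$ using the remaining budget $\eps'=\eps/2$; and output $(\tilde\mu - w, \tilde\mu + w)$ with half-width
\begin{align*}
w = \frac{\hat\sigma}{\sqrt{n}}\, z_{1-\alpha_0/2} + \frac{w_0}{\eps' n}\log\left(\frac{1}{\alpha_1}\right).
\end{align*}
For the weak bound being proved it suffices to use the normal quantile $z$ together with the crude estimate $\hat\sigma$ rather than a $t$-quantile and the sample variance; the latter refinement is only needed for the sharper additive bound stated after the theorem. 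Privacy is immediate: the range finder is $(\eps/2,\delta)$-differentially private by Theorem \ref{thm:rangeunkownvar}, the Laplace step is $(\eps/2,0)$-differentially private by Lemma \ref{thm:LapMech} since the clamped mean has sensitivity $w_0/n$, and Lemma \ref{thm:compose} yields $(\eps,\delta)$-differential privacy overall.

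For coverage I would intersect four good events: (i) $\sigma \leq \hat\sigma \leq 8\sigma$ (Theorem \ref{thm:rangeunkownvar}, part 3, failing w.p. $\leq \alpha_3$); (ii) all points lie in $[X_{\min},X_{\max}]$, so no clamping occurs and $\bar Y = \bar X$ (part 1, failing w.p. $\leq \alpha_2$); (iii) $|\bar X - \mu| \leq (\sigma/\sqrt n)\,z_{1-\alpha_0/2}$ (exact, since $\bar X \sim N(\mu,\sigma^2/n)$, failing w.p. $\alpha_0$); and (iv) $|Z_1| \leq (w_0/(\eps' n))\log(1/\alpha_1)$ (Laplace tail, failing w.p. $\leq \alpha_1$). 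On the intersection,
\begin{align*}
|\tilde\mu - \mu| \leq |\bar X - \mu| + |Z_1| \leq \frac{\sigma}{\sqrt n}\,z_{1-\alpha_0/2} + \frac{w_0}{\eps' n}\log\left(\frac{1}{\alpha_1}\right) \leq w,
\end{align*}
where the final inequality uses $\hat\sigma \geq \sigma$ from event (i) to dominate the first summand of $w$. A union bound then gives coverage $1-(\alpha_0+\alpha_1+\alpha_2+\alpha_3) = 1-\alpha$ once the $\alpha_i$ are chosen to sum to $\alpha$; note that conservativeness is bought precisely by using the \emph{upper} estimate $\hat\sigma$ rather than the true $\sigma$.

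The real work is the expected-length bound, and this is the step I expect to be the main obstacle. Since $w_0 = O(\hat\sigma\sqrt{\log(n/\alpha)})$ holds deterministically given $\hat\sigma$ and $z_{1-\alpha_0/2} = O(\sqrt{\log(1/\alpha)})$, the half-width is linear in $\hat\sigma$, so
\begin{align*}
\mathbb{E}[|I|] = 2\,\mathbb{E}[w] = O\!\left(\frac{\mathbb{E}[\hat\sigma]}{\sqrt n}\sqrt{\log\tfrac{1}{\alpha}} + \frac{\mathbb{E}[\hat\sigma]}{\eps n}\,\mathrm{polylog}\!\left(\tfrac{n}{\alpha}\right)\right),
\end{align*}
and everything reduces to proving $\mathbb{E}[\hat\sigma] = O(\sigma)$. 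Here the \emph{expectation} guarantee of Theorem \ref{thm:rangeunkownvar} is essential, not merely the high-probability one: by Jensen, $\mathbb{E}[\hat\sigma] \leq \sqrt{\mathbb{E}[\hat\sigma^2]} \leq \sigma\sqrt{c_1 + c_2\log^2(n)\,\alpha_3}$, which is $O(\sigma)$ exactly when $\log^2(n)\,\alpha_3 = O(1)$. The rare, large values of $\hat\sigma$ arising from a mis-estimated variance bin are thus already absorbed into the second moment, so no separate tail argument for the length is required; this is what makes the argument clean. The price is that the variance estimator must be run with a tiny failure probability $\alpha_3 = \Theta(\min\{\alpha, 1/\log^2 n\})$ rather than $\Theta(\alpha)$. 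Feeding this into the histogram sample-complexity of Theorem \ref{thm:rangeunkownvar} forces a requirement of the form $n \geq (c/\eps)\log(\log^2(n)/(\delta\alpha))$, whose self-reference I would resolve using $n = \tilde\Theta(1/\eps)$ (so $\log n = \Theta(\log(1/\eps))$); the resulting $\log\log(1/\eps)$ from the variance-failure probability combines with the $\log(1/\alpha)$ coming from the coverage events $\alpha_0,\alpha_1,\alpha_2 = \Theta(\alpha)$ to give the stated additive term $(c_2/\eps)\log(\log(1/\eps)/\alpha)$.
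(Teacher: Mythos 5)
Your proposal is correct, and it establishes the stated bound by a genuinely lighter route than the paper. The paper's Algorithm \ref{alg:ciunknownvar} does not stop at the crude estimate $\hat\sigma$: after clamping and releasing the noisy mean, it additionally releases a Laplace-noised, deliberately upward-shifted estimate $\tilde s^2 = s_1^2 + Z_2 + b_2\log(1/\alpha_2)$ of the sample variance (spending a third share of the privacy budget, with sensitivity $w_0^2/(n-1)$), and sets the half-width to $\frac{\tilde s}{\sqrt n}t_{n-1,\alpha_0/2} + b_1\log(1/\alpha_1)$; the coverage proof then shows $\tilde s^2 \geq s^2$ on the good event so that the classical $t$-interval argument applies, and the length analysis (Proposition \ref{prop:upperboundvariance}) must control $\E{\tilde s}$ by splitting on the no-clamping event and bounding the contributions of $Z_1^2$, $Z_2$, and the shift. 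You instead reuse the already-available constant-factor estimate $\hat\sigma$ with a $z$-quantile, which makes coverage a one-line comparison ($\hat\sigma \geq \sigma$ on the good event) and reduces the expected-length bound to $\E{\hat\sigma} \leq \sqrt{\E{\hat\sigma^2}} = O(\sigma)$ via Jensen and the expectation clause of Theorem \ref{thm:rangeunkownvar} --- exactly the point where the choice $\alpha_3 = \Theta(\min\{\alpha, 1/\log^2 n\})$ and the resulting $\log(\log(1/\eps)/\alpha)$ term in the sample complexity enter, as you say. What the paper's heavier construction buys is precisely Theorem \ref{thm:finitesampleT}: with only an $8$-factor approximation to $\sigma$ the leading term of your interval is $\frac{\hat\sigma}{\sqrt n}z_{1-\alpha_0/2}$, which can exceed the non-private length by a constant factor, whereas the $\tilde s$/$t$-quantile interval achieves the non-private leading coefficient $k_n t_{n-1,1-\alpha/2}$ exactly, paying only an additive $\tilde{O}(\sigma/\eps n)$; you correctly flag this. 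Two minor bookkeeping points: the unconditional coverage claim for all $n$ still needs the fallback output $(-R,R)$ when $n$ is below threshold, and your events (i) and (ii) both come from a single run of the range finder, so they should share one failure parameter rather than two.
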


As in the case of known variance, Theorem \ref{thm:ciunknownvar} asserts that there exists an $(\eps, \delta)$-differentially private algorithm that produces a $(1-\alpha)$-confidence interval, no matter what the sample size is. Moreover, if $n$ is large enough, the width of the confidence interval is non-trivial. It is the maximum of two terms. The first term being the width of interval obtained without privacy which is $\Theta \left((\sigma/\sqrt{n})\cdot \sqrt{\log(1/\alpha)} \right)$, and hence is necessary. Using the fact that a lower bound of the known variance applies to the unknown variance, it will be shown that the second term of  $\Om{\sigma/(\eps n) \cdot \log(1/\alpha )}$ is also necessary for privacy. (See Theorem \ref{thm:lowerbound} in Section \ref{sec:lowerbounds}.) Thus we can match the lower bound upto polylog factors. 


\begin{proof}
	We first start with the algorithm:
	\begin{algorithm}[H]
		\caption{Differentially private confidence interval with unknown variance.}
		\begin{algorithmic}[1]
			\label{alg:ciunknownvar}
			\REQUIRE $X_1, \ldots, X_n$, $\alpha_0, \alpha_1, \alpha_2, \alpha_3$, $\eps_1, \eps_2, \eps_3, \delta$, $\sigma_{\min}, \sigma_{\max}$, $R$\\
			\ENSURE An $(\eps = \eps_1 + \eps_2 + \eps_3, \delta)$-differentially private $(1 - \alpha)$-level confidence interval of $\mu$, where $\alpha = \alpha_0+\alpha_1+\alpha_2+ \alpha_3$.
			
			\STATE If 		$$n <  c 
			\min \left\{
			\max \left( 
			\frac{1}{\eps_3}\log \left(\frac{R}{\sigma_{\min} \alpha_3}\right), \frac{1}{\eps_3}\log \left(\frac{\log_2 \frac{\sigma_{\max}}{\sigma_{\min}}}{\alpha_3}\right)
			\right),
			\frac{1}{\eps_3}\log \left(\frac{1}{\delta \alpha_3}\right)
			\right\},
			$$
			
			output $(-R,R)$.
			\STATE Run an $(\eps_3, \delta)$ differentially private range estimation algorithm for unknown variance from Theorem \ref{thm:rangeunkownvar} on $(X_1, \ldots, X_n)$ to get a $(1 - \alpha_3)$ confident estimate of the range $[X_{\min}, X_{\max}]$. Let $w_0 = X_{\max} - X_{\min}$.
			
			\STATE Let 
			$$ Y_i = \begin{cases}
			 X_{i} & \text{if }X_i \in [X_{\min}, X_{\max}]\\
			 X_{\max} & \text{if } X_i > X_{\max}\\
			 X_{\min} & \text{if } X_i < X_{\min}
			\end{cases}$$
			
			
			\STATE Let 
			$$\tilde{\mu} = \frac{\sum_i{Y_i}}{n} + Z_1$$ where $Z_1$ is a Laplace random variable with mean $0$ and scale parameter 
			$$b_1  = \frac{w_0}{\eps_1 n}.$$
			
			\STATE Truncate $\tilde{\mu}$ to lie in the interval $[X_{\min}, X_{\max}]$.
			\STATE Let $$s_1^2 =\frac{\sum_i (Y_i-\tilde{\mu})^2}{n-1}.$$ 
			
			\STATE Let $\tilde{s}^2 = s_1^2 + Z_2 + b_2\log \left(\frac{1}{\alpha_2}\right)$ where $Z_2 \sim Lap(0,b_2)$ and 
			$$ b_2  = \frac{w_0^2}{\eps_2\cdot (n-1)}.$$
			\STATE If $\tilde{s}^2 < 0$ or $\tilde{s}^2 > \sigma_{\max}^2$, set $\tilde{s}^2 = \sigma_{\max}^2$.
			\STATE 
			$$w = \frac{\tilde{s}}{\sqrt{n}}t_{n-1,\alpha_0/2} + 
			b_1\log \left(\frac{1}{\alpha_1}\right) $$
			where $\alpha_0 + \alpha_1 + \alpha_2 + \alpha_3 = \alpha$, and $t_{n-1,\alpha_0/2}$ is the $(1-\alpha_0/2)$-quantile of a $t$-distribution with $n-1$ degrees of freedom.  (See Section \ref{sec:defs} for Definitions.) 
			
			\STATE Output the interval: 
			$$\left[ \tilde{\mu} - w, \tilde{\mu} + w \right].$$
		\end{algorithmic}
	\end{algorithm}
	
	In the algorithm, we need to partition $\epsilon = \eps_1 + \eps_2 + \eps_3$ and $\alpha = \alpha_0 + \alpha_1 + \alpha_2 + \alpha_3$. We will defer this choice to the step when we bound the expected length of the confidence interval. 
	
	If the condition in Step 1 is satisfied, the output is trivially an $(\eps, \delta)$-differentially private $(1 - \alpha)$-level confidence interval. So we focus on the case when the algorithm runs beyond Step 1. The proof has three parts, proof of privacy, coverage guarantee, and the expected length. We begin with the proof of privacy.
	
	\paragraph{Privacy:}  The proof of privacy is the same as in Theorem \ref{thm:ciknownvar}, except we also need to account for the computation of the estimate $\tilde s$ of the standard deviation. Note that the global sensitivity of $s_1^2$ in Step 6 is $w_0^2/(n-1)$, so $\tilde s^2$ is $(\eps_2,0)$-differentially private. By composition of differential privacy (Theorem \ref{thm:compose}) and the properties of Laplace mechanism (Theorem \ref{thm:LapMech}), it follows that the algorithm is $(\eps, \delta)$-DP. 
	
	\paragraph{Coverage:} Now we will show that the algorithm outputs an interval with the desired coverage.  
%
Let $E$ be the following event:
		$$
		\left\{ |\tilde \mu - \mu| \leq \frac{\tilde s}{\sqrt{n}}t_{n-1,\alpha_0/2} + b_1 \log\left(\frac{1}{\alpha_1}\right)
		\right\}
		$$
	We need to show $\pr{E} \geq 1 - \alpha$.
		Let $\hat{\mu} = (\sum_i X_i)/n$, and
		$$s^2 = \frac{1}{(n-1)}\sum_i (X_i - \hat{\mu})^2.$$
		Consider the following events:
		\begin{align*}
		E_0 & = \left\{ |\hat{\mu} - \mu| \leq \frac{s}{\sqrt{n}}t_{n-1,\alpha_0/2} \right\}\\
		E_1 &= \left\{|Z_1| \leq  	b_1\log \left( \frac{1}{\alpha_1} \right) \right\} \\
		E_2 &= \left\{|Z_2| \leq    b_2\log \left( \frac{1}{\alpha_2} \right) \right\}\\
		E_3 &= \left\{\forall i, X_i = Y_i \text{ and }  1 \leq \frac{\hat \sigma}{\sigma} \leq 8\right\}\\ 
		\end{align*}
		\paragraph{Claim 1} $E_0 \cap E_1 \cap E_2 \cap E_3 \implies E$
		\begin{proof}[Proof of Claim 1]
			Assuming that $E_0, E_1, E_2, E_3$ all hold. We will first show that $\tilde s^2$ is a conservative estimate of $s^2$.
			From $E_3$, we have $X_i = Y_i$. Hence $\tilde \mu = \hat \mu+ Z_1$ and 
			\begin{align*}
			s_1 ^2 &= \frac{\sum_i{(Y_i - \hat \mu + Z_1)^2}}{n-1} \\
			&= \frac{\sum_i (X_i - \hat \mu)^2}{n-1} + \frac{n}{n-1}\cdot Z_1^2 + 2 \frac{\sum_i (X_i - \hat \mu)}{n-1}\\
			&= \frac{\sum_i (X_i - \hat \mu)^2}{n-1} + \frac{n}{n-1} \cdot Z_1^2 + 2 \frac{\sum_i X_i - n \hat \mu}{n-1}\\
			& = s^2 + \frac{n}{n-1}\cdot Z_1^2.
			\end{align*}
			We have 
			\begin{align}
			\tilde{s}^2 &= s_1^2 + Z_2 + b_2\log \left(\frac{1}{\alpha_2}\right) \nonumber \\
			& = s^2 + \frac{n}{n-1}Z_1^2 + Z_2 + b_2\log \left(\frac{1}{\alpha_2}\right) \label{eq:sunderA}\\
			& > s^2, \mbox{ since by $E_2$, $|Z_2| \leq b_2 \log \left(\frac{1}{\alpha_2}\right)$ } \nonumber
			\end{align}
			Hence, since $s^2 \leq \tilde s^2$, from $E_0$ we have, 
			$$|\hat{\mu} - \mu| \leq \frac{s}{\sqrt{n}}t_{n-1,\frac{\alpha_0}{2}} \leq \frac{\tilde{s}}{\sqrt{n}}t_{n-1,\frac{\alpha_0}{2}}$$
			Finally, using $E_1, E_3$ and the fact that 
			$$|\tilde{\mu} - \mu| \leq |\hat{\mu}-\mu|+ \left|\frac{\sum_i(X_i - Y_i)}{n}\right|+ |Z_1|,$$ we get,
			$$|\tilde{\mu} - \mu| \leq  \frac{\tilde{s}}{\sqrt{n}}t_{n-1,\frac{\alpha_0}{2}} + b_1\log \left(\frac{1}{\alpha_1}\right)$$	
		\end{proof}
Thus, by union bound, to prove that $\pr{E} > 1 - \alpha$, it is enough to show the following claim:
		\paragraph{Claim 2} $\pr{E_0} \geq 1 - \alpha_0$, $\pr{E_1} \geq 1 - \alpha_1$, $\pr{E_2} \geq 1- \alpha_2$ and $\pr{E_3} \geq 1 - \alpha_3$.
		\begin{proof}[Proof of Claim 2]
			By the properties of a non-private confidence interval, (see for e.g. \citep{lehmann2006testing}), we have, $\pr{E_0} \geq 1-\alpha_0$. By tail properties of a Laplace distribution given in Proposition \ref{prop:lap}, we have  $\pr{E_1} \geq 1-\alpha_1$, $\pr{E_2} \geq 1- \alpha_2$.  Finally, by Theorem \ref{thm:rangeunkownvar}, if 	
				$$n \geq  c 
				\min \left\{
				\max \left( 
				\frac{1}{\eps_3}\log \left(\frac{R}{\sigma_{\min} \alpha_3}\right), \frac{1}{\eps_3}\log \left(\frac{\log (\sigma_{\max}/\sigma_{\min})}{\alpha_3}\right)
				\right),
				\frac{1}{\eps_3}\log \left(\frac{1}{\delta \alpha_3}\right)
				\right\},
				$$ then,
			$$\pr{\forall i, X_i = Y_i} = \pr{ \forall i, X_{\min} \leq X_i \leq X_{\max}} \geq 1 - \alpha_3$$
		\end{proof}
	
	\paragraph{Length of the Interval:} 
	We now need to upper bound the expected length of the interval. Recall that the length of the interval is $2w$, where 
	$$w = \frac{\tilde{s}}{\sqrt{n}}t_{n-1,\alpha_0/2} + b_1\log \left(\frac{1}{\alpha_1}\right). $$
	Here, $\tilde s^2$ and $b_1$ are the only random variables; The following proposition upper bounds the expectation of both these terms for a specific choice of $\alpha_0,\alpha_1, \alpha_2,$ and $\alpha_3$.
	
	\begin{proposition}
		\label{prop:upperboundvariance}
			Let 
			$$\alpha_3 = \min \left\{ \frac{\alpha}{4}, \frac{1}{\log^2 n}\right\},$$
				$\alpha_0 = \alpha_1 = \alpha_2 = (\alpha-\alpha_3)/3$, and $\eps_1 = \eps_2 = \eps_3 = \epsilon/3$, then we have
	\begin{align}
	\mathbb E [\tilde s] \leq k_n \cdot \sigma +  \sigma \cdot \O{\sqrt{\frac{\log \left(\frac{n}{\alpha}\right)}{\eps n}}\cdot \sqrt{\log \left(\frac{1}{\alpha}\right)} }
	\end{align}
	where $$k_n = \sqrt{\frac{2}{n-1}}\cdot \frac{\Gamma\left(\frac{n}{2}\right)}{\Gamma\left(\frac{n-1}{2}\right)}$$ and
	\begin{align}
		\mathbb E[b_1] \leq \frac{\sigma}{\eps n} \cdot \O{\sqrt{\log(n/\alpha)}}.
	\end{align}	
	\end{proposition}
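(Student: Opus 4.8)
The plan is to bound the two expectations separately, with the common engine being the second-moment bound $\mathbb E[\hat\sigma^2]\le\sigma^2(c_1+c_2\log^2(n)\alpha_3)$ from Theorem~\ref{thm:rangeunkownvar} (equivalently Theorem~\ref{lem:varLabel}, Part~2). Because the proposition fixes $\alpha_3=\min\{\alpha/4,1/\log^2 n\}$, we have $\log^2(n)\alpha_3\le 1$ and hence $\mathbb E[\hat\sigma^2]=O(\sigma^2)$; by Jensen this also gives $\mathbb E[\hat\sigma]\le\sqrt{\mathbb E[\hat\sigma^2]}=O(\sigma)$. The bound on $\mathbb E[b_1]$ is then immediate: since $b_1=w_0/(\eps_1 n)$ and the range algorithm guarantees $w_0=X_{\max}-X_{\min}=O(\hat\sigma\sqrt{\log(n/\alpha)})$ with probability $1$ (Theorem~\ref{thm:rangeunkownvar}), and $\eps_1=\eps/3$, we get $\mathbb E[b_1]=O(\sqrt{\log(n/\alpha)}/(\eps n))\cdot\mathbb E[\hat\sigma]=\frac{\sigma}{\eps n}\cdot O(\sqrt{\log(n/\alpha)})$, which is the second display.

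The heart of the proof is the bound on $\mathbb E[\tilde s]$, and the key structural observation is that
$$s_1^2\le s^2+\tfrac{n}{n-1}Z_1^2$$
holds \emph{always}, not merely on the no-truncation event $E_3$ used in the coverage proof. Indeed, clamping the $X_i$ to $[X_{\min},X_{\max}]$ is a $1$-Lipschitz contraction, so $\sum_{i,j}(Y_i-Y_j)^2\le\sum_{i,j}(X_i-X_j)^2$ and the empirical variance of the $Y_i$ is at most $s^2$; writing $s_1^2=\frac1{n-1}\sum_i(Y_i-\bar Y)^2+\frac{n}{n-1}(\bar Y-\tilde\mu)^2$ and noting that the Step~5 truncation of $\tilde\mu$ only moves it toward $\bar Y\in[X_{\min},X_{\max}]$ (so $|\bar Y-\tilde\mu|\le|Z_1|$) yields the claim. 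Consequently $\tilde s^2=s_1^2+Z_2+b_2\log(1/\alpha_2)\le s^2+\frac{n}{n-1}Z_1^2+Z_2+b_2\log(1/\alpha_2)$ except on the clamp-from-below event $\{\tilde s^2<0\}$, and subadditivity of the square root gives $\tilde s\le s+\sqrt{\tfrac{n}{n-1}}\,|Z_1|+\sqrt{|Z_2|}+\sqrt{b_2\log(1/\alpha_2)}$. I would then take expectations term by term: $\mathbb E[s]=k_n\sigma$ (the $\chi^2$ computation), $\mathbb E[|Z_1|]=\mathbb E[b_1]$, $\mathbb E[\sqrt{|Z_2|}]=\Gamma(3/2)\,\mathbb E[\sqrt{b_2}]\le\Gamma(3/2)\sqrt{\mathbb E[b_2]}$, and $\mathbb E[\sqrt{b_2\log(1/\alpha_2)}]\le\sqrt{\log(1/\alpha_2)}\sqrt{\mathbb E[b_2]}$. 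Since $\mathbb E[b_2]=\mathbb E[w_0^2]/(\eps_2(n-1))=O(\log(n/\alpha)\cdot\mathbb E[\hat\sigma^2]/(\eps n))=O(\sigma^2\log(n/\alpha)/(\eps n))$ and $\alpha_2=\Theta(\alpha)$, the dominant correction is $\sqrt{\log(1/\alpha_2)}\sqrt{\mathbb E[b_2]}=\sigma\cdot O(\sqrt{\log(n/\alpha)\log(1/\alpha)/(\eps n)})$, exactly the claimed term, while the $|Z_1|$ and $\sqrt{|Z_2|}$ contributions are of equal or strictly smaller order.

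The main obstacle is the leftover clamp-to-$\sigma_{\max}$ contribution, the only place where truncation and a possibly enormous $\sigma_{\max}$ could hurt. On $\{\tilde s^2<0\}$ (the symmetric case $\{\tilde s^2>\sigma_{\max}^2\}$ only \emph{decreases} $\tilde s$ and is harmless) Step~8 sets $\tilde s=\sigma_{\max}$, contributing $\sigma_{\max}\cdot\Pr[\tilde s^2<0]$ to $\mathbb E[\tilde s]$. A crude Laplace-tail estimate gives only $\Pr[\tilde s^2<0]\le\frac12\alpha_2$, and $\sigma_{\max}\alpha_2$ is far too large when $\sigma_{\max}\gg\sigma$. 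The resolution, which I expect to be the technical crux, mirrors Case~2 of the proof of Theorem~\ref{lem:varLabel}, Part~2: conditioning on the selected variance bin, $\{\tilde s^2<0\}$ forces $Z_2<-s_1^2-b_2\log(1/\alpha_2)$, whose conditional probability is $\tfrac12\alpha_2 e^{-s_1^2/b_2}$, and this is non-negligible only when $\hat\sigma$ (hence $b_2$) is anomalously large. Lemma~\ref{thm:binfinding} bounds the probability of selecting such a high bin by $np_i+e^{-\Omega(\eps n)}$ (using $\max_k p_k>1/100$ from the proof of Theorem~\ref{lem:varLabel}), and weighting $\sigma_{\max}$ against these exponentially decaying selection probabilities, the hypothesis $n\ge\frac{c}{\eps}\log(\sigma_{\max}/\sigma_{\min})$ with $c$ large enough makes $\sigma_{\max}\cdot e^{-\Omega(\eps n)}$ negligible relative to the correction term. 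This is precisely why the $\log(\sigma_{\max}/\sigma_{\min})$ term appears in the sample-complexity hypothesis, and carrying out this tail accounting carefully—rather than the clean leading-order subadditivity estimate—is where the real work lies.
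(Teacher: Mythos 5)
Your proposal is correct in its main line and follows the same skeleton as the paper's proof: both reduce to a pointwise bound of the form $\tilde s^2 \le s^2 + (\text{noise terms})$, apply $\sqrt{a+b+c}\le\sqrt a+\sqrt b+\sqrt c$ together with Jensen to get $\mathbb E[\tilde s]\le \mathbb E[s]+\sum_j\sqrt{\mathbb E[\cdot]}$, use $\mathbb E[s]=k_n\sigma$, and control $\mathbb E[w_0^2]=O(\sigma^2\log(n/\alpha))$ through the expectation bound $\mathbb E[\hat\sigma^2]=O(\sigma^2)$ from Theorem~\ref{thm:rangeunkownvar} (which is exactly what the choice $\alpha_3=\min\{\alpha/4,1/\log^2 n\}$ is for); your $\mathbb E[b_1]$ bound is identical to the paper's. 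Two differences are worth recording. First, where the paper splits on the no-truncation event $A=\{\forall i,\ X_i=Y_i\}$, using the exact identity $s_1^2=s^2+\tfrac{n}{n-1}Z_1^2$ on $A$ and a crude bound $\tilde s^2\le Y_2$ on $A^c$ and then adding the two, you derive the one-sided inequality $s_1^2\le s^2+\tfrac{n}{n-1}Z_1^2$ unconditionally from the $1$-Lipschitz contraction property of clamping and the fact that the Step-5 projection moves $\tilde\mu$ toward $\bar Y$. That observation is correct and slightly cleaner; it makes the separate $\sqrt{\mathbb E[Y_2]}$ term (which is lower order anyway) unnecessary. Second, you explicitly raise the Step-8 event $\{\tilde s^2<0\}$, on which the algorithm resets $\tilde s^2$ to $\sigma_{\max}^2$ and a potential $\sigma_{\max}\cdot\Pr[\tilde s^2<0]$ contribution appears; the paper's proof passes over this case in silence (its inequality $\tilde s^2\le s^2+Y_1+Y_2$ is asserted without accounting for the reset). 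Your proposed repair---paying $\sigma_{\max}$ only against the probability $\tfrac12\alpha_2 e^{-s_1^2/b_2}$ that the Laplace noise overwhelms $s_1^2+b_2\log(1/\alpha_2)$, combined with the bin-selection tails of Lemma~\ref{thm:binfinding} and the hypothesis $n\ge(c/\eps)\log(\sigma_{\max}/\sigma_{\min})$---has the right shape, but as you yourself note it is only a sketch, and it is the one place where your argument (like the paper's) is not yet complete.
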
 
	Deferring the proof of Proposition \ref{prop:upperboundvariance} to the end, let us upper bound $w$.
	Using the settings of $\alpha_i$'s specified by Proposition \ref{prop:upperboundvariance}, the expected value of $w$ is:
		\begin{align}
		\label{eq:expectedwidth}
		\mathbb E[w] 
		= 		&\frac{\mathbb E[\tilde{s}]}{\sqrt{n}}\cdot t_{n-1,\alpha_0/2} + \mathbb E[b_1] \cdot \log \left(\frac{1}{\alpha_1}\right) \nonumber \\ 
		\leq	&	\frac{\sigma}{\sqrt{n}} \cdot t_{n-1,\alpha_0/2}\left[ k_n +  \O{\sqrt{\frac{\log(n/\alpha)}{\eps n}} \cdot \sqrt{\log \left(\frac{1}{\alpha}\right)} }  \right] \nonumber \\
		&+ 
		\O{\frac{\sigma}{\eps n}\sqrt{\log \left(\frac{n}{\alpha}\right) } \cdot  \log \left(\frac{1}{\alpha}\right)} \nonumber \\
		\leq	& k_n \cdot \frac{\sigma}{\sqrt{n}}t_{n-1,\alpha_0/2} + \frac{\sigma}{\eps n} \cdot t_{n-1,\alpha_0/2} \cdot \mathrm{polylog} \left(\frac{n}{\alpha}\right) 
		\end{align}
		From Proposition \ref{prop:tailBoundT} Part (2), if $n \geq \O{\log(1/\alpha_0)} = \O{\log(1/\alpha)}$, we have,
		\begin{align*}
		t_{n-1,\alpha_0/2} &\leq \O{\sqrt{\log(1/\alpha_0)}} = \O{\sqrt{\log(1/\alpha)}},
		\end{align*}
		since $\alpha_0 \in [\alpha/4, \alpha/3]$.
		Substituting this in equation \ref{eq:expectedwidth}, we get
		\begin{align*}
		\E{w} &\leq \frac{k_n\sigma}{\sqrt{n}}\O{\sqrt{\log(1/\alpha)}} + \frac{\sigma}{\eps n} \O{\sqrt{\log(1/\alpha)}} \mathrm{polylog} \left(\frac{n}{\alpha}\right)  \\
		&=\frac{k_n\sigma}{\sqrt{n}}\O{\sqrt{\log(1/\alpha)}} + \frac{\sigma}{\eps n} \mathrm{polylog} \left(\frac{n}{\alpha}\right)
		\end{align*}
		Finally, let us verify the conditions on the sample size $n$. We have used the following two conditions on $n$: We need 
		$$n \geq  \frac{c}{\eps_3} 
		\min \left\{
		\max \left( 
		\log \left(\frac{R}{\sigma_{\min} \alpha_3}\right), \log \left(\frac{\log (\sigma_{\max}/\sigma_{\min})}{\alpha_3}\right)
		\right),
		\log \left(\frac{1}{\delta \alpha_3}\right)
		\right\},
		$$ and from, Theorem \ref{thm:rangeunkownvar} to obtain the upper bound on the expectation of the variance, we need 
		$$n \geq \frac{c}{\eps_3} \min 
		\left\{
		\log \left(\frac{\sigma_{\max}}{\sigma_{\min}}\right), 
		\log \left(\frac{1}{\alpha_3 \delta}\right) 
		\right\}.
		$$
		Since $\alpha_3 = \min \left\{\alpha/4, 1/\log^2 n \right\}$ and $\eps_3 = \eps/3$, it suffices to require
		$$n \geq  c 
		\min \left\{
		\max \left( 
		\frac{1}{\eps}\log \left(\frac{R\log(1/\eps)}{\sigma_{\min} \alpha}\right), \frac{1}{\eps}\log \left(\frac{\sigma_{\max}\log (1/\eps)}{\sigma_{\min}\alpha}\right)
		\right),
		\frac{1}{\eps}\log \left(\frac{\log (1/\eps)}{\delta \alpha}\right)
		\right\}.
		$$
		Taking the common terms out, we get,
		$$n \geq  c 
		\min \left\{
		\max \left\{ 
		\log \left(\frac{R}{\sigma_{\min}}\right), \log \left(\frac{\sigma_{\max}}{\sigma_{\min}}\right)
		\right\},
		\log \left(\frac{1}{\delta}\right)
		\right\} + \frac{c}{\eps}\log \left(\frac{\log(1/\eps)}{\alpha} \right).
		$$
	All that remains is a proof of Proposition \ref{prop:upperboundvariance}, which is given below.
	\begin{proof}
	Let $A = \left\{\forall i, X_i = Y_i \right\}.$ 
	Under the event $A$, from equation \ref{eq:sunderA}, we have,
		\begin{align*}
		\tilde s^2	& = s^2 + \frac{n}{n-1}Z_1^2 + Z_2 + b_2\log \left(\frac{1}{\alpha_2}\right)\\
		            & := s^2 + Y_1.
		\end{align*}
  Under the event $A^c$, we can use the following upper bound of $\tilde s^2$:
	$$ \tilde s^2 \leq \frac{w_0^2}{\eps_2 \cdot (n-1)} := Y_2.$$
	 Let $\mathbb I_{A}$ be the indicator function of event $A$. We have,	
	\begin{align*}
		\tilde s^2     = 	&  \tilde s^2 \cdot \mathbb I_A + \tilde s^2 \cdot \mathbb I_{A^c} \\
					\leq 	&  (s^2 + Y_1) \cdot \mathbb I_A +  Y_2 \cdot \mathbb I_{A^c}  \\
					\leq 	&	s^2 + Y_1 + Y_2.
	\end{align*}
Using the inequality $\sqrt{a + b + c} \leq \sqrt{a} + \sqrt{b} + \sqrt{c}$ for non-negative reals and Jensen's inequality for $f(x) = \sqrt{x}$, we have,
	\begin{align}
	\label{eq:divideS^2}
		\mathbb E[\tilde s]  \leq \E{ s } + \sqrt{\E{Y_1}} + \sqrt{\E{Y_2}}. 		
	\end{align}
		Let us compute the expectation of the three terms in equation \ref{eq:divideS^2}. 
		
		\noindent \textbf{(1)} First note that 
		$\E{s} = k_n \sigma,$ where $$k_n = \sqrt{\frac{2}{n-1}}\cdot \frac{\Gamma\left(\frac{n}{2}\right)}{\Gamma\left(\frac{n-1}{2}\right)}.$$
		(See, for example, \cite{Bolch}.)
		
		\noindent \textbf{(2)} Next, we have
		\begin{align*}
		\mathbb E[Y_1]	& = \E{\frac{n}{n-1} \cdot Z_1^2 + Z_2 + b_2\log \left(\frac{1}{\alpha_2}\right)}\\
		&=  \frac{2n}{n-1}\cdot \mathbb E[b_1^2] + 0 + \mathbb E[b_2]\cdot \log \left(\frac{1}{\alpha_2}\right) \\
		& = \frac{2n}{n-1}\cdot \mathbb E\left[\frac{w_0^2}{\eps_1^2 n^2} \right] + \mathbb E\left[\frac{w_0^2}{\eps_2 \cdot (n-1)} \right]\cdot \log \left(\frac{1}{\alpha_2}\right)
		\end{align*}
		Note that $b_1$ and $b_2$,  the scales of $Z_1$ and $Z_2$ respectively, are also random. We have used the fact that if $Z$ is a Laplace random variable with mean $0$ and scale $b$, then $\E{Z} = 0$ and $\E{Z^2} = 2b^2$, see Proposition \ref{prop:lap}. 
		Now let us upper bound the expectation of $w_0^2$. 	Recall that 
		$$w_0 = X_{\max} - X_{\min} = \hat \sigma \sqrt{\log(n/\alpha_3)}.$$
		Given that  
		$$\alpha_3 = \min \left\{ \frac{\alpha}{4}, \frac{1}{\log^2 n}\right\},$$
		and	$\alpha_0 = \alpha_1 = \alpha_2 = (\alpha-\alpha_3)/3 \in [\alpha/4,\alpha/3]$.
		From Theorem \ref{thm:rangeunkownvar}, if $n \geq (c/\eps)\log (\sigma_{\max}/\sigma_{\min})$, we have,
		\begin{align*}
		\mathbb E[\hat \sigma^2] &\leq \sigma^2 \left( c_1 + c_2 \alpha_3 \log^2(n)\right) \leq \sigma^2 \left(c_1 + c_2 \right) \leq K \sigma^2
		\end{align*}
		Hence, 
		$$\mathbb E[w_0^2] \leq K \sigma^2 \cdot \log\left(\frac{n}{\min \left\{\alpha/4, 1/(\log^2 n) \right\}}\right) = \sigma^2 \cdot \O{\log (n/\alpha)}.$$
		Hence we have,
		\begin{align}
		\mathbb	E[Y_1]
		& \leq \frac{2n}{n-1}\cdot \mathbb E\left[\frac{w_0^2}{\eps_1^2 n^2}\right] + \mathbb E\left[\frac{w_0^2}{\eps_2 \cdot (n-1)} \right]\log \left(\frac{1}{\alpha_2}\right) \nonumber \\
		&\leq \sigma^2\left(\O{\frac{\log(n/\alpha)}{\eps^2 n^2} } + \O{\frac{\log(n/\alpha)}{\eps n} \cdot \log\left(\frac{1}{\alpha}\right)}\right) \label{eq:s_sum_part1}
		\end{align}	
since $\eps_1 = \eps_2 =\eps_3 = \eps/3$ and $\alpha_2 \in [\alpha/4, \alpha/3]$.
        
\noindent \textbf{(3)} Finally, we have
		\begin{align}
		\E{Y_2} &\leq \frac{\mathbb E[w_0^2]}{\eps_2\cdot (n-1)} \nonumber \\
				&\leq \O{\frac{\sigma^2}{\eps n} \cdot \log \left(\frac{n}{\alpha}\right)} \label{eq:s_sum_part2}
		\end{align}
		Combining inequalities \ref{eq:s_sum_part1} and \ref{eq:s_sum_part2}, and plugging in equation \ref{eq:divideS^2}, we have
		\begin{align*}
		\mathbb	E[\tilde s]	
		&\leq k_n\sigma + \sigma \cdot \left[ \O{\sqrt{\frac{\log(n/\alpha)}{\eps n }} \cdot \sqrt{\log (1/\alpha)}} + \O{\frac{\sqrt{\log(n/\alpha)}}{\eps n} } 
		\right].
		\end{align*}
		Similarly, we have,
		\begin{align*}
		\mathbb E[b_1] = \mathbb E\left[\frac{w_0}{\eps_1 n}\right] \leq 
        \frac{\sqrt{\E{w_0^2}}}{\epsilon_1 n } \leq \frac{\sigma}{\eps n} \cdot \O{\sqrt{\log(n/\alpha)}},
		\end{align*}
		since $\epsilon_1 = \epsilon/3$.
	\end{proof}

\end{proof}

\paragraph{Nearly optimal width for finite sample sizes.}

 As in the case of known variance, we can obtain a differentially private confidence interval whose increase in length is additive as opposed to multiplicative with a minor increase in the sample complexity.

\begin{theorem}
	\label{thm:finitesampleT}
	
	Let $\alpha_1 = \alpha_2 = \alpha_3 = \alpha/(3\sqrt{n})$. If, 
	$$n \geq  c 
	\min \left\{
	\max \left( 
	\frac{1}{\eps}\log \left(\frac{R}{\eps \sigma_{\min} \alpha}\right), \frac{1}{\eps}\log \left(\frac{\sigma_{\max}}{\eps\sigma_{\min}\alpha}\right)
	\right),
	\frac{1}{\eps}\log \left(\frac{1}{\eps \delta \alpha}\right)
	\right\},
	$$
	then
	\begin{align*}
	w =k_n \cdot \frac{\sigma}{\sqrt{n}}t_{n-1,1 - \alpha/2} + \frac{\sigma}{\eps n} \cdot \mathrm{polylog}\left(\frac{n}{\alpha}\right) + \O{\frac{\sigma}{n }}t_{n-1,1-\alpha/2}
	\end{align*}
	where $$k_n = \sqrt{\frac{2}{n-1}}\cdot \frac{\Gamma\left(\frac{n}{2}\right)}{\Gamma\left(\frac{n-1}{2}\right)}$$
\end{theorem}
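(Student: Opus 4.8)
The plan is to mirror the argument for the known-variance finite-sample bound (Theorem~\ref{thm:finitesample}), but applied to the unknown-variance estimator of Algorithm~\ref{alg:ciunknownvar}. Setting $\alpha_1=\alpha_2=\alpha_3=\alpha/(3\sqrt{n})$ forces $\alpha_0=\alpha-\alpha/\sqrt{n}$, so the (random) width $w=(\tilde s/\sqrt{n})\,t_{n-1,\alpha_0/2}+b_1\log(1/\alpha_1)$ uses a $t$-quantile at level $1-\alpha_0/2$ only slightly larger than the target level $1-\alpha/2$. As in Theorem~\ref{thm:finitesample}, the goal is to peel off exactly the non-private term $k_n(\sigma/\sqrt{n})\,t_{n-1,1-\alpha/2}$ from $\mathbb{E}[w]=(\mathbb{E}[\tilde s]/\sqrt{n})\,t_{n-1,\alpha_0/2}+\mathbb{E}[b_1]\log(1/\alpha_1)$ and to show that everything else is either the privacy term $(\sigma/\eps n)\,\mathrm{polylog}(n/\alpha)$ or the small quantile-correction term $\O{\sigma/n}\,t_{n-1,1-\alpha/2}$. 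The two ingredients I would assemble are a $t$-distribution analog of Proposition~\ref{prop:asymptoticQuantile} and a re-run of Proposition~\ref{prop:upperboundvariance} with the new, much smaller $\alpha_i$.

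For the quantile expansion, write $Q(p)$ for the $(1-p/2)$-quantile of $t_{n-1}$, so $Q(\alpha_0)=t_{n-1,\alpha_0/2}$ and $Q(\alpha)=t_{n-1,1-\alpha/2}$ in the paper's notation. The mean value theorem gives $Q(\alpha_0)-Q(\alpha)=\Delta\,|Q'(c)|$ for $\Delta=\alpha/\sqrt{n}$ and some $c\in[\alpha_0,\alpha]$, where $|Q'(p)|=1/(2f_{n-1}(Q(p)))$ and $f_{n-1}$ is the $t_{n-1}$ density. The structural difference from the Gaussian case is that the $t$-distribution has a polynomial tail, so the density at its $(1-\alpha/2)$-quantile scales like $\alpha/Q(\alpha)$ rather than the Gaussian $\alpha\cdot Q(\alpha)$; consequently $\Delta\,|Q'(c)|$ carries a factor of the quantile itself. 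It suffices to establish the (possibly loose) upper bound $t_{n-1,\alpha_0/2}\le t_{n-1,1-\alpha/2}+\O{t_{n-1,1-\alpha/2}/\sqrt{n}}$, after which multiplying the correction by $k_n\sigma/\sqrt{n}$ produces precisely the third term $\O{\sigma/n}\,t_{n-1,1-\alpha/2}$, which has no analog in Theorem~\ref{thm:finitesample}.

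For the expectation bounds, I would check that the proof of Proposition~\ref{prop:upperboundvariance} survives the substitution $\alpha_3=\alpha/(3\sqrt{n})$. The only place the size of $\alpha_3$ enters is the expectation bound on $\hat\sigma^2$: Theorem~\ref{thm:rangeunkownvar} gives $\mathbb{E}[\hat\sigma^2]\le\sigma^2(c_1+c_2\log^2(n)\,\alpha_3)$, and since $\log^2(n)\cdot\alpha/(3\sqrt{n})=o(1)$ we still have $\mathbb{E}[\hat\sigma^2]=\O{\sigma^2}$, hence $\mathbb{E}[w_0^2]=\sigma^2\,\O{\log(n/\alpha_3)}=\sigma^2\,\O{\log(n/\alpha)}$ because $\log(1/\alpha_3)=\O{\log(n/\alpha)}$. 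The factors $\log(1/\alpha_1),\log(1/\alpha_2)$ likewise become $\O{\log(n/\alpha)}$, so the conclusions of Proposition~\ref{prop:upperboundvariance} hold verbatim up to the $\mathrm{polylog}$, giving $\mathbb{E}[\tilde s]\le k_n\sigma+\sigma\,\O{\log(n/\alpha)/\sqrt{\eps n}}$ and $\mathbb{E}[b_1]\le(\sigma/\eps n)\,\O{\sqrt{\log(n/\alpha)}}$. Substituting these together with the quantile split $t_{n-1,\alpha_0/2}=t_{n-1,1-\alpha/2}+\O{t_{n-1,1-\alpha/2}/\sqrt{n}}$ and using $t_{n-1,1-\alpha/2}=\O{\sqrt{\log(1/\alpha)}}$ (Proposition~\ref{prop:tailBoundT}, valid once $n\ge\Om{\log(1/\alpha)}$) collapses all remaining cross terms into $(\sigma/\eps n)\,\mathrm{polylog}(n/\alpha)$, leaving the three advertised terms.

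I expect the main obstacle to be the quantile expansion: unlike the Gaussian quantile, $f_{n-1}$ and its derivative depend on the degrees of freedom $n-1$, so the bound on $|Q'|$, and in particular the density lower bound $f_{n-1}(Q(\alpha))=\Om{\alpha/t_{n-1,1-\alpha/2}}$ that drives the $\O{t_{n-1,1-\alpha/2}/\sqrt{n}}$ gap, must be controlled uniformly in $n$; this is where the tail estimates underlying Proposition~\ref{prop:tailBoundT} should be reused. A secondary, more routine obstacle is the sample-complexity bookkeeping: the choice $\alpha_3=\alpha/(3\sqrt{n})$ inserts $\log(\sqrt{n})$ factors into the hypothesis of Theorem~\ref{thm:rangeunkownvar}, and I would resolve the resulting implicit inequality in $n$ exactly as at the end of the proof of Theorem~\ref{thm:finitesample}, absorbing the $\sqrt{n}$ into the stated bound with its extra $\log(1/\eps)$ and $1/\eps$ factors.
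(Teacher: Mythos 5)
Your proposal is correct and follows essentially the same route as the paper: a mean-value-theorem expansion of the $t$-quantile (the paper's Proposition~\ref{prop:asymptoticQuantileT}) combined with the expected-width bound of Theorem~\ref{thm:ciunknownvar}/Proposition~\ref{prop:upperboundvariance} re-instantiated at $\alpha_i=\alpha/(3\sqrt{n})$, and the same absorption of the resulting $\sqrt{n}$ factors into the sample-complexity hypothesis. The only (harmless) divergence is that your derivative bound $|Q'(p)|=\O{t_{n-1,1-\alpha/2}/\alpha}$ is looser than the paper's $\O{1/\alpha}$ and therefore produces the third term $\O{\sigma/n}\,t_{n-1,1-\alpha/2}$ explicitly, which is exactly what the theorem statement advertises.
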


We will first start with an asymptotic expansion of the quantile function of a $t$-distribution, which is given in Proposition \ref{prop:asymptoticQuantileT} below.
\begin{proposition}
	\label{prop:asymptoticQuantileT}
	Let $t_{n,1 - p} = \phi^{-1}(1 - p)$ denote the $1-p^{th}$ quantile of a $t$-distribution with $n$ degrees of freedom. For any $\alpha \in (0,1/2)$ and $ 0 < \Delta < \alpha$, we have
	\begin{align*}
	t_{n,1-(\alpha-\Delta)/2}
	&\leq t_{n,1-\alpha/2} + \frac{c_1\Delta}{(\alpha-\Delta)} 
	\end{align*}	
	where $c_1$ is a fixed constant.
\end{proposition}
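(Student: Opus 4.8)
The plan is to follow the proof of Proposition~\ref{prop:asymptoticQuantile} essentially verbatim, replacing the standard normal cdf by the cdf $F_n$ of the $t$-distribution with $n$ degrees of freedom and its density $f_n$. Write $q(p) = t_{n,1-p/2} = F_n^{-1}(1-p/2)$. By the mean value theorem there is a $c \in [\alpha-\Delta,\alpha]$ with $q(\alpha-\Delta) = q(\alpha) - \Delta\, q'(c)$. Differentiating the identity $F_n(q(p)) = 1-p/2$ implicitly gives $q'(p) = -1/(2 f_n(q(p)))$, so $-q'(c) = 1/(2 f_n(q(c)))$. Because $f_n$ is symmetric and decreasing on $[0,\infty)$ and $q$ is decreasing, for $c \ge \alpha-\Delta$ we have $q(c) \le q(\alpha-\Delta)$ and hence $f_n(q(c)) \ge f_n(q(\alpha-\Delta))$. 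Thus
\begin{align*}
q(\alpha-\Delta) \le q(\alpha) + \frac{\Delta}{2\,f_n\!\left(q(\alpha-\Delta)\right)},
\end{align*}
which is exactly the $t$-analogue of the intermediate inequality in the normal proof.

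It then remains to reduce this to the claimed bound, which amounts to a lower bound on the $t$-density evaluated at its own quantile: concretely, $f_n(q(p)) \ge p/(2 c_1)$ for $p = \alpha - \Delta$. Writing $t^\ast = q(p)$ and $S_n = 1 - F_n$ so that $S_n(t^\ast) = p/2$, this is the statement that the hazard rate $f_n(t^\ast)/S_n(t^\ast)$ is bounded below by a universal constant; it plays the role of the inequality $z_{1-p/2}\le\sqrt{2\log(2/p)}$ (equivalently $e^{q(p)^2/2}\le 2/p$) used in the normal case. I would prove it from the explicit density by factoring $f_n(u)/f_n(t^\ast) = \left((n+t^{\ast 2})/(n+u^2)\right)^{(n+1)/2}$ for $u \ge t^\ast \ge 0$ and integrating, which gives $S_n(t^\ast) = f_n(t^\ast)\int_{t^\ast}^{\infty} \left((n+t^{\ast 2})/(n+u^2)\right)^{(n+1)/2}\,du$; bounding that integral by a constant yields the hazard-rate lower bound with $c_1$ equal to (twice) that constant.

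The main obstacle is precisely this integral (hazard) bound, because the $t$-distribution has polynomial rather than Gaussian tails, so its hazard rate is not monotone: it grows like the normal hazard rate in the bulk, peaks near $u\sim\sqrt n$, then decays like $n/u$, dropping below a constant only once $u \gtrsim n$, where the bound genuinely fails. The computation above shows the integral is controlled by an absolute constant as long as $t^\ast$ stays out of that far tail, say $t^\ast = O(\sqrt n)$. This is where the regime of interest must be invoked: in the applications (Theorem~\ref{thm:finitesampleT}) we always have $n \ge \Omega(\log(1/\alpha))$, and Proposition~\ref{prop:tailBoundT}(2) then guarantees $t_{n,1-\alpha/2} = O(\sqrt{\log(1/\alpha)}) = O(\sqrt n)$, keeping every quantile we care about inside the sub-Gaussian bulk. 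I would therefore carry this sample-size hypothesis through (or record it as a standing assumption for the statement), and with $t^\ast$ so controlled the integral is uniformly bounded, completing the argument with a fixed $c_1$. The remaining manipulations (invoking the monotonicity of $q$ on $[\alpha-\Delta,\alpha]$ and collecting constants) are routine and mirror the normal case exactly.
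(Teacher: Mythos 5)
Your proposal is correct and follows essentially the same route as the paper: the mean value theorem applied to $q(p)=t_{n,1-p/2}$, the identity $-q'(p)=1/(2f_n(q(p)))$, and a lower bound on the $t$-density at its own quantile yielding $-q'(p)\le c/p$ --- the paper obtains this last bound from the explicit density formula $\left(1+q^2(p)/n\right)^{(n+1)/2}\le \exp\left(q^2(p)\cdot\frac{n+1}{2n}\right)$ combined with the quantile bound of Proposition~\ref{prop:tailBoundT}, rather than your hazard-rate integral, but the two reductions are equivalent. You are also right that this step requires $n\gtrsim\log(1/\alpha)$ to keep the quantile in the sub-Gaussian bulk; the paper's proof silently imports that hypothesis through Proposition~\ref{prop:tailBoundT} Part (2) without recording it in the statement of the proposition, so your choice to carry it explicitly is, if anything, more careful than the original.
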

\begin{proof}
	
	Let $q(p) = t_{n,1-p/2} = G_n^{-1}(1 - p/2)$ where $G_n(.)$ is the cdf of a $t$-distribution with $n$ degrees of freedom. By the mean value theorem, we have
	\begin{align*}
	q\left(\alpha- \Delta\right) = q\left(\alpha\right) - \Delta q'\left(c\right)  
	\end{align*}
	for some $ c \in [\alpha-\Delta, \alpha]$.
	By taking the derivative, using Proposition \ref{prop:tailBoundT} for a bound on the quantile of a $t$-distribution, one can show that 
	\begin{align*}
	- q'(p) &= \frac{\sqrt{n\pi}}{2} \frac{\Gamma\left(n/2\right)}{\Gamma\left((n+1)/2\right)}\left( 1+ \frac{q^2(p)}{n}\right)^{(n+1)/2} \\
	&\leq \sqrt{\frac{\pi}{2}}\cdot \sqrt{\frac{n}{n-1}}\left(1+ \frac{q^2(p)}{n}\right)^{(n+1)/2} \\
    & \leq c \cdot \exp\left(q^2(p) \cdot \frac{n+1}{2n} \right) \\
    & \leq \frac{c}{p} \mbox{ (Proposition \ref{prop:t-tail} Part(2))}
	\end{align*}
	Hence we have,
	\begin{align*}
	q(\alpha - \Delta) &\leq q(\alpha) + \Delta\frac{c_1}{(\alpha-\Delta)}
	\end{align*}	
\end{proof}

\begin{proof}[Proof of Theorem \ref{thm:finitesampleT}]
	Let $\alpha_1 = \alpha_2 = \alpha_3 = \alpha/(3\sqrt{n})$. Hence $\alpha_0 = \alpha - \alpha/\sqrt{n}$.
	Applying Proposition \ref{prop:asymptoticQuantileT} with $\Delta = \alpha/\sqrt{n}$, we get,
\begin{align*}
	t_{n-1,1 - \alpha_0/2} &\leq t_{n-1,1 - \alpha/2} + \frac{c_1}{\sqrt{n}} 
\end{align*}
	From equation \ref{eq:expectedwidth} in the proof of Theorem  \ref{thm:ciunknownvar}, we have,
\begin{align*}
	\mathbb E[w]
	&\leq k_n \cdot \frac{\sigma}{\sqrt{n}} \cdot t_{n-1,1- \alpha_0/2} + \frac{\sigma}{\eps n} \cdot \mathrm{polylog} \left(\frac{n}{\alpha}\right) \\
	&\leq k_n \cdot \frac{\sigma}{\sqrt{n}} \left( t_{n-1,1-\alpha/2} + \frac{c_1}{\sqrt{n}} \right) + \frac{\sigma}{\eps n} \mathrm{polylog} \left(\frac{n}{\alpha}\right)  \\
    &\leq k_n \cdot \frac{\sigma}{\sqrt{n}} t_{n-1,1-\alpha/2} + \frac{\sigma}{\eps n} \mathrm{polylog} \left(\frac{n}{\alpha}\right)
\end{align*}
	For the sample complexity, we need 
	$$n \geq  c 
		\min \left\{
		\max \left( 
		\frac{1}{\eps}\log \left(\frac{R\sqrt{n}}{\sigma_{\min} \alpha}\right), \frac{1}{\eps}\log \left(\frac{\sigma_{\max} \sqrt{n}}{\sigma_{\min}\alpha}\right)
		\right),
		\frac{1}{\eps}\log \left(\frac{\sqrt{n}}{\delta \alpha}\right)
		\right\},
	$$
	It suffices to have
	$$n \geq  c 
		\min \left\{
		\max \left( 
		\frac{1}{\eps}\log \left(\frac{R}{\eps \sigma_{\min} \alpha}\right), \frac{1}{\eps}\log \left(\frac{\sigma_{\max}}{\eps\sigma_{\min}\alpha}\right)
		\right),
		\frac{1}{\eps}\log \left(\frac{1}{\eps \delta \alpha}\right)
		\right\}.
	$$
\end{proof}

\section{Lower Bounds for confidence intervals with known variance}
\label{sec:lowerbounds}
In this section, we prove lower bounds on the expected size of $(1 - \alpha)$-level confidence sets obtained with and without differential privacy. We begin with deriving lower bounds on the expected size of confidence sets without the privacy requirement, followed by the case when the confidence sets are required to be $(\eps, \delta)$ differentially private. The non-private lower bounds are useful for comparisons and to understand the price that we have to pay to ensure differential privacy. 

\subsection{Lower bounds on Confidence Intervals without privacy}
In this subsection, we derive lower bounds on confidence sets for the normal mean where there are no privacy constraints. The non-private confidence set estimation problem that we consider here differs from the standard problem of a estimating confidence set for the mean of the normal population. For $\epsilon$-differential privacy, we require $\mu \in (-R,R)$. The corresponding problem without privacy is called the \emph{bounded normal mean problem} and has a long history, see for example \cite{pratt1963shorter, evans2005minimax,schafer2009constructing} for the problem of estimating a minimax confidence interval for the bounded normal mean, and \cite{casella1981estimating,bickel1981minimax, marchand2004estimation} for the problem of point estimation of the bounded normal mean. 

It is well known (see for example \cite{lehmann2006testing}), that the classic confidence interval $\bar{X} \pm \sigma/\sqrt{n}$ is of minimax expected size among all $1-\alpha$ level confidence sets for the normal mean with known variance, when $\mu \in \mathbb{R}$. On the other hand, when we assume that $\mu \in (-R,R)$, the classic confidence interval is no longer minimax.  When $R \leq 2\sigma z_{1-\alpha/2}$, \cite{evans2005minimax} show that the so called \emph{truncated Pratt interval} \citep{pratt1963shorter} is of expected minimax length. In general, the minimax confidence set is not known.

However, we are interested in estimating a confidence set in a regime where $R$ is on the order of, or much larger than the standard deviation, i.e. $R =  \Theta(\sigma)$ or $R \gg \sigma$. Under this regime, we derive a lower bound on the expected size of any $(1-\alpha)$ confidence set of the bounded mean. The main result of this section is stated below.

\begin{theorem}
	\label{thm:lowerboundwithoutprivacy}
	Let $X_1, \ldots, X_n \overset{iid}{\sim} N(\mu, \sigma^2)$ where $\mu \in (-R,R)$, $\sigma^2$ is known and $R > c\sigma^2$ for a fixed (but arbitrarily small) constant $c>0$. Let $I = I(X_1, \ldots, X_n)$ be any  measurable set where $I \subseteq (-R,R)$ such that for every $\alpha \in (0,1/2)$, and $\mu \in (-R,R),$ if,
	$$\prm[\underline{X} \sim N(\mu,\sigma^2)][]{I(X_1,\ldots, X_n) \ni \mu }  \geq 1 - \alpha.$$		
	$$\Em[\underline{X} \sim N(\mu,\sigma^2)][]{|I(X_1,\ldots, X_n)|} \geq \frac{2\sigma}{\sqrt{n}}z_{1 - \frac{\alpha}{2}} - \Otilde{\frac{1}{n}}.
	$$	
\end{theorem}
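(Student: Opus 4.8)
The stated inequality is really a \emph{worst-case} (minimax) statement: since the classic interval has the constant length $\tfrac{2\sigma}{\sqrt n}z_{1-\alpha/2}$, the plan is to lower bound $\sup_{\mu\in(-R,R)}\mathbb{E}_\mu[|I|]$ by this quantity up to the advertised $\Otilde{1/n}$ correction (a naive attempt to prove it pointwise for every $\mu$ fails, because a Pratt-type set can beat the classic length at a fixed interior $\mu$). Write $\tau=\sigma/\sqrt n$. \textbf{Reduction to one observation.} First I would reduce to a single Gaussian sample: $\bar X\sim N(\mu,\tau^2)$ is sufficient, so by a Rao--Blackwell/conditioning argument (resample $\underline X$ from its $\mu$-free conditional law given $\bar X$ and re-apply $I$) one gets a possibly randomized confidence set depending only on $y=\bar X$, with the same coverage at every $\mu$ and the same $\mathbb{E}_\mu[|I|]$. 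Set $\ell(y)=|I(y)|$.

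\textbf{Core inequalities.} By Fubini, $\mathbb{E}_\mu[|I|]=\int \ell(y)\,\phi_\tau(y-\mu)\,dy$ and $\mathbb{P}_\mu(\mu'\in I)=\int \mathbf{1}[\mu'\in I(y)]\,\phi_\tau(y-\mu)\,dy$, where $\phi_\tau$ is the $N(0,\tau^2)$ density. Averaging $\mu$ uniformly over $(-L,L)$ with $L$ close to $R$, pointwise coverage gives $\tfrac1{2L}\int_{-L}^L \mathbb{P}_\mu(\mu\in I)\,d\mu\ge 1-\alpha$; swapping the order of integration, the left side equals $\tfrac1{2L}\int\big(\int_{I(y)\cap(-L,L)}\phi_\tau(y-\mu)\,d\mu\big)\,dy$. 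The inner integral is at most $\int_{I(y)}\phi_\tau(y-\mu)\,d\mu$, and since $\phi_\tau(y-\cdot)$ is unimodal about $y$, the Hardy--Littlewood rearrangement inequality bounds it by $h(\ell(y)):=2\Phi(\ell(y)/2\tau)-1$, where $h$ is increasing and \emph{concave} on $[0,\infty)$. Thus, up to boundary terms, the average of $h(\ell(y))$ is at least $1-\alpha$, and Jensen's inequality yields $h(\bar\ell)\ge 1-\alpha$ for $\bar\ell$ the (windowed) average of $\ell(y)$; inverting $h$ gives $\bar\ell\ge 2\tau z_{1-\alpha/2}$. Matching $\bar\ell$ to the averaged expected length, whose weight $w_L(y)=\int_{-L}^L\phi_\tau(y-\mu)\,d\mu$ is $\approx 1$ on the interior, produces $\sup_\mu \mathbb{E}_\mu[|I|]\ge 2\tau z_{1-\alpha/2}-(\text{corrections})=\tfrac{2\sigma}{\sqrt n}z_{1-\alpha/2}-(\text{corrections})$. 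The concavity/rearrangement step is exactly what delivers the \emph{sharp} two-sided constant $z_{1-\alpha/2}$; a pointwise Neyman--Pearson bound on the associated level-$\alpha$ tests would only give a strictly smaller, one-sided-flavored constant.

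\textbf{The main obstacle} is controlling the windowing and boundary corrections down to $\Otilde{1/n}$. Because $\mu,\mu'$ are confined to $(-R,R)$ and we average over $(-L,L)$, there are three edge effects to absorb: probability mass of $\phi_\tau(y-\cdot)$ leaking past $\pm L$, the weight $w_L(y)$ dropping below $1$ for $|y|\approx L$, and the truncation $I\subseteq(-R,R)$ shortening $\ell(y)$ near the boundary. I would take a buffer $\Delta=\Theta(\tau\sqrt{\log n})$ and $L=R-\Theta(\Delta)$, so that (using $\tau=\sigma/\sqrt n$ and the regime $R=\Theta(\sigma)$ from the hypothesis) each source contributes $O\big(\tau\cdot \Delta/R\big)=\Otilde{\tau^2/\sigma}=\Otilde{1/n}$, and the inversion of $h$ near the argument $1-\alpha$ propagates these to an additive $\Otilde{1/n}$ loss in the constant. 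The delicate point is aligning the coverage average (which sees $h(\ell(y))$ over a slightly enlarged $y$-window) with the length average (which sees $\ell(y)$ weighted by $w_L$), so that the \emph{same} profile $\ell(\cdot)$ drives both bounds; verifying that this alignment costs only $\Otilde{1/n}$ is the crux of the argument.
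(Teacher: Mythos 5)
Your overall strategy is the same as the paper's: both arguments are Bayes-risk lower bounds that (i) average the coverage guarantee over a prior on $\mu$, (ii) upper-bound the probability that a set of given Lebesgue measure can capture a Gaussian-distributed $\mu$ by $2\Phi(|I|/2\tau)-1$ (your Hardy--Littlewood rearrangement step is exactly the paper's Proposition~\ref{prop:anti-concentration}), and (iii) invoke Jensen's inequality via the concavity of $\Phi$ on $[0,\infty)$ before inverting. Your reading of the statement as a worst-case/sup-over-$\mu$ bound is also the right one --- the paper in fact bounds the prior-averaged expected length, which lower-bounds the supremum. The one genuine divergence is the choice of prior: you take $\mu$ uniform on $(-L,L)$ and push everything through Fubini, whereas the paper takes the conjugate prior $\mu\sim N(0,R^2\beta^2)$ with $\beta=1/\log\sqrt{n}$. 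With the Gaussian prior the posterior of $\mu$ given the data is \emph{exactly} $N(\mu_p,\sigma_p^2)$ with $\sigma_p=(\sigma/\sqrt{n})\bigl(1+\sigma^2/(nR^2\beta^2)\bigr)^{-1/2}$, so the entire boundary bookkeeping collapses into two clean quantities: the prior mass $\gamma=\mathbb P(|\mu|<R)=1-o(1/n)$ entering the coverage lower bound, and the multiplicative shrinkage $\sigma_p\ge(\sigma/\sqrt{n})(1-\Otilde{1/n})$ when $R=\Omega(\sigma)$. There is no window weight, no truncated posterior, and no alignment problem.

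That alignment problem is the one real gap in your write-up, and you flag it yourself without closing it: the coverage average controls $\int h(\ell(y))\,dy$ (unweighted, or weighted by the inner integral's cap), while the length average controls $\int \ell(y)\,w_L(y)\,dy$ with $w_L(y)=\int_{-L}^L\phi_\tau(y-\mu)\,d\mu$, and Jensen must be applied against a \emph{single} probability measure on $y$ for the two to talk to each other. As stated, the unweighted coverage integral $\frac{1}{2L}\int_{\mathbb R}h(\ell(y))\,dy$ is not even finite without further argument, and naively restricting $y$ to a window and dropping $w_L$ loses a constant factor on a boundary strip of width $\Theta(\tau\sqrt{\log n})$ where $\ell(y)$ can be as large as $2R$. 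The step is repairable --- e.g., use the pointwise bound $\int_{I(y)\cap(-L,L)}\phi_\tau(y-\mu)\,d\mu\le\min\{h(\ell(y)),w_L(y)\}\le h(\ell(y))\,w_L(y)+w_L(y)(1-w_L(y))$, note $\int w_L(1-w_L)\,dy=O(\tau\sqrt{\log n})$, and then apply Jensen with respect to the probability measure $w_L(y)\,dy/2L$; the resulting $\Otilde{1/\sqrt{n}}$ perturbation of the coverage level translates, through the $O(\tau)$ Lipschitz constant of $h^{-1}$ near $1-\alpha$ (the paper's Proposition~\ref{prop:zquantileinverse}), into the desired $\Otilde{1/n}$ loss in length. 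But as submitted, the crux you name is left unverified, and it is precisely the step that the paper's conjugate-prior computation (Fact~\ref{fact:posterior}) is designed to avoid.
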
  
 
Theorem \ref{thm:lowerboundwithoutprivacy} states that when $R = \Omega\left(\sigma\right)$, the lower bound on the expected size of a confidence set without privacy is equal to the length of the classic interval minus a term that goes to $0$ at the rate $1/n$, upto logarithmic factors.
Before we present a proof of Theorem \ref{thm:lowerboundwithoutprivacy}, we need some intermediate results that are stated below:
\begin{fact}[See for example \cite{hogg1995introduction}]
	\label{fact:posterior}
	If $\mu \sim N(0,\gamma^2)$, and $X_1, \ldots, X_n| \mu \overset{iid}{\sim} N(\mu,\sigma^2)$, then the conditional distribution of $\mu|X_1,\ldots X_n$ is a normal with mean $\mu_p$ and variance $\sigma^2_p$ where
	$$\mu_p = \frac{n/\sigma^2}{n/\sigma^2 + 1/\gamma^2} \cdot \bar{X}$$\
	and 
	$$\sigma^2_p = \frac{1}{n/\sigma^2 + 1/\gamma^2}$$
\end{fact}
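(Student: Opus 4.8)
The plan is to apply Bayes' rule and observe that the posterior density is, up to a normalizing constant, the exponential of a quadratic in $\mu$, hence Gaussian; then I would complete the square to read off its mean and variance. First I would write the prior density as $\pi(\mu) \propto \exp\left(-\mu^2/(2\gamma^2)\right)$ and, using independence of the $X_i$, the likelihood as $L(\mu) \propto \exp\left(-\frac{1}{2\sigma^2}\sum_{i=1}^n (X_i - \mu)^2\right)$, where in both cases $\propto$ hides factors that do not depend on $\mu$. By Bayes' rule the posterior satisfies $p(\mu \mid X_1,\ldots,X_n) \propto \pi(\mu)\,L(\mu)$, again up to a $\mu$-independent normalizer.

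Next I would expand $\sum_i (X_i-\mu)^2 = \sum_i X_i^2 - 2n\mu\bar{X} + n\mu^2$ and collect the full exponent as a quadratic in $\mu$. Discarding all terms that do not involve $\mu$ (they are absorbed into the normalizing constant), the exponent becomes $-\frac{1}{2}\left[\left(\frac{1}{\gamma^2} + \frac{n}{\sigma^2}\right)\mu^2 - 2\,\frac{n\bar{X}}{\sigma^2}\,\mu\right]$. The key observation is that the data enter only through $\bar{X}$, reflecting its sufficiency for $\mu$; this is what makes the posterior depend on the sample solely via the sample mean.

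Then I would complete the square. The coefficient of $\mu^2$ identifies the posterior precision as $1/\sigma_p^2 = n/\sigma^2 + 1/\gamma^2$, giving $\sigma_p^2 = \left(n/\sigma^2 + 1/\gamma^2\right)^{-1}$, and the posterior mean is the minimizer of the quadratic, namely $\mu_p = \sigma_p^2 \cdot (n\bar{X}/\sigma^2) = \frac{n/\sigma^2}{n/\sigma^2 + 1/\gamma^2}\,\bar{X}$. Since the posterior density is then proportional to $\exp\left(-(\mu - \mu_p)^2/(2\sigma_p^2)\right)$, it must equal the $N(\mu_p, \sigma_p^2)$ density exactly, as that is the unique probability density of this form; this gives the claimed conditional distribution.

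There is no genuine obstacle here, as this is the standard normal--normal conjugacy computation. The only points requiring care are bookkeeping --- tracking which terms are constant in $\mu$ and may be folded into the normalization, and getting the signs right when completing the square --- rather than any conceptual difficulty.
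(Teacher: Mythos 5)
Your proof is correct: the paper states this as a known fact (citing a standard textbook) without giving a proof, and your derivation is exactly the textbook normal--normal conjugacy argument the citation refers to. The algebra checks out --- the posterior precision $n/\sigma^2 + 1/\gamma^2$ and mean $\sigma_p^2 \cdot n\bar{X}/\sigma^2$ follow correctly from completing the square, and the identification of the resulting kernel with the $N(\mu_p,\sigma_p^2)$ density is valid since the normalizing constant is forced.
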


\begin{proposition}
	\label{prop:anti-concentration}
	Let $X \sim N(\mu,\sigma^2)$ and $I(X)$ be any measurable set, then $\mathbb P( X \in I) \leq 2\Phi\left( |I|/(2\sigma) \right) -1$, where $\Phi(\cdot)$ is the cdf of a standard normal distribution and $|I|$ is the Lebesgue measure of the set $I$.
\end{proposition}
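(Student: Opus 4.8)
The plan is to prove the stronger geometric fact underlying the inequality: among all measurable sets of a prescribed Lebesgue measure, the Gaussian probability mass $\pr{X \in \cdot}$ is maximized by the interval \emph{centered at the mean} $\mu$. This is an instance of the ``bathtub principle,'' and it holds precisely because the density $f(x) = \frac{1}{\sqrt{2\pi}\,\sigma}\exp\!\big(-(x-\mu)^2/(2\sigma^2)\big)$ is symmetric about $\mu$ and strictly decreasing in $|x-\mu|$ (i.e. unimodal). So first I would fix a measurable set $I$ with $m := |I|$ and introduce the comparison interval $J := [\mu - m/2,\, \mu + m/2]$, which has the same Lebesgue measure $m$, and reduce the claim to showing $\int_I f \leq \int_J f$.

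The heart of the argument is a rearrangement/swapping step. Writing $\int_I f - \int_J f = \int_{I\setminus J} f - \int_{J\setminus I} f$, I would exploit that $J$ is exactly the superlevel set $\{x : f(x) \geq f_0\}$, where $f_0 := f(\mu + m/2)$ is the density value at the endpoints of $J$ (this uses symmetry and unimodality). Consequently $f(x) \leq f_0$ for every $x \in I\setminus J$ (points outside $J$), while $f(x) \geq f_0$ for every $x \in J\setminus I$ (points inside $J$). Since $|I| = |J| = m$ forces $|I\setminus J| = |J\setminus I|$, I can chain
\begin{align*}
\int_{I\setminus J} f \;\leq\; f_0\,|I\setminus J| \;=\; f_0\,|J\setminus I| \;\leq\; \int_{J\setminus I} f,
\end{align*}
which gives $\int_I f \leq \int_J f$ as desired. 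The continuity of $f$ ensures the boundary of $J$ has measure zero, so the strict/non-strict distinctions at the endpoints are immaterial.

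Finally, I would compute the right-hand side explicitly for the centered interval:
\begin{align*}
\pr{X \in J} = \Phi\!\left(\frac{m/2}{\sigma}\right) - \Phi\!\left(\frac{-m/2}{\sigma}\right) = 2\Phi\!\left(\frac{|I|}{2\sigma}\right) - 1,
\end{align*}
using $\Phi(-t) = 1-\Phi(t)$, which combined with $\pr{X\in I} \leq \pr{X\in J}$ yields the claim. I expect the only real subtlety to be the measure-theoretic bookkeeping in the rearrangement step — verifying $|I\setminus J| = |J\setminus I|$ and that the density comparison points in the correct direction on each piece — rather than any hard estimate; everything else is a direct identity. One should also state at the outset that $I$ is to be read as an arbitrary fixed measurable set (the quantity controlling the bound being its Lebesgue measure $|I|$), so that the proposition applies to whatever data-dependent realization $I(X)$ arises later.
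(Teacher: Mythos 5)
Your proof is correct and follows essentially the same route as the paper's (which is only a sketch): reduce to the interval of the same measure centered at $\mu$ and compute its mass. The only difference is that you carry out the bathtub/rearrangement step explicitly where the paper simply asserts it from symmetry and unimodality, which is a welcome addition rather than a deviation.
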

\begin{proof}[Proof sketch]
Let $I$ be any measurable set.  Since the normal distribution is symmetric and unimodal at $\mu$, the density is highest at $\mu$ and decreasing away from $\mu$. Thus, probability mass in $I$ will be maximized when $I$ is an interval centered at $\mu$. Hence an interval $I$ with the highest mass is $[\mu - |I|/2, \mu + |I|/2]$. The result follows by computing the probability of this interval.	
\end{proof}

\begin{proposition}
\label{prop:zquantileinverse}
	Let $\alpha \in (0,1/2)$ and $0 < \Delta < \alpha$, then
	$$z_{1 - (\alpha + \Delta)/2} \geq z_{1 - \alpha/2} - \frac{c\Delta}{\alpha}$$
\end{proposition}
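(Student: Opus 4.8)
The plan is to mirror the proof of Proposition~\ref{prop:asymptoticQuantile}, applying the mean value theorem to the quantile function at $\alpha+\Delta$ rather than at $\alpha-\Delta$. Write $q(p) = z_{1-p/2} = \phi^{-1}(1-p/2)$, so that the claim is equivalent to $q(\alpha+\Delta) \geq q(\alpha) - c\Delta/\alpha$. By the mean value theorem there is a point $\xi \in [\alpha, \alpha+\Delta]$ with $q(\alpha+\Delta) = q(\alpha) + \Delta\, q'(\xi)$, and since $q$ is decreasing we have $q'(\xi) < 0$; hence it suffices to obtain an upper bound on $|q'(\xi)|$ of the form $C/\alpha$.

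For the derivative I would reuse the two facts already invoked in Proposition~\ref{prop:asymptoticQuantile}: the identity $q'(p) = -\sqrt{\pi/2}\, e^{q(p)^2/2}$, and the quantile tail bound $q(p) \leq \sqrt{2\log(2/p)}$ (the latter coming from Proposition~\ref{prop:quantileBound} applied to the standard Gaussian tail estimate). Combining these gives $e^{q(p)^2/2} \leq e^{\log(2/p)} = 2/p$, and therefore $|q'(p)| \leq \sqrt{\pi/2}\cdot (2/p) = \sqrt{2\pi}/p$.

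The only place where the argument genuinely differs from Proposition~\ref{prop:asymptoticQuantile} — and the one point requiring (minor) care — is the monotonicity bookkeeping: $|q'|$ grows as its argument shrinks toward $0$, so over $[\alpha,\alpha+\Delta]$ the magnitude $|q'(\xi)|$ is largest near the left endpoint. Using $\xi \geq \alpha$, I would bound $|q'(\xi)| \leq \sqrt{2\pi}/\alpha$ and conclude
$$q(\alpha+\Delta) = q(\alpha) - \Delta\,|q'(\xi)| \geq q(\alpha) - \frac{\sqrt{2\pi}\,\Delta}{\alpha},$$
which is exactly the stated inequality with $c = \sqrt{2\pi}$. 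I expect no serious obstacle: the estimate is self-contained once the derivative formula and the quantile tail bound are available, and the admissibility of all quantities (namely $\alpha+\Delta<1$ so that $q(\alpha+\Delta)$ is well defined, and $\alpha+\Delta < 2$ so the tail bound applies) follows immediately from the hypotheses $\alpha \in (0,1/2)$ and $0 < \Delta < \alpha$.
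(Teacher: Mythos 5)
Your proof is correct, and the underlying estimate (mean value theorem plus the identity $q'(p) = -\sqrt{\pi/2}\,e^{q(p)^2/2}$ and the quantile bound $q(p)\leq\sqrt{2\log(2/p)}$) is exactly the machinery the paper uses for this family of inequalities. The only difference in route is organizational: the paper's proof of this proposition is a one-line reduction --- it applies Proposition~\ref{prop:asymptoticQuantile} with $\alpha$ replaced by $\alpha+\Delta$, so that the pair $(\alpha'-\Delta,\alpha')=(\alpha,\alpha+\Delta)$ and the denominator $\alpha'-\Delta$ becomes $\alpha$, giving $z_{1-\alpha/2}\leq z_{1-(\alpha+\Delta)/2}+c\Delta/\alpha$, which rearranges to the claim --- whereas you re-run the mean value theorem argument directly on the interval $[\alpha,\alpha+\Delta]$. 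Your version buys an explicit constant $c=\sqrt{2\pi}$ and makes the monotonicity bookkeeping (that the derivative bound is worst at the left endpoint $\xi\geq\alpha$) visible, which is exactly the point that makes the substitution in the paper's reduction come out with denominator $\alpha$ rather than $\alpha+\Delta$; the paper's version is shorter because that work was already done once in Proposition~\ref{prop:asymptoticQuantile}. Both are sound.
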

\begin{proof}
	Apply Proposition \ref{prop:asymptoticQuantile} with $\alpha' = \alpha + \Delta$.
\end{proof}

We are now ready to prove Theorem \ref{thm:lowerboundwithoutprivacy}. The key idea in proving this result is to use a normal prior distribution on $\mu$ and obtain a posterior distribution of $\mu$ given the data. For a normal prior, the posterior has a simple form given by Fact \ref{fact:posterior}. Then  using Proposition \ref{prop:anti-concentration}, we obtain an upper bound on how much posterior probability mass can be packed into any set $I$. On the other hand, since the set must have $1-\alpha$ coverage, we have a lower bound on the posterior probability mass contained inside $I$ whenever $\mu \in [-R,R]$. We have to eliminate the case when $\mu$ does not lie $[-R,R]$. Combining this with Jensen's inequality gives the result. 
\begin{proof}
	Let $\mu \sim N(0,R^2 \beta^2)$ where $\beta$ will be specified later.  Let $\underline{X}|\mu = (X_1, \ldots, X_n)|\mu \sim N(\mu,\sigma^2)$ and $I(\underline{X})$ be any subset of $(-R,R)$. With a minor abuse of notation, let $\mathbb P_{\underline{X}}(\cdot)$ denote the probability with respect to $\underline{X}$ and $\mathbb E_{\mu}[\cdot]$ denote the expectation taken with respect to the distribution of $\mu$. We have the following lower bound:
\begin{align}
		\Em[\mu \sim N(0,R^2\beta^2)][]{\prm[\underline{X} \sim N(\mu,\sigma^2)][]{\mu \in I} }
	\geq &\mathbb E_{\mu} \left[\mathbb P_{\underline{X}} \left( \mu \in I \right) \cdot \mathbb I(\mu \in (-R,R)) \right] 
    \nonumber \\
	\geq & (1-\alpha) \cdot  \mathbb P_{\mu}(\mu \in (-R,R) ) \nonumber \\
	\geq & (1-\alpha) \cdot \gamma, \label{lb}
\end{align}
where $\gamma = \mathbb P_{\mu}(\mu \in (-R,R))$. 
Next, using the Fact \ref{fact:posterior}, $\mu|\underline{X} \sim N(\mu_p, \sigma_p^2)$ where 
	$$\sigma_p = \frac{\sigma}{\sqrt{n}} \frac{1}{\sqrt{1+ \sigma^2/(nR^2 \beta^2)}},$$ 
	we have the following upper bound:
	\begin{align}
	\mathbb E_{\mu} \left[ \mathbb P_{\underline{X}} (\mu \in I) \right] 
	&= \mathbb E_{\underline{X}} \left[ \mathbb P_{\mu|\underline{X}}(\mu \in I) \right] \nonumber \\
	&\leq \mathbb E_{\underline{X}} \left[2\Phi\left(\frac{|I|}{2\sigma_p} - 1\right)\right]\mbox{ (By Proposition \ref{prop:anti-concentration})} \nonumber \\
	&\leq 2\Phi\left(\frac{E[|I|]}{2\sigma_p}\right) -1 \mbox{ (By Jensen's inequality applied to $\Phi(x)$ for $x >0$ } \label{ub}
	\end{align}
	Combining the upper bound and the lower bound from equations \ref{ub} and \ref{lb}, we get,
	\begin{align}
	(1-\alpha)\gamma \leq  2\Phi\left(\frac{E[|I|]}{2\sigma_p}\right) -1
	\implies \mathbb E[|I|] \geq \Phi^{-1}\left(\frac{\gamma(1-\alpha)+1}{2}\right) 2\sigma_p \label{eq:lblength}
	\end{align}
	Let us consider the term $\Phi^{-1}(\cdot)$ in the lower bound. Let $\beta = 1/\log(\sqrt{n})$. By the Gaussian tail bound in Proposition \ref{prop:gauss}, we have 
	$$\gamma \geq 1 - 2\exp\left(-1/(2\beta^2)\right) = 1 - \Omega(1/\sqrt{n}).$$
	Hence $\gamma(1-\alpha) = 1 - \alpha - \Omega(1/\sqrt{n})$ and from Proposition \ref{prop:zquantileinverse}, we have
	\begin{align*}
	\Phi^{-1}\left(\frac{\gamma(1-\alpha)+1}{2}\right) &\geq \Phi^{-1}\left(1 - \alpha/2 - \Omega(1/\sqrt{n})\right) \\
	& \geq \Phi^{-1}\left(1 - \alpha/2\right) - \frac{c/\sqrt{n}}{\alpha} \\
	&\geq  \Phi^{-1}\left(1 - \alpha/2\right) - \frac{c}{\alpha \sqrt{n}}
	\end{align*}
	Next, let us now consider the $\sigma_p$ term. We have,
	\begin{align*}
	\sigma_p &= \frac{\sigma}{\sqrt{n}} \cdot \frac{1}{\sqrt{1+ \sigma^2/(nR^2 \beta^2)}} \\
	&\geq \frac{\sigma}{\sqrt{n}}\left(1 - \Omega\left( \frac{\sigma^2}{n R^2 \beta^2}\right)\right) \\
	&\geq \frac{\sigma}{\sqrt{n}}\left(1 - \Omega\left( \frac{\sigma^2 \log n}{n R^2 }\right)\right) \\
	&\geq \frac{\sigma}{\sqrt{n}}\left(1 - \tilde{\Omega}\left( \frac{1}{n}\right)\right)
	\end{align*}
	when $R = \O{\sigma}$.
	Hence combining the two terms and applying them to equation \ref{eq:lblength}, we have
	\begin{align}
	\mathbb E[|I|] &\geq \frac{2\sigma}{\sqrt{n}} \cdot \left(1 - \tilde{\Omega}\left(\frac{1}{n}\right)\right) \cdot \left(\Phi^{-1}\left(1 - \frac{\alpha}{2} - \Omega\left(\frac{1}{n}\right)\right)\right)  \nonumber \\
	&\geq \frac{2\sigma}{\sqrt{n}}\Phi^{-1}\left(1 - \frac{\alpha}{2}\right) - \tilde{\Omega}\left(\frac{1}{n}\right)
	\end{align}
\end{proof}

\subsection{Lower bounds on confidence sets with privacy}
 
In this subsection, we derive lower bounds on the expected size of any $(1-\alpha)$-level $(\eps, \delta)$-differentially private confidence sets. The following theorem is the main result of this section.

\begin{theorem}
	\label{thm:lowerbound}
	 Let $M(X_1, \ldots, X_n)$ be any $(\epsilon, \delta)$-DP algorithm that outputs a random measurable set $S(X_1,\ldots, X_n) \subset (-R,R)$ such that, for every $\alpha \in (0,1)$, and $\mu \in (-R,R),$ whenever $X_1, \ldots, X_n \overset{iid}{\sim} N(\mu, \sigma^2)$ where $\mu \in (-R,R)$ and $\sigma^2$ is known, if
	 \begin{enumerate}
      \item $\prm[\underline{X} \sim N(\mu,\sigma^2)][M]{S(X_1,\ldots, X_n) \ni \mu} \geq 1 - \alpha.$
      \item $\Em[\underline{X} \sim N(\mu,\sigma^2)][M]{|S(X_1,\ldots, X_n)|} \leq \beta$
	 \end{enumerate}
	then, 
	\begin{enumerate}
		\item For all $\alpha \in \left(0,\frac{1}{2}\right)$, and $\delta < \alpha/2n$,
		\begin{align*}
		\beta & \geq c \cdot \min \left( \frac{\sigma}{\eps n} \log \left(\frac{1}{\alpha}\right), R \right).
		\end{align*}
		\item If $\beta < \sigma < R$, then, 
		$$n \geq c_1 \cdot  \min\left(\frac{1}{\eps}\log \left(\frac{1}{\alpha}\right), \frac{1}{\eps}\log\left(\frac{1}{\delta}\right)\right)$$ and
		$$n \geq c_2 \cdot \min\left(\frac{1}{\eps}\log \left(\frac{R}{\sigma}\right), \frac{1}{\eps}\log\left(\frac{1}{\delta}\right)\right).
		$$	
	\end{enumerate}
	
\end{theorem}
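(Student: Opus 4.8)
The plan is to lower bound the expected size by expressing it, via Tonelli's theorem, as an integral of pointwise coverage probabilities and then transporting coverage from the true mean to nearby means through a distributional form of group privacy. For any fixed reference mean $\mu$ we have
$$\mathbb{E}_{\underline{X}\sim N(\mu,\sigma^2),\,M}\!\left[|S|\right] \;=\; \int_{-R}^{R} \mathbb{P}_{\mu,M}\!\left(\mu' \in S\right)\, d\mu',$$
so it suffices to lower bound $\mathbb{P}_{\mu,M}(\mu' \in S)$ as $\mu'$ ranges over $(-R,R)$, using that at $\mu'$ itself the coverage hypothesis forces $\mathbb{P}_{\mu',M}(\mu' \in S)\ge 1-\alpha$.

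The central tool I would establish is the inequality
$$\mathbb{P}_{\mu,M}\!\left(\mu' \in S\right) \;\ge\; e^{-6\eps n d}\,\mathbb{P}_{\mu',M}\!\left(\mu' \in S\right) - 4n\delta d,$$
where $d = d_{\mathrm{TV}}\!\left(N(\mu,\sigma^2),N(\mu',\sigma^2)\right) \le \min\{1,|\mu-\mu'|/\sigma\}$. To prove it I would take a maximal coupling of the product distributions $N(\mu,\sigma^2)^n$ and $N(\mu',\sigma^2)^n$ coordinatewise, so that the number $K$ of coordinates on which the coupled samples differ is $\mathrm{Bin}(n,d)$. For each fixed realization the two datasets lie at Hamming distance $K$, and the standard group-privacy consequence of $(\eps,\delta)$-DP gives $\mathbb{P}_M(\mu'\in S(\underline{X})) \ge e^{-K\eps}\mathbb{P}_M(\mu'\in S(\underline{X}')) - K\delta$. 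The delicate point is that $K$ is correlated with the conditional success probability; I would handle this by truncating to the high-probability event $\{K\le \ell\}$ with $\ell = O(nd)$, chosen via a Chernoff bound so that $\mathbb{P}(K>\ell)$ is at most a constant fraction of $1-\alpha$, replacing $e^{-K\eps}$ by $e^{-\ell\eps}$ there, and paying $\mathbb{E}[K]\delta = nd\delta$ for the additive tail. This is the distributional analogue of the packing/group-privacy step, and is essentially a strengthening of ``secrecy of the sample.''

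Given the lemma, coverage at $\mu'$ yields $\mathbb{P}_{\mu,M}(\mu'\in S)\ge e^{-6\eps n d}(1-\alpha) - 4n\delta d$. Centering $\mu$ and integrating over $\mu'$, the assumption $\delta < \alpha/2n$ makes the additive term $O(\alpha)$ and hence negligible against the main term, while $d\approx|\mu-\mu'|/\sigma$ stays below $1$ out to distance $|\mu-\mu'|\approx \sigma\log(1/\alpha)/(\eps n)$ (when this is at most $R$). Accounting for the exponential decay of the integrand over this window produces the bound $\beta \ge c\min\{(\sigma/\eps n)\log(1/\alpha),\,R\}$, where the $\min$ with $R$ covers the regime in which privacy forces a near-trivial interval. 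The first inequality of Part~(2) then follows purely by rearrangement: when $\beta<\sigma$ and the first branch of the $\min$ is active, $(\sigma/\eps n)\log(1/\alpha)\le \beta/c < \sigma/c$ gives $n \ge c'(1/\eps)\log(1/\alpha)$, and the $\log(1/\delta)$ branch appears when the additive $\delta$-term binds instead.

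For the remaining sample-complexity bound $n \ge c\min\{(1/\eps)\log(R/\sigma),(1/\eps)\log(1/\delta)\}$ I would use the same pointwise inequality but in its crudest form: since $d\le 1$ for every $\mu'$, the lemma with coverage gives $\mathbb{P}_{\mu,M}(\mu'\in S)\ge e^{-6\eps n}(1-\alpha)-4n\delta$ uniformly, so integrating over $(-R,R)$ yields $\beta \ge 2R\big(e^{-6\eps n}(1-\alpha)-4n\delta\big)$. Using $\beta<\sigma<R$ and rearranging gives $e^{-6\eps n}(1-\alpha) < \sigma/(2R) + 4n\delta$, hence $6\eps n > \log\big(\tfrac{1-\alpha}{\sigma/(2R)+4n\delta}\big)$, which is $\Omega\!\big(\min\{\log(R/\sigma),\log(1/(n\delta))\}\big)$; absorbing the $\log n$ into the constant recovers the $\log(1/\delta)$ branch. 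I expect the main obstacle to be the distributional group-privacy lemma — in particular decoupling $K$ from the conditional success probability cleanly enough to obtain the stated constants — together with the care needed to extract the sharp $\log(1/\alpha)$ factor from the near-origin integral rather than the weaker $\sigma/(\eps n)$ that the naive computation gives, while tracking the $\delta$-dependent and $R$-dependent regimes uniformly.
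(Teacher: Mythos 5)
Your overall architecture --- writing $\Exp[|S|]$ as $\int_{-R}^{R}\mathbb P_{\mu,M}(\mu'\in S)\,d\mu'$ via Tonelli and transporting coverage between nearby means through a coupling-based, distributional form of group privacy --- is exactly the paper's. But there is a genuine gap in how you extract the $\log(1/\alpha)$ factor in Part (1), and you have in effect flagged it yourself without resolving it. You apply the transport inequality to the \emph{coverage} event in the form $\mathbb P_{\mu,M}(\mu'\in S)\ge e^{-6\eps n d}(1-\alpha)-4n\delta d$. Integrating this over $\mu'$ can never beat $O(\sigma/(\eps n))$: the envelope $\int_{\R} e^{-6\eps n|\mu-\mu'|/\sigma}\,d\mu' = \sigma/(3\eps n)$ has no $\log(1/\alpha)$ in it, and at distance $|\mu-\mu'|\approx \frac{\sigma}{\eps n}\log(1/\alpha)$ your pointwise lower bound has already decayed to $O(\alpha)$, so "accounting for the exponential decay over the window" cannot produce the claimed $\frac{\sigma}{\eps n}\log(1/\alpha)$. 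The fix --- and what the paper actually does, even though its introductory sketch states the same weak form you use --- is to apply the inequality to the \emph{non-coverage} event: $\mathbb P_{\mu,M}(\mu'\notin S)\le e^{6\eps n d}\bigl(\mathbb P_{\mu',M}(\mu'\notin S)+4n\delta\bigr)\le e^{6\eps n d}(\alpha+4n\delta)$. This upper bound stays below $3/4$ for \emph{all} $\mu'$ with $|\mu-\mu'|\le \frac{\sigma}{6\eps n}\log\frac{1}{4\alpha}$ (using $\delta<\alpha/2n$), so the coverage probability is at least a constant throughout an interval of that length, which is where the $\log(1/\alpha)$ comes from. These two bounds are not rearrangements of each other ($1-e^{6\eps nd}\alpha$ versus $e^{-6\eps nd}(1-\alpha)$), and only the first one works.

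Two smaller points. First, for the group-privacy lemma itself, your truncation-plus-Chernoff plan for decoupling the Hamming distance $K$ from the conditional success probability is workable but unnecessary: the paper instead proves a one-step recursion $q(h)\le e^{\eps}q(h-1)+\delta$ on the conditional probabilities given $\{H=h\}$ (by swapping a single coupled coordinate) and then takes expectation over $H\sim\mathrm{Bin}(n,p)$ using the exact moment generating function $(1-p+pe^{\eps})^n$; this avoids the correlation issue entirely and yields clean constants. Second, your derivation of the first bound in Part (2) "purely by rearrangement" of Part (1) implicitly imports the hypothesis $\delta<\alpha/2n$, which Part (2) does not assume, and does not cleanly produce the $\min$ with $\frac{1}{\eps}\log(1/\delta)$; the paper instead exhibits a point $\mu_1$ with $\mathbb M_{\mu_0}(\mu_1\notin S)>1/2$ (possible since $\Exp|S|<R$) and applies the transport inequality with $d\le 1$ to get $\frac12<(\alpha+4n\delta)e^{6\eps n}$ directly. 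Your argument for the $\log(R/\sigma)$ branch is essentially the paper's and is fine.
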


Theorem \ref{thm:lowerbound} gives a lower bound on the expected size of any $(1-\alpha)$-level confidence set produced by a differentially private algorithm. Note that the lower bounds apply to any set and not just intervals.

We will now compare the lower bounds with the upper bounds obtained in the previous sections. Let us focus on the case of known variance. Theorem \ref{thm:ciknownvar} asserts the existence of a differentially private algorithm that outputs a $(1-\alpha)$-level confidence interval of length $\beta$ such that if
	$$
	n \geq   \frac{c}{\eps} \min \left\{\log \left(\frac{R}{\sigma}\right), \log \left(\frac{1}{\delta}\right) \right\} + \frac{1}{\eps}\log\left(\frac{1}{\alpha}\right),
	$$
	then,
	\begin{align*}
	\beta &\leq \max \left\{
	\frac{\sigma}{\sqrt{n}}  \O{\sqrt{\log \left(\frac{1}{\alpha}\right)}},
	\frac{\sigma}{\epsilon n} \mathrm{polylog}\left(\frac{n}{\alpha}\right)
	\right\}.
	\end{align*}
Note that the upper bound on the width of $\beta$ is a maximum two terms. The first term $\O{(\sigma/\sqrt{n}) \cdot \sqrt{\log(1/\alpha)}}$ is needed even without privacy, since it is the optimal width of the interval without privacy, as shown in Theorem \ref{thm:lowerboundwithoutprivacy}. The second term is $\sigma/(\eps n)\cdot \log \left(1/\alpha\right)$ upto polylog$(n,1/\alpha)$ factors. The lower bound in Part (1) of Theorem \ref{thm:lowerbound} shows that this term is unavoidable, unless we produce a very large interval of length proportional to the original range $(-R,R)$ of $\mu$. 

Part (2) of Theorem \ref{thm:lowerbound} gives a lower bound on $n$ if we require $\beta$ to be smaller than $\sigma$ and $R$. Note that the lower bound on the sample size matches the sample size requirement of Theorem \ref{thm:ciknownvar} to obtain a non-trivial confidence interval. In the unknown variance case, the lower bounds of Theorem \ref{thm:lowerbound} apply. We are interested in the regime when $R \geq \Omega(\sigma_{\max})$. In this case the known-variance lower bound on sample complexity for $\sigma=\sigma_{\min}$ matches our unknown-variance upper bound given in Theorem \ref{thm:ciunknownvar}.


Before we present a proof of Theorem \ref{thm:lowerbound}, we will introduce some additional notation for the proof. Let $X_1, \ldots, X_n$ be iid samples from a distribution $\mathbb D_{(\theta, \gamma)}$. As before, we abuse the notation and use $\mathbb D_{\theta}$ to denote the distribution when $\gamma$ is a nuisance parameter. Let $M(x_1, \ldots, x_n)$ be an $(\epsilon, \delta)$-differentially private mechanism that runs on the realization $x_1, \ldots, x_n$ and  let $Q(E|X_1 = x_1, \ldots X_n = x_n)$ denote the conditional distribution of $M$ given $x_1,\ldots, x_n$, where $E$ is any event. For any event $E$, let
\begin{align}
\label{eq:marginals}
\mathbb M_{\theta}(E) = \int{Q(E|X_1,\ldots, X_n)d\mathbb P_{\theta}(X_1, \ldots, X_n)}
\end{align} 
be the marginal distribution on the outputs induced by the DP mechanism when the data are generated from the distribution $\mathbb{D}_{\theta}$; here $\mathbb P_{\theta}$ is short hand to denote the probability computed under the distribution $\mathbb D_{\theta}$. Similalry $\mathbb M_{\theta}$ is a shorthand to denote the probability computed uner the distribution $\mathbb D_{\theta}$ and the mechansim $M$. Let $\mathbb D_{\theta_0}$ and $\mathbb D_{\theta_1}$ be two distributions, and denote the total variation distance between $\mathbb D_{\theta_0}$ and $\mathbb D_{\theta_1}$ by $||\mathbb D_{\theta_0} - \mathbb D_{\theta_1}||_{tv}$. 

To prove Theorem \ref{thm:lowerbound} 
we need a lemma that bounds the multiplicative distances between two differentially private  marginal distributions induced by two different distributions $\mathbb D_{\theta_0}$ and $\mathbb D_{\theta_1}$ on the data. This is done in Lemma \ref{lemma:di} below. We built on Lemma 4.12 of \cite{BNSV} where they use a ``secrecy of sampling argument'' that appeared in \cite{kasiviswanathan2011can} and \cite{smith2009differential} to amplify privacy when a differentially private algorithm is run on a re-sampled version of a dataset $\underline{X}$. We use a similar technique to compute the privacy amplification for datasets generated by two different distributions. Lemma 4.12 of \cite{BNSV} can be recovered from our result by letting $\mathbb D_{\theta_j}$ be an empirical cdf of $\underline{X}$ and $\underline{X}'$ of two neighboring datasets.

\begin{lemma}
\label{lemma:di}
For every pair of distributions $\mathbb D_{\theta_0}$ and $\mathbb D_{\theta_1}$, every $(\eps, \delta)$-differentially private algorithm $M(x_1, \ldots, x_n)$, if $\mathbb M_{\theta_0}$ and $\mathbb M_{\theta_1}$ are two induced marginal distributions on the output of $M$ evaluated on input dataset $X_1,\ldots,X_n$ sampled iid from $\mathbb D_{\theta_0}$ and $\mathbb D_{\theta_1}$ respectively, $\epsilon' = 6\epsilon n ||\mathbb D_{\theta_0} - \mathbb D_{\theta_1}||_{tv}$ and 
	$\delta' = 4e^{\eps'}n\delta || \mathbb D_{\theta_0} - \mathbb D_{\theta_1}||_{tv}$, then, for every event $E$,
	$$\mathbb{M}_{\theta_0}(E) \leq e^{\epsilon'} \mathbb{M}_{\theta_1}( E) + \delta'.$$  
\end{lemma}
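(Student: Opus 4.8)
The plan is to reduce the statement to ordinary group privacy by coupling the two data-generating distributions. Write $d=\|\mathbb{D}_{\theta_0}-\mathbb{D}_{\theta_1}\|_{tv}$ and use the maximal-coupling decomposition $\mathbb{D}_{\theta_j}=(1-d)\nu+d\,\mu_j$, where $\nu$ is the normalized common part and $\mu_0,\mu_1$ are the residual parts. I would generate the two datasets jointly: choose a random set $S\subseteq[n]$ by placing each coordinate in $S$ independently with probability $d$; for $i\notin S$ draw a shared value $W_i\sim\nu$ and set $X_i=X_i'=W_i$; and for $i\in S$ draw $X_i\sim\mu_0$ and $X_i'\sim\mu_1$ independently. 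Then $\underline{X}\sim\mathbb{D}_{\theta_0}^n$, $\underline{X}'\sim\mathbb{D}_{\theta_1}^n$, the two datasets agree off $S$, and $|S|\sim\mathrm{Bin}(n,d)$ with mean $nd$.

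First I would establish a conditional bound for each fixed $S$ with $|S|=k$. Given $S$ and the common realization $W$, the datasets differ only on the $k$ coordinates of $S$, where they are \emph{independent} draws from $\mu_0$ and $\mu_1$. For any fixed realizations the datasets are at Hamming distance at most $k$, so iterating the $(\eps,\delta)$-DP guarantee $k$ times gives, pointwise, $Q(E\mid\underline{x})\le e^{k\eps}Q(E\mid\underline{x}')+\tfrac{e^{k\eps}-1}{e^\eps-1}\delta$. The crucial point is that, conditioned on $S$ and $W$, the $S$-coordinates of $\underline{X}$ and of $\underline{X}'$ are independent, so this pointwise inequality may be integrated against $\mu_0^{\,k}$ on the left and against $\mu_1^{\,k}$ on the right \emph{separately}, yielding
$$\mathbb{E}\!\left[Q(E\mid\underline{X})\mid S\right]\le e^{k\eps}\,\mathbb{E}\!\left[Q(E\mid\underline{X}')\mid S\right]+\tfrac{e^{k\eps}-1}{e^\eps-1}\,\delta.$$
Averaging over $S$ and setting $a_k=\mathbb{E}[Q(E\mid\underline{X}')\mid|S|=k]\in[0,1]$ and $P_k=\mathbb{P}(|S|=k)$, the lemma reduces to the scalar inequality $\sum_k P_k e^{k\eps}a_k\le e^{\eps'}\sum_k P_k a_k$ plus the $\delta$-term $\mathbb{E}\big[\tfrac{e^{|S|\eps}-1}{e^\eps-1}\big]\delta$. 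The latter I would control with the binomial moment generating function $\mathbb{E}[e^{|S|\eps}]=(1-d+de^\eps)^n\le e^{nd(e^\eps-1)}\le e^{\eps'}$ and $\mathbb{E}[|S|e^{|S|\eps}]\le 3nd\,e^{\eps'}$ for $\eps\le 1$.

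The main obstacle is the scalar inequality, because $e^{k\eps}$ is largest exactly where $\mathrm{Bin}(n,d)$ has a light tail, while $a_k$ may be correlated with $k$; one cannot simply factor the expectation, nor bound $Q\le 1$ on the tail $k>6nd$, since the resulting binomial tail probability does not scale with $\delta$ and would wreck the multiplicative form. The resolution is that $a_k$ cannot pile its mass at large $k$: the same single-coordinate amplification (now changing one coordinate's law from $\nu$ to $\mu_1$) shows $a_{k+1}\le e^\eps a_k+\delta$, so $a_k$ drops by at most a factor $e^\eps$ per step. Under this monotonicity constraint the ratio $\sum_k P_k e^{k\eps}a_k/\sum_k P_k a_k$ is maximized by the maximally increasing sequence $a_k\propto e^{k\eps}$ (truncated at $1$, which only decreases the ratio), for which the inequality becomes $\mathbb{E}[e^{2\eps|S|}]\le e^{\eps'}\mathbb{E}[e^{\eps|S|}]$, i.e. $n\log\tfrac{1+d(e^{2\eps}-1)}{1+d(e^\eps-1)}\le\eps'$; since the left-hand side is at most $nd\,e^\eps(e^\eps-1)\le 2e\,nd\,\eps$, the choice $\eps'=6\eps nd$ is comfortably sufficient. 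Collecting the $\delta$ contributions coming from group privacy and from handling the tail of the scalar bound yields $\delta'\le 4e^{\eps'}n\delta\,d$, which completes the argument.
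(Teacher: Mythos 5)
Your proposal is correct in its main line and starts from exactly the same coupling the paper uses (the maximal-coupling decomposition into common part plus residuals, with a $\mathrm{Bin}(n,d)$ count $H$ of disagreeing coordinates), but it closes the argument by a genuinely different route. The paper never compares the two marginals at a common disagreement level $k$. Instead it runs the one-step recursion $q_{\theta_j}(h)\le e^{\eps}q_{\theta_j}(h-1)+\delta$ \emph{within} each marginal all the way down to $h=0$, observes that $q_{\theta_0}(0)=q_{\theta_1}(0)$ because at $h=0$ both datasets are drawn entirely from the common part, and then sandwiches the two marginals around this shared anchor using the binomial moment generating function from above for $\theta_0$ and from below for $\theta_1$; the ratio $\bigl(\frac{1-p+pe^{\eps}}{1-p+pe^{-\eps}}\bigr)^n$ produces the $e^{6\eps n d}$. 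You instead apply group privacy \emph{across} the coupling at each level $k$, which leaves you with the weighted sum $\sum_k P_k e^{k\eps}a_k$ and forces the additional correlation step: the observation that $a_{k+1}\le e^{\eps}a_k+\delta$ makes $a_k e^{-k\eps}$ (essentially) non-increasing, so a Chebyshev-sum/rearrangement inequality pushes the worst case to $a_k\propto e^{k\eps}$ and reduces everything to $\mathbb{E}[e^{2\eps H}]\le e^{\eps'}\mathbb{E}[e^{\eps H}]$. Both your cross-level comparison and your monotonicity claim are valid (the separate integration against $\mu_0^{\,k}$ and $\mu_1^{\,k}$ works precisely because the pointwise group-privacy bound holds for \emph{every} pair agreeing off $S$), and your verification of the scalar inequality in the extremal case is sound for $\eps\le 1$. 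What each approach buys: the paper's anchor-at-zero trick avoids the correlation inequality entirely and needs only the MGF computation (at the cost of also needing a reverse recursion for the lower bound); yours is arguably more transparent about where the factor $e^{6\eps n d}$ comes from, but requires the extra Chebyshev step.

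The one place you should be careful is the $\delta$ bookkeeping inside the correlation step, which you only gesture at. With $\delta>0$ the sequence $b_k=a_ke^{-k\eps}$ is only non-increasing up to additive drifts of size $\delta e^{-(k+1)\eps}$ per step, so the Chebyshev argument applies to a monotone minorant and leaves a remainder controlled by $\delta\,\mathbb{E}[He^{2\eps H}]$ rather than $\delta\,\mathbb{E}[He^{\eps H}]$; this still has the crucial factor $nd$ but carries $e^{nd(e^{2\eps}-1)}$ in place of $e^{\eps'}$, so your stated constant $4e^{\eps'}nd\delta$ may not come out exactly, only $\delta'=O\bigl(e^{O(\eps')}nd\delta\bigr)$. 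Since the paper's own constant-tracking in this lemma is equally loose and every application only needs the qualitative form, this is a presentational issue rather than a gap.
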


\begin{proof}
 We will first construct a coupling between $\mathbb D_{\theta_0}$ and $\mathbb D_{\theta_1}$ that allows us to control the hamming distance between iid samples generated from $\mathbb D_{\theta_0}$ and $\mathbb D_{\theta_1}$. Let us start with some notation. Let $p = ||\mathbb D_{\theta_0} - \mathbb D_{\theta_1}||_{tv}$, $F = \max(\mathbb D_{\theta_0} - \mathbb D_{\theta_1}, 0)$, $G = \max(\mathbb D_{\theta_1} - \mathbb D_{\theta_0}, 0)$, $C = \min (\mathbb D_{\theta_0}, \mathbb D_{\theta_1})$. It is easy to see that $\mathbb D_{\theta_0} = F+C$ and $\mathbb D_{\theta_1} = G+C$. Consider the following algorithm to generate $2n$ samples:
 \begin{enumerate}
 	\item Generate $H_1, \ldots, H_n$ iid $\text{Bernoulli}(p)$ and let $H = \sum_{i=1}^n H_n$.
 	\item For $i=1$ to $n$,
 	\begin{enumerate}
		\item If $H_i = 1$, sample $X_i \propto F$ and $X_i' \propto G$
		\item If $H_i= 0$, sample $X_i \propto C$ and set $X_i' = X_i$.
 	\end{enumerate}
 \end{enumerate}
 Here $X \propto F$ means that $X$ is generated from a distribution defined by normalizing $F$.
 Under this construction, one can verify the following:
 \begin{enumerate}
	\item $\underline{X}:= (X_1, \ldots, X_n) \overset{iid}{\sim} \mathbb D_{\theta_0}$
	\item $\underline{X}':= (X_1', \ldots, X_n') \overset{iid}{\sim} \mathbb D_{\theta_1}$
	\item $d(\underline{X},\underline{X'}) = H$.
 \end{enumerate}
Fix any event $E$. Let $$q_{\theta_j}(h) =  \mathbb P_{\theta_j}(E|H=h) := \int_{\underline{x}} \mathbb Q(E|\underline{X} = \underline{x})d\mathbb P_{\theta_j}(\underline{X}|H=h)$$ 
for $j \in \{0,1\}$ and 
 $$p(h) = \mathbb P(H=h) = {n \choose h} p^h (1-p)^{n-h},$$
 since $H \sim Binomial(n,p)$.
For $j \in \{0,1\}$, we have, by definition,
 $$\mathbb M_{\theta_j}(E) =
 \sum_{h=0}^n q_{\theta_j}(h) p(h) 
 $$ 
 \paragraph{Claim 1:} For $j \in \{0,1\}$, $q_{\theta_j}(h) \leq e^{\eps} q_{\theta_j}(h-1) + \delta$ for $h = 1, \ldots, n$, and $q_{\theta_1}(0) = q_{\theta_0}(0)$. \\
We will defer the proof of Claim 1 towards the end. 
\noindent By Claim 1, for $\theta_j \in \{0,1\}$, we have
\begin{align}
\label{eq:q}
q_{\theta_j}(h) \leq e^{h\eps}q_{\theta_j}(0) + \frac{e^{h\eps} - 1}{e^{\eps}-1}\delta
\end{align}

In the following, we will use the standard fact of the moment generating function of a Binomial distribution:
\begin{fact}
\label{fact:Binommgf}
If $H \sim$ Binomial$(n,p)$ then, for $any t> 0$, 
$$\E{e^{tH}} = (1-p + p\cdot e^t)^n$$
\end{fact}
Consider,
\begin{align}
 \label{eq:up}
\mathbb M_{\theta_j}(E) &=\sum_{h=0}^n p(h)q_{\theta_j}(h) = \E{q_{\theta_j}(H)}  \nonumber \\ 
&\leq \E{  e^{H\eps}q_{\theta_j}(0) + \frac{e^{H\eps} - 1}{e^{\eps}-1}\delta} \mbox{(From equation \ref{eq:q})} \nonumber \\ 
&= q_{\theta_j}(0) \cdot \E{e^{H\eps}} + \frac{\delta}{e^{\eps}-1} \cdot \left(\E{e^{H\eps}} - 1\right)  \nonumber \\
&= q_{\theta_j}(0) \cdot \left( 1 - p + p \cdot e^{\eps}\right)^n + \frac{\delta}{e^{\eps}-1} \cdot \left(\left(1 - p + p \cdot e^{\eps}\right)^n - 1\right)
\end{align}
Similarly, we can show that,
\begin{align}
\label{eq:lb}
\mathbb M_{\theta_j}(E) 
\geq q_{\theta_j}(0) \left(1 - p + p \cdot e^{-\eps}\right)^n + \frac{\delta}{e^{-\eps}-1} \cdot \left((1-p+pe^{-\eps})^n-1 \right) 
\end{align}
Combining inequalities \ref{eq:up} and \ref{eq:lb}, we get:
\begin{align}
\mathbb M_{\theta_0}(E) \leq \left( \frac{1 - p + p \cdot e^{\eps}}{1 - p + p \cdot e^{-\eps}} \right)^n \cdot \left( \mathbb M_{\theta_1}(E)  + \frac{1 - \left(1 - p + p \cdot e^{-\eps} \right)^n }{1 - e^{-\eps} } \cdot \delta \right) + \frac{\left(1 - p + p \cdot e^{\eps} \right)^n -1 }{e^{\eps}-1 } \cdot \delta
\end{align}
Finally, we have, 
$$
 n \log \left( \frac{1 - p + p \cdot e^{\eps}}{1 - p + p \cdot e^{-\eps}} \right) \leq n \cdot 6\eps p
$$
and 
\begin{align*}
	& e^{6 \eps n \cdot p} \cdot \frac{1 - (1+ p\cdot (e^{\eps} - 1))^n }{1 - e^{-\eps} } \cdot \delta + \frac{(1 + p\cdot (e^{\eps}-1) )^n - 1}{e^{\eps} - 1} \cdot \delta \\
\leq & e^{6 \eps n \cdot p} \cdot \frac{1 - \exp(2np \cdot (e^{-\eps}-1)) }{1 - e^{-\eps} } \cdot \delta + \frac{\exp(2np \cdot (e^{\eps}-1)) - 1}{e^{\eps} - 1} \cdot \delta \\
& \leq e^{6 \eps n \cdot p} \cdot 2np \cdot \delta + 2n p \cdot \delta\\
&\leq e^{6 \eps n \cdot p} \cdot 4np \cdot \delta,
\end{align*}
which shows that
$$\mathbb M_{\theta_0}(E) \leq e^{\eps'}\mathbb M_{\theta_1}(E) + \delta'$$
for $\eps' = 6np \cdot \eps$ and $\delta' = 4e^{\eps'} n p \cdot \delta$.
We will now prove Claim 1.
\begin{proof}[Proof of Claim 1]
	We will prove the claim for $j = 0$, the other case is similar. Let us introduce some notation. Fix a $(h_1, \ldots, h_n) \in \{0,1\}^n$. Let $I' = \{i: h_i=1\}$, $J = \{i:h_i=0\}$ and let $r$ be any index in $I'$.  Let $I = I'/\{r\}$ and consider the following partition of $\underline{X}$ into three parts:
	$$\underline{X} = (\underline{X}_{I}, X_{r},\underline{X}_{J} )$$ where 
	$\underline{X}_{I}$ is the vector $\underline{X}$ subsetted by the indices in $I$.
	By definition of the coupling, $\underline{X}_{I} \overset{iid}{\sim} F$, $X_{r} \sim F$, $\underline{X}_{J} \overset{iid}{\sim} C$.
	Now let $X'_r \sim C$ and let 
	$$\underline{X}' = (\underline{X}_{I}, X'_r,\underline{X}_J ).$$
	Also, let $h_1',\ldots, h_n'$ be the binary indicators corresponding to $\underline{X}$.
	By construction, we have the following:
	\begin{enumerate}
		\item $h_i = h_i'$ for all $i \neq r$
		\item $h_r = 1$ and $h_r = 0$
		\item $\sum_{i=1}^n h_i = h$ and $\sum_{i=1}^n h_i' = h-1$
		\item $\mathbb P_{\theta_j}(\underline{X}|H_1=h_1,\ldots, H_n=h_n) = \mathbb P_F(\underline{X}_I) \mathbb P_F(X_r) \mathbb P_C(\underline{X}_J)$
		\item $\mathbb P_{\theta_j}(\underline{X}'|H_1=h'_1,\ldots, H_n=h'_n) = \mathbb P_F(\underline{X}_I) \mathbb P_F(X'_r) \mathbb P_C(\underline{X}_J)$  
	\end{enumerate}
	
	Now consider the following:
	\begin{align*}
    &\mathbb P_{\theta_j}(E|H_1=h_1,\ldots, H_n = h_n) \\
	=& \int_{\underline{x}} \mathbb Q(E|\underline{X} = \underline{x})d\mathbb P_{\theta_j}(\underline{X}|H_1=h_1, \ldots H_n=h_n) \\
	=& \int_{\underline{x}_I} \int_{x_r}\int_{\underline{x}_J}  \mathbb Q(E|\underline{x}_I,x_r, \underline{x}_J)
	d\mathbb P_{F}(\underline{X}_I)d\mathbb P_F(X_r) \mathbb P_C (\underline{X}_J) \\ 	 
	\leq& \int_{\underline{x}_I} \int_{x_r}\int_{\underline{x}_J}  \left(e^{\eps}\mathbb Q(E|\underline{x}_I,x'_r, \underline{x}_J) +\delta \right)
	d\mathbb P_{F}(\underline{X}_I)d\mathbb P_F(X_r) \mathbb P_C (\underline{X}_J) \\ 	 
	\leq& \int_{\underline{x}_I}\int_{\underline{x}_J}  \left(e^{\eps}\mathbb Q(E|\underline{x}_I,x'_r, \underline{x}_J) +\delta \right)
	d\mathbb P_{F}(\underline{X}_I) d\mathbb P_C (\underline{X}_J) \\ 	 
	\leq& \int_{\underline{x}_I} \int_{x_r'}\int_{\underline{x}_J}  \left(e^{\eps}\mathbb Q(E|\underline{x}_I,x'_r, \underline{x}_J) +\delta \right)
	d\mathbb P_{F}(\underline{X}_I)d\mathbb P_C(X_r') \mathbb P_C (\underline{X}_J) \\ 	 	 	 
	\leq& \int_{\underline{x}} \left(e^{\eps}\mathbb Q(E|\underline{x}') +\delta \right)
	d\mathbb P_{\theta_j}(\underline{X}|H_1=h_1', \ldots, H_n=h_n') \\
	\leq& e^{\eps} \mathbb P_{\theta_j}(E|H_1=h'_1,\ldots, H_n = h'_n) + \delta.
	\end{align*}
	Hence we have shown that
	$$
	\mathbb P_{\theta_j}(E|H_1=h_1,\ldots, H_n = h_n, H=h)
	\leq e^{\eps} \mathbb P_{\theta_j}(E|H_1=h'_1,\ldots, H_n = h'_n, H = h-1) + \delta
	$$
	Taking expectations on both sides with respect to $(H_1, \ldots, H_n)$, we get
	$$
	\mathbb P_{\theta_j}(E|H=h)
	\leq e^{\eps} \mathbb P_{\theta_j}(E| H = h-1) + \delta
	$$
which proves the claim.	
\end{proof}

\end{proof}

Let us consider a corollary of Lemma \ref{lemma:di} when $\mathbb P_{\theta_0}$ and $\mathbb P_{\theta_1}$ are two normal distributions with $\theta_0 = (\mu_0,\sigma^2)$ and $\theta_1 = (\mu_1,\sigma^2)$. Since the total variation distance between $N(\mu_0,\sigma^2)$ and $N(\mu_1, \sigma^2)$ is upper bounded by $|\mu_0 - \mu_1|/\sigma$ (see for example \cite{dasgupta2008asymptotic}), we get the following Corollary:
\begin{corollary}
	\label{cor:normalLemma}
	Let $\mathbb P_{\theta_0}$ and $\mathbb P_{\theta_1}$ be two normal distributions where $\theta_0 = (\mu_0,\sigma^2)$ and $\theta_1 = (\mu_1,\sigma^2)$, where $\mu_1$ and $\mu_0$ are the means $\mu_0$ and $\sigma^2$ is the variance. Let $M(x_1, \ldots, x_n)$ be any $(\eps,\delta)$-differentially private algorithm that induces the marginal distribution $\mathbb M_{\theta_0}$ and $\mathbb M_{\theta_1}$ as defined in equation \ref{eq:marginals}. For any event $E$, we have
	$$\mathbb M_{\theta_0}(E) \leq e^{6\eps n \cdot k} \left(\mathbb M_{\theta_1}(E) + 4n\delta \cdot k   \right),
	$$
	where $$k = \min\left\{\frac{|\mu_0-\mu_1|}{\sigma},1\right\}.$$
\end{corollary}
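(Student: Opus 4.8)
The plan is to derive the corollary directly from Lemma~\ref{lemma:di} by controlling the total variation distance between the two Gaussians and then exploiting monotonicity of the resulting bound in that distance. First I would set $p = \|\mathbb D_{\theta_0} - \mathbb D_{\theta_1}\|_{tv}$, the total variation distance between $N(\mu_0,\sigma^2)$ and $N(\mu_1,\sigma^2)$. The key input is the standard fact (cited to \cite{dasgupta2008asymptotic}) that for two Gaussians with a common variance one has $p \leq |\mu_0 - \mu_1|/\sigma$; combining this with the trivial bound $p \leq 1$ gives
$$p \leq \min\left\{\frac{|\mu_0-\mu_1|}{\sigma},\,1\right\} = k.$$

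Next I would invoke Lemma~\ref{lemma:di} with the pair $\mathbb D_{\theta_0}, \mathbb D_{\theta_1}$. Its parameters specialize to $\eps' = 6\eps n p$ and $\delta' = 4 e^{\eps'} n \delta p$, so for every event $E$ the lemma yields
$$\mathbb M_{\theta_0}(E) \leq e^{6\eps n p}\,\mathbb M_{\theta_1}(E) + 4 e^{6\eps n p}\, n \delta\, p.$$

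Finally I would replace $p$ by its upper bound $k$ using monotonicity of the right-hand side. Since $\mathbb M_{\theta_1}(E) \geq 0$ and $p \leq k$, the factor $e^{6\eps n p} \leq e^{6\eps n k}$ bounds the first term, and since additionally $p \leq k$ the second term satisfies $4 e^{6\eps n p} n \delta p \leq 4 e^{6\eps n k} n \delta k$. Collecting terms gives
$$\mathbb M_{\theta_0}(E) \leq e^{6\eps n k}\left(\mathbb M_{\theta_1}(E) + 4 n \delta\, k\right),$$
which is exactly the claimed inequality. There is no real obstacle here: essentially all of the content is already packaged in Lemma~\ref{lemma:di}, and the only things to verify are the Gaussian total variation bound (a textbook fact) and the elementary observation that enlarging the distance estimate from $p$ to $k$ only weakens the inequality, since both $e^{6\eps n \cdot (\cdot)}$ and the additive term are nondecreasing in the distance.
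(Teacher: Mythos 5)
Your proposal is correct and matches the paper's own argument: the corollary is obtained by plugging the standard bound $\|N(\mu_0,\sigma^2)-N(\mu_1,\sigma^2)\|_{tv}\leq \min\{|\mu_0-\mu_1|/\sigma,1\}=k$ into Lemma~\ref{lemma:di} and using monotonicity of the resulting bound in the total variation distance. The paper leaves the monotonicity step implicit, so your write-up is if anything slightly more explicit, but the route is identical.
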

We are now ready to prove Theorem \ref{thm:lowerbound}.
\begin{proof}[Proof of Theorem \ref{thm:lowerbound}]
	By definition of measure of a set $S$, 
	$$
	|S| = \int_{-R}^{R} \mathbb I(\mu \in S)d\mu,
	$$
	where $\mathbb I(\cdot)$ is the indicator function. Hence, taking expectation on both sides with respect to $S \rightarrow M(X_1,\ldots, X_n)$ and $X_1,\ldots, X_n \sim N(\mu_0,\sigma^2)$, and changing the order of integration, (since $\mathbb I(\cdot)$ is non-negative, one can apply Tonelli's theorem), we get
	$$\mathbb E_{\mu_0} [|S|] = \int_{-R}^{R}\mathbb M_{\mu_0}\left( \mu \in S \right) d\mu.$$
	We will start by proving the first bound in Part (1). Let $\mu_0$ and $\mu_1$ be any two points in $[-R, R]$. 
	Note that, $\forall \mu \in [\mu_0, \mu_1]$, we have,
	\begin{align}
	\label{eq:thm:lowerbound:1}
	\exp\left( 6\eps n \frac{|\mu_0 - \mu|}{\sigma} \right) \leq  \exp\left( 6\eps n \frac{|\mu_0 - \mu_1|}{\sigma} \right)
	\end{align} 
	and by Corollary \ref{cor:normalLemma}, we have,
	\begin{align}
	\label{eq:thm:lowerbound:2}
	\mathbb M_{\mu_0}(\mu \notin S) &\leq e^{6\eps n \cdot k} \left(\mathbb M_{\mu}(\mu \notin S ) + 4n\delta \cdot k\right) \nonumber \\
	&\leq e^{6\eps n \cdot |\mu_0 - \mu|/\sigma} \left( \mathbb M_{\mu}(\mu \notin S ) + 4n\delta \cdot 1 \right)
	\end{align}
	where $k = \min\{1,|\mu_0 - \mu|/\sigma\} $ and we have used an upper bound of $1$ for $k$ in the second term.
	Then we have,
	\begin{align}
		\mathbb E_{\mu_0}[|S|] & \geq \int_{\mu_0}^{\mu_1}\mathbb M_{\mu_0}\left( \mu \in S \right) d\mu  \nonumber \\
		&\geq \int_{\mu_0}^{\mu_1}\left(1 - \mathbb M_{\mu_0}(\mu \notin S) \right) d\mu \nonumber \\
		&\geq |\mu_1 - \mu_0| - \int_{\mu_0}^{\mu_1} \left(\mathbb M_{\mu}(\mu \notin S ) + 4n\delta \right)\cdot e^{6\eps n \cdot |\mu_0 - \mu|/\sigma} d\mu, \text{ (Equation \ref{eq:thm:lowerbound:2})} \nonumber \\
		&\geq |\mu_1 - \mu_0| - (\alpha + 4 n\delta) \cdot \int_{\mu_0}^{\mu_1} e^{6\eps n |\mu_0 - \mu_1|/\sigma} d\mu \text{ (Equation \ref{eq:thm:lowerbound:1})} \nonumber \\
		&\geq |\mu_1 - \mu_0| - (\alpha  + 4n\delta) \cdot |\mu_1 - \mu_0|  \cdot e^{6\eps n |\mu_0 - \mu_1| /\sigma}  \nonumber \\
		&= |\mu_1 - \mu_0| \cdot \left( 1 - (\alpha + 4n\delta) \cdot e^{6\eps n |\mu_0 - \mu_1|/\sigma}\right) \label{eq:thm:lowerbound:3}
	\end{align}
	We will consider two cases depending on how large $R$ is relative to $\sigma/(\eps n)$:
	\paragraph{Case 1: $2R \geq (\sigma/(6\eps n)) \cdot \log(1/(4\alpha)).$} 
	
	Let $\mu_0$ and $\mu_1$ be two points such that 
	$$|\mu_1 - \mu_0| = \frac{\sigma}{6\eps n} \cdot \log \left(\frac{1}{4\alpha}\right).$$ 
	By the assumption on $R$ two such points always exist. Substituting this in equation \ref{eq:thm:lowerbound:3}, we get,
\begin{align*}
	\mathbb E_{\mu_0}[|S|] 
	&\geq |\mu_1 - \mu_0| \cdot \left( 1 - ( \alpha + 4n\delta) \cdot e^{6\eps n |\mu_0 - \mu_1|/\sigma}  \right)\\
	&\geq \frac{\sigma}{6 \eps n} \cdot \log \left(\frac{1}{4\alpha}\right) \left( 1 - (\alpha + 4n\delta) \cdot \left(\frac{1}{4\alpha}\right) \right)\\
	&\geq \frac{\sigma}{6 \eps n} \cdot \log \left(\frac{1}{4\alpha}\right) \left( \frac{3}{4} - \frac{n\delta}{\alpha} \right)\\
	&\geq \frac{\sigma}{24 \eps n} \cdot \log \left(\frac{1}{4\alpha}\right) \mbox{ (Since $\delta < \alpha/(2n)$ which implies $3/4-(n\delta/\alpha) > 1/4$)}
\end{align*}
	
\paragraph{Case 2: $2R < \sigma/(6\eps n) \cdot \log(1/(4\alpha)) < \infty.$} 
	Let $\mu_0 = -R$ and $\mu_1 = R$. Substituting this in equation \ref{eq:thm:lowerbound:3}, we get:
	\begin{align*}
	\mathbb E_{\mu_0}[|S|] 
	&\geq |\mu_1 - \mu_0| \cdot \left( 1 - ( \alpha + 4n\delta) \cdot e^{6\eps n |\mu_0 - \mu_1|/\sigma}  \right)\\
	&= 2R \cdot \left( 1 - ( \alpha + 4n\delta) \cdot e^{12\eps n R/\sigma}  \right)\\
	&\geq 2R \cdot \left( 1 - (\alpha + 4n\delta) \left(\frac{1}{4\alpha} \right)\right) \\
	&\geq 2R \left( \frac{3}{4} - \frac{n\delta}{\alpha} \right)\\
	&\geq \frac{R}{2} \mbox{ (Since $\delta < \alpha/(2n)$ which implies $3/4-(n\delta/\alpha) > 1/4$)}
\end{align*}
Combining the two cases , we have $\beta > R/2$ or $\beta > \sigma /(24 \eps n) \cdot \log (1/(4\alpha))$ which proves the bound in Part (1).
	
%
\noindent Let us now prove the bounds in Part (2). 
	We will first show,
	$$ n \geq c\cdot \min \left\{\frac{1}{\eps }\log \left(\frac{R}{\sigma}\right), \frac{1}{\eps}\log \left(\frac{1}{\delta}\right) \right\}$$
	By assumption, since $\beta < \sigma < R$, we have,
	\begin{align*}
	\sigma > \beta & \geq \int_{-R}^{R} \mathbb M_{\mu_0}(\mu \in S) d\mu \\ 
	& \geq \int_{-R}^{R} \left(e^{-6\eps n} \cdot \mathbb M_{\mu} (\mu \in S) - 4n\delta\right) d\mu \text{ (By Corollary \ref{cor:normalLemma})}\\
	& \geq \int_{-R}^{R} \left(e^{-6\eps n} \cdot (1-\alpha) - 4n\delta \right) d\mu \\
	& \geq 2R\cdot (1-\alpha) \cdot e^{-6\eps n} - 4n\delta \cdot (2R) \\
	& \geq R e^{-6\eps n} - 4n\delta \cdot (2R)
	\end{align*}
	Hence we have, either
	$$\sigma > \frac{R}{2}e^{-6\eps n}  \text{ or } 4n\delta \cdot (2R) > \frac{R}{2} \cdot e^{-6 \eps n}$$
	which implies that 
	$$ n \geq \frac{c_1}{ \eps} \log \left(\frac{R}{\sigma}\right) \text{ or } n \geq \frac{c_2}{\eps} \log \left(\frac{1}{\delta}\right).$$
	Next we will show that 
	$$ n \geq c\cdot \min \left\{\frac{1}{\eps }\log \left(\frac{1}{\alpha}\right), \frac{1}{\eps}\log \left(\frac{1}{\delta}\right) \right\}$$
	Let $\mu_0$ be a fixed point in $(-R,R)$. The fact that if $ \mathbb E_{\mu_0}(|S|) < R$, implies that there exists a $\mu_1 \in (-R,R)$ such that 
	$$\mathbb M_{\mu_0} \left(\mu_1 \notin S\right) > \frac{1}{2}$$
	Indeed, if not, then 
	$$\mathbb E_{\mu_0}(|S|) = \int_{-R}^{R}\mathbb M_{\mu_0} \left(\mu \in S\right)d \mu \geq \frac{1}{2}2R = R.$$ 
	Thus from Corollary \ref{cor:normalLemma} we have
	\begin{align*}
	\frac{1}{2} &< \mathbb M_{\mu_0} \left(\mu_1 \notin S\right) < e^{6\eps n \cdot |\mu_0 - \mu_1|/\sigma} \cdot \mathbb M_{\mu_1} \left(\mu_1 \notin S\right) + 4e^{6\eps n \cdot |\mu_0-\mu_1|/\sigma} \cdot n\delta \cdot  ||\mathbb P_{\mu_1} - \mathbb P_{\mu_0}||_{tv} \\
	& \leq (\alpha + 4n\delta) \cdot e^{6\eps n \frac{|\mu_0 - \mu_1|}{\sigma}} \leq (\alpha + 4n\delta) \cdot e^{6\eps n}.
	\end{align*}
	 This gives us
	 $$\alpha > \frac{1}{2}e^{-6\eps n} - 4\delta n$$
	 Hence either $4n \delta > e^{-6\eps n}/4$ or $\alpha > e^{-6\eps n} /4$, which gives us
	 either
	$n \geq (c_1/\eps) \cdot \log(1/4\alpha)$ or $ n \geq (c_2/\eps) \cdot \log(1/\delta)$.
\end{proof}

\section{Appendix: Basic distributions and tail bounds}

We make use of several standard distributions and tail bounds that are defined here for completeness.

\subsection{Basic Distributions}
\label{sec:defs}
\begin{definition}[Normal or Gaussian distribution]
	A random variable $X$ has a normal distribution with mean $\mu$ and variance $\sigma^2$ if it's probability  density function is given by
	$$\frac{1}{\sqrt{2\pi \sigma^2}} e^{-(x-\mu)^2/(2\sigma^2)}$$
\end{definition}

A standard Gaussian distribution is a Gaussian distribution with mean $0$ and variance $1$. The cumulative distribution function (cdf) of a standard normal distribution is denoted by $\Phi(\cdot)$ and the pdf is denoted by $\phi(\cdot)$.

\begin{definition}[$\chi^2$-distribution]
	A random variable $X$ has a $\chi^2$ distribution with $k$ degrees of freedom if it can be written as the sum of squares of $k$  standard normal distributions, i.e. $X = \sum_{i=1}^k Z_i^2$ where each $Z_i$ is a standard normal random variable.
\end{definition}

\begin{definition}[$t$-distribution]
	A random variable $T_k$ has a $t$-distribution with $k$ degrees of freedom, if it can be represented as follows: $T_k = Z/(\sqrt{Y/k})$ where $Z$ is a standard normal distribution and $Y$ is a $\chi^2$-distribution with $k$ degrees of freedom. The density of a $t$-distribution is given by
	$$\frac{\Gamma\left(\frac{k+1}{2}\right)}{\sqrt{k \pi} \Gamma \left(\frac{k}{2}\right)} \left(1 + \frac{t^2}{k}\right)^{-\frac{k+1}{2}}$$	
\end{definition}

\begin{definition}[Laplace distribution]
\label{def:lap}
	A random variable $X$ has a  Laplace distribution with mean $0$ and scale parameter $b$ if it's density is given by
	$$\frac{1}{2b}e^{-\frac{|x|}b}$$
\end{definition}

\subsection{Tail bounds}
We make use of several tail bounds, some of which are well known and listed without proof and others are proved for completeness.
\begin{proposition}[Chernoff Bounds]
	\label{prop:Chernoff}
	Let $X_1, \ldots, X_n$ be independent random variables such that  $X_i \in \{0,1\}$, $\mu = \sum_i \E{X_i}$ and $X = \sum_iX_i$. Then for every $\delta \in (0,1)$,
	\begin{align*}
	\pr{X \leq (1-\delta)\mu} &\leq \exp\left(-\frac{ \delta^2}{2} \mu\right)\\
	\pr{X \geq (1+\delta)\mu} &\leq \exp\left(-\frac{ \delta^2}{3} \mu\right).
	\end{align*}
\end{proposition}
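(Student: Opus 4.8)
The plan is to prove both inequalities by the standard exponential-moment (Chernoff--Cram\'er) method, handling the two directions in parallel. The one ingredient I need is a bound on the moment generating function of $X$. Since each $X_i\in\{0,1\}$ with $p_i=\E{X_i}$, for every real $t$ one has $\E{e^{tX_i}}=1-p_i+p_ie^t=1+p_i(e^t-1)\le \exp\!\left(p_i(e^t-1)\right)$, using $1+x\le e^x$. By independence the joint MGF factorizes, so $\E{e^{tX}}=\prod_i\E{e^{tX_i}}\le \exp\!\left(\mu(e^t-1)\right)$ with $\mu=\sum_ip_i$. This single estimate drives both tails.

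For the upper tail I would apply Markov's inequality to $e^{tX}$ with $t>0$, giving $\pr{X\ge(1+\delta)\mu}\le e^{-t(1+\delta)\mu}\,\E{e^{tX}}\le \exp\!\left(\mu\bigl(e^t-1-t(1+\delta)\bigr)\right)$. The exponent is minimized at $t=\ln(1+\delta)>0$, which yields the classical form $\pr{X\ge(1+\delta)\mu}\le\bigl(e^{\delta}/(1+\delta)^{1+\delta}\bigr)^{\mu}$. Symmetrically, for the lower tail I would apply Markov to $e^{-tX}$ with $t>0$, obtaining $\pr{X\le(1-\delta)\mu}\le\exp\!\left(\mu\bigl(e^{-t}-1+t(1-\delta)\bigr)\right)$, minimized at $t=-\ln(1-\delta)>0$, which gives $\pr{X\le(1-\delta)\mu}\le\bigl(e^{-\delta}/(1-\delta)^{1-\delta}\bigr)^{\mu}$.

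All that then remains is to weaken these into the stated Gaussian-type bounds; raising to the power $\mu$ reduces this to the two scalar inequalities $e^{\delta}/(1+\delta)^{1+\delta}\le e^{-\delta^2/3}$ and $e^{-\delta}/(1-\delta)^{1-\delta}\le e^{-\delta^2/2}$ for $\delta\in(0,1)$. Taking logarithms, I must show $h_+(\delta):=\delta-(1+\delta)\ln(1+\delta)+\delta^2/3\le 0$ and $h_-(\delta):=-\delta-(1-\delta)\ln(1-\delta)+\delta^2/2\le 0$ on $[0,1)$. The lower-tail case is immediate: $h_-(0)=0$, the derivative $h_-'(\delta)=\delta+\ln(1-\delta)$ vanishes at $0$, and $h_-''(\delta)=-\delta/(1-\delta)\le 0$, so $h_-$ is concave with its maximum at $0$.

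The upper-tail case is the one delicate point, since $h_+''(\delta)=-1/(1+\delta)+2/3$ changes sign at $\delta=1/2$, so $h_+$ is not concave and I cannot conclude by a single concavity argument. I expect this to be the main obstacle, and I would resolve it in one of two ways. Either I expand $(1+\delta)\ln(1+\delta)=\delta+\sum_{k\ge2}\frac{(-1)^k}{k(k-1)}\delta^k$ and use the alternating-series estimate (valid because the terms $\delta^k/(k(k-1))$ are decreasing in $k$ for $\delta<1$) to get $\delta-(1+\delta)\ln(1+\delta)\le -\delta^2/2+\delta^3/6$, whence $h_+(\delta)\le \tfrac{\delta^2}{6}(\delta-1)\le 0$; or I argue via $h_+'(\delta)=\tfrac{2}{3}\delta-\ln(1+\delta)$, noting $h_+'(0)=0$, that $h_+'$ decreases then increases, and that $h_+'(1)=\tfrac23-\ln 2<0$, so $h_+'\le 0$ throughout $(0,1]$ and hence $h_+$ is nonincreasing from $h_+(0)=0$.
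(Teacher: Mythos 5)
Your proof is correct and complete: the MGF bound, the two optimized Markov arguments, and the reduction to the scalar inequalities are all the standard Chernoff--Cram\'er derivation, and you correctly flag and resolve the only delicate point (the non-concavity of $h_+$ on $(1/2,1)$, handled either by the alternating-series truncation or by checking $h_+'(1)=\tfrac23-\ln 2<0$). The paper itself states this proposition in its appendix as a well-known fact and gives no proof, so there is nothing to compare against; your argument would serve as a self-contained proof of exactly the constants $\delta^2/2$ and $\delta^2/3$ claimed.
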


\begin{proposition}[Laplace tail bound]
	\label{prop:lap}
	Let $Z$ be a Laplace random variable with mean $0$ and scale $b$ 
	then, for every $t > 0$,
	\begin{align*}
	\pr{Z > t}  &= \frac{1}{2}\exp\left(-\frac{t}{b}\right) \mbox{ and}\\
	\pr{|Z| > t} &= \exp\left(-\frac{t}{b}\right). 
	\end{align*}
\end{proposition}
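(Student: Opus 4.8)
The plan is to compute both tail probabilities directly from the density of the Laplace distribution given in Definition~\ref{def:lap}, namely $p(x) = \frac{1}{2b}e^{-|x|/b}$. This is essentially a one-line integration, so I do not anticipate any genuine obstacle; the only point requiring (minor) care is keeping track of the absolute value in the exponent and exploiting the symmetry of the density about the origin.

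First I would establish the one-sided bound. Fix $t > 0$. On the region of integration $\{x : x > t\}$ we have $x > 0$, so $|x| = x$ and the density is simply $\frac{1}{2b}e^{-x/b}$. Hence I would compute
\begin{align*}
\pr{Z > t} = \int_t^\infty \frac{1}{2b}e^{-x/b}\,dx = \left[-\tfrac{1}{2}e^{-x/b}\right]_t^\infty = \frac{1}{2}e^{-t/b},
\end{align*}
which is exactly the first claimed identity.

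For the two-sided bound I would invoke the evenness of the density: since $p(-x) = p(x)$, the variable $Z$ is symmetric about $0$, so $\pr{Z < -t} = \pr{Z > t}$. Writing $\pr{|Z| > t} = \pr{Z > t} + \pr{Z < -t}$ (the two events are disjoint, and $\pr{Z = \pm t} = 0$ since $Z$ is continuous) and substituting the value just computed gives $\pr{|Z| > t} = 2 \cdot \frac{1}{2}e^{-t/b} = e^{-t/b}$, as desired. The hardest part, such as it is, is merely remembering to double the one-sided tail to account for both sides; everything else is the routine exponential integral above.
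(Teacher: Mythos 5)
Your computation is correct: the one-sided tail integral evaluates to $\tfrac{1}{2}e^{-t/b}$ and doubling by symmetry gives the two-sided identity. The paper states Proposition~\ref{prop:lap} without proof (it is listed among the well-known tail bounds), and your argument is exactly the standard direct verification from the density in Definition~\ref{def:lap}, so there is nothing to compare against and nothing to correct.
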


\begin{proposition}[Gaussian tail bound]
	\label{prop:gauss}
	Let $Z$ be a standard normal random variable with mean $0$ and variance $1$, then, for every $t > 0$, we have
	\begin{align*}
	\pr{|Z| > t}  &\leq 2\exp\left(-t^2/2\right).
	\end{align*}
\end{proposition}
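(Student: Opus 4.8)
The plan is to use the standard Chernoff (moment-generating-function) argument, exploiting the symmetry of the standard normal to reduce to a one-sided tail. First I would observe that since the density $\phi$ is symmetric about $0$, we have $\pr{|Z| > t} = 2\,\pr{Z > t}$, so it suffices to prove the one-sided bound $\pr{Z > t} \leq \exp(-t^2/2)$ for every $t > 0$.

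For the one-sided bound I would apply Markov's inequality to the nonnegative random variable $e^{sZ}$ with a free parameter $s > 0$: for any such $s$,
\[
\pr{Z > t} = \pr{e^{sZ} \geq e^{st}} \leq e^{-st}\,\E{e^{sZ}}.
\]
The only computation needed is the moment generating function of $Z$. Completing the square in the Gaussian integral gives $\E{e^{sZ}} = \tfrac{1}{\sqrt{2\pi}}\int_{-\infty}^{\infty} e^{sx - x^2/2}\,dx = e^{s^2/2}$, since $sx - x^2/2 = -\tfrac{1}{2}(x-s)^2 + s^2/2$ and the remaining integral is that of a shifted Gaussian density and hence equals $1$. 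Substituting back, $\pr{Z > t} \leq \exp(s^2/2 - st)$ for every $s > 0$.

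Finally I would optimize the free parameter. The exponent $s^2/2 - st$ is minimized at $s = t$, yielding $\pr{Z > t} \leq \exp(-t^2/2)$; multiplying by the factor of $2$ from the symmetry step gives $\pr{|Z| > t} \leq 2\exp(-t^2/2)$, as claimed.

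There is essentially no hard step here, as this is a textbook tail bound; the only mild subtlety is the MGF computation via completing the square, and the optimal choice $s = t$ is forced by a one-line minimization. An alternative route that avoids the MGF is the direct estimate $\int_t^\infty e^{-x^2/2}\,dx \leq \int_t^\infty (x/t)\, e^{-x^2/2}\,dx = t^{-1} e^{-t^2/2}$, but this produces the prefactor $1/(t\sqrt{2\pi})$ rather than the clean constant $2$, so I would prefer the Chernoff argument in order to match the stated bound exactly for all $t > 0$.
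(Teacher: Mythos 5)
Your proof is correct: the symmetry reduction, the Markov bound on $e^{sZ}$, the moment generating function computation $\E{e^{sZ}} = e^{s^2/2}$, and the optimization at $s = t$ are all standard and valid, and they yield exactly the stated bound $\pr{|Z| > t} \leq 2\exp(-t^2/2)$. The paper itself states Proposition \ref{prop:gauss} without proof, listing it among well-known tail bounds, so there is no argument in the paper to compare against; your Chernoff-style derivation is a perfectly good way to supply the missing justification, and your remark that the direct integration-by-comparison route gives the prefactor $1/(t\sqrt{2\pi})$ rather than the clean constant $2$ correctly identifies why the MGF route is the one that matches the statement for all $t>0$.
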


\begin{proposition}[$\chi^2$ tail bound, Lemma 1 in \cite{laurent2000adaptive}]
	\label{prop:chi-square}
	Let $Y$ be a $\chi^2$ random variable with $n$ degrees of freedom, then for every $x>0$ we have, 
	$$\pr{Y \leq n - 2\sqrt{nx}} \leq \exp(-x).$$
\end{proposition}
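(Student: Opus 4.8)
The plan is to reproduce the standard Chernoff (exponential-moment) argument for the lower tail of a chi-square, as in \cite{laurent2000adaptive}. Write $Y = \sum_{i=1}^n Z_i^2$ where the $Z_i$ are independent standard normals. Since $Y \ge 0$, the inequality is trivially true whenever $n - 2\sqrt{nx} \le 0$, so the only content is in the regime $x < n/4$; in any case the exponential Markov bound below is valid for all $x > 0$.

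First I would pass to the moment generating function of $-Y$. For any $t \ge 0$, exponential Markov gives
\begin{align*}
\pr{Y \le n - 2\sqrt{nx}} \le \exp\!\left(t\,(n - 2\sqrt{nx})\right)\cdot \E{e^{-tY}}.
\end{align*}
A direct Gaussian integral shows $\E{e^{-tZ_i^2}} = (1+2t)^{-1/2}$ for every $t \ge 0$, and by independence $\E{e^{-tY}} = (1+2t)^{-n/2}$. Substituting, the right-hand side becomes
\begin{align*}
\exp\!\left( t\,(n - 2\sqrt{nx}) - \frac{n}{2}\log(1+2t) \right),
\end{align*}
so it remains only to choose $t$ making this exponent at most $-x$.

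The key elementary step is the bound $\log(1+u) \ge u - u^2/2$ for $u \ge 0$, which follows by checking that $h(u) = \log(1+u) - u + u^2/2$ satisfies $h(0)=0$ and $h'(u) = u^2/(1+u) \ge 0$. Applying it with $u = 2t$ gives $-\frac{n}{2}\log(1+2t) \le -nt + nt^2$, so the exponent above is at most $nt^2 - 2t\sqrt{nx}$. Minimizing this quadratic in $t$ at $t = \sqrt{x/n}$ yields exactly $-x$, which gives the claimed bound $\pr{Y \le n - 2\sqrt{nx}} \le e^{-x}$.

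There is no genuine obstacle here: the argument is entirely standard, and the only points requiring care are the direction of the inequalities (we use $t \ge 0$ together with the MGF of $-Y$, not of $+Y$, to capture the \emph{lower} tail) and the elementary logarithm estimate. I would present the three displays above and dispose of the degenerate case $n \le 2\sqrt{nx}$ by nonnegativity of $Y$.
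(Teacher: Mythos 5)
Your argument is correct: the exponential Markov bound applied to $-Y$, the moment generating function $\E{e^{-tY}}=(1+2t)^{-n/2}$, the estimate $\log(1+u)\ge u-u^2/2$, and the choice $t=\sqrt{x/n}$ together give exactly the exponent $-x$. The paper itself states this proposition without proof, citing Lemma~1 of Laurent--Massart, and your Chernoff/Cram\'er-type derivation is essentially the standard proof given in that reference, so there is nothing further to reconcile.
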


\begin{proposition}[A tail bound on $t$-distribution from \cite{soms1976asymptotic}]
	\label{prop:t-tail}
		Let $T_n$ be a $t$-distribution with $n$ degrees of freedom, then for every $t>0$, we have, 
		$$\pr{T_n \geq t} \leq \frac{1}{t}\left(1 + \frac{t^2}{n}\right)f_{T_n}(t)$$ 
		where $f_{T_n}(.)$ is the density of a $t$-distribution.
\end{proposition}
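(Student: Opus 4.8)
The plan is to write the tail probability as $\pr{T_n \geq t} = \int_t^\infty f_{T_n}(s)\,ds$ and to dominate the integrand by the negative derivative of the claimed upper bound, so that integrating recovers exactly that bound. Concretely, I would set
$$g(s) = \frac{1}{s}\left(1 + \frac{s^2}{n}\right)f_{T_n}(s),$$
which is the right-hand side of the asserted inequality. If I can establish the pointwise domination $-g'(s) \geq f_{T_n}(s)$ for every $s > 0$, together with $g(s) \to 0$ as $s \to \infty$, then the fundamental theorem of calculus yields
$$\int_t^\infty f_{T_n}(s)\,ds \leq \int_t^\infty \bigl(-g'(s)\bigr)\,ds = g(t) - \lim_{s\to\infty} g(s) = g(t),$$
which is exactly the claim.

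To carry this out, I would first substitute the explicit density $f_{T_n}(s) = c_n\,(1 + s^2/n)^{-(n+1)/2}$, where $c_n = \Gamma((n+1)/2)/(\sqrt{n\pi}\,\Gamma(n/2))$, and rewrite $g(s) = c_n\, s^{-1} (1 + s^2/n)^{(1-n)/2}$, using the exponent identity $1 - (n+1)/2 = (1-n)/2$. Differentiating $g$ by the product rule, the chain-rule factor regenerates the power $(1+s^2/n)^{-(n+1)/2} = f_{T_n}(s)/c_n$, and after collecting terms I expect to obtain
$$-g'(s) = c_n\,\frac{(1+s^2/n)^{(1-n)/2}}{s^2} + \frac{n-1}{n}\,f_{T_n}(s).$$
Both summands are nonnegative for $n \geq 1$ and $s > 0$. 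Using $(1+s^2/n)^{(1-n)/2} = (1+s^2/n)\,f_{T_n}(s)/c_n$ (again by the exponent identity), the desired inequality $-g'(s) \geq f_{T_n}(s)$ reduces, after dividing through by $f_{T_n}(s)$, to $(1+s^2/n)/s^2 \geq 1/n$, i.e.\ to $1/s^2 \geq 0$, which is trivially true.

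Finally I would check the boundary behavior: since $(1-n)/2 \leq 0$ for $n \geq 1$ and $1/s \to 0$, the function $g(s) \to 0$ as $s \to \infty$, justifying the vanishing of the boundary term above. I do not anticipate a genuine obstacle here; the argument is an antiderivative-domination computation of the same flavor as the Gaussian Mills-ratio tail bound. The only point requiring care is the bookkeeping in differentiating $g$ and recognizing the cancellation of exponents $\tfrac{1-n}{2} + \tfrac{n+1}{2} = 1$, which is precisely what collapses the domination inequality down to the trivial $1/s^2 \geq 0$ and leaves the slack term $\tfrac{n-1}{n}f_{T_n}(s)$ to spare.
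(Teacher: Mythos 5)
Your proof is correct. The paper itself gives no argument for this proposition --- it is stated in the appendix as a known bound cited to \cite{soms1976asymptotic} and "listed without proof" --- so there is nothing internal to compare against; your antiderivative-domination argument is the standard way to derive such Mills-ratio-type bounds and all the calculus checks out. In fact your computation simplifies further: writing $g(s) = c_n s^{-1}(1+s^2/n)^{(1-n)/2}$ and combining your two summands via $(1+s^2/n)^{(1-n)/2} = (1+s^2/n)f_{T_n}(s)/c_n$ gives the clean identity $-g'(s) = \left(1 + s^{-2}\right)f_{T_n}(s)$, from which the pointwise domination $-g'(s) \geq f_{T_n}(s)$ is immediate, and the boundary term vanishes since $(1-n)/2 \leq 0$ forces $g(s) \leq c_n/s \to 0$.
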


\begin{proposition}[Relation between quantile and tail bound]
	\label{prop:quantileBound}
	Let $X$ be a random variable with invertible cdf $F(.)$ and let $q(\alpha) = F^{-1}(1-\alpha)$. If  $\pr{X\geq t} \leq \alpha$, then $q(\alpha) \leq t$.
\end{proposition}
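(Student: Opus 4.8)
The plan is to translate the tail-probability hypothesis into a statement about the cdf $F$ and then invert it using monotonicity. First I would observe that since $F$ is assumed invertible, it is in particular continuous and strictly increasing; continuity gives $\pr{X = t} = 0$, and hence $\pr{X \geq t} = 1 - F(t)$. The hypothesis $\pr{X \geq t} \leq \alpha$ then rearranges directly to $F(t) \geq 1 - \alpha$.

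Next I would invoke the definition $q(\alpha) = F^{-1}(1-\alpha)$, which yields $F(q(\alpha)) = 1 - \alpha$. Combining this with the previous inequality gives $F(t) \geq F(q(\alpha))$. Since $F$ is strictly increasing, monotonicity (equivalently, applying the increasing function $F^{-1}$ to both sides) preserves the inequality and produces $t \geq q(\alpha)$, which is exactly the claimed conclusion.

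The only point requiring any care—and it is a minor one rather than a genuine obstacle—is the identity $\pr{X \geq t} = 1 - F(t)$, which relies on the continuity of $F$ furnished by its invertibility. For a general, possibly discontinuous cdf one would instead have $\pr{X \geq t} = 1 - \lim_{s \uparrow t} F(s)$, and the argument would need a left-continuity correction together with the convention that $q(\alpha)$ is a generalized inverse. Under the stated invertibility hypothesis this complication disappears entirely, so the result follows from a short two-line monotonicity argument.
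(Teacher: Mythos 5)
Your proof is correct and follows essentially the same route as the paper's: convert the tail bound to $F(t)\geq 1-\alpha$, note $F(q(\alpha))=1-\alpha$, and conclude by monotonicity of $F$. The only difference is your (careful but strictly unnecessary) continuity discussion --- since $\pr{X\geq t}\geq \pr{X>t}=1-F(t)$ always holds, the needed inequality $F(t)\geq 1-\alpha$ follows for any cdf without invoking continuity.
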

\begin{proof}
	By definition, $F(q(\alpha)) = 1 - \alpha$ and it is given that $F(t) \geq 1 - \alpha$. Since $F(.)$ is a non-decreasing function, we have $t \geq q(\alpha)$.
\end{proof}

\begin{proposition}[Quantiles of a t-distribution]
	\label{prop:tailBoundT}
	Let $T_n$ be a $t$-distribution with $n$ degrees of freedom and $t_{n,\alpha}$ be the $1 - \alpha$ quantile of $T_n$. Then, 
	\begin{enumerate}
		\item  If $t_{n,\alpha}>1$, 
		$$t_{n,\alpha}^2 \leq n\left(\sqrt{\frac{2\pi n}{n+1}}\left(\frac{1}{\alpha}\right)^{2/(n-1)} - 1\right).$$ 
		\item If $n \geq (64/9)\log(2/\alpha)$, then $t_{n,\alpha} \leq \sqrt{8\log(2/\alpha)}$.
	\end{enumerate}
\end{proposition}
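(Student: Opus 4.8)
The plan is to derive both parts from the Soms tail bound (Proposition~\ref{prop:t-tail}) evaluated exactly at the quantile, so I first record the common engine. Since $t_{n,\alpha}$ is the $(1-\alpha)$-quantile and $T_n$ is continuous, $\mathbb{P}(T_n \geq t_{n,\alpha}) = \alpha$. Substituting $t = t_{n,\alpha}$ into Proposition~\ref{prop:t-tail} and inserting the explicit density $f_{T_n}$, the prefactor $(1+t^2/n)$ multiplies $(1+t^2/n)^{-(n+1)/2}$ and collapses the exponent to $-(n-1)/2$, leaving
\[
\alpha \;\leq\; \frac{1}{t_{n,\alpha}}\cdot C_n\cdot\left(1 + \frac{t_{n,\alpha}^2}{n}\right)^{-(n-1)/2}, \qquad C_n := \frac{\Gamma\!\left(\frac{n+1}{2}\right)}{\sqrt{n\pi}\,\Gamma\!\left(\frac{n}{2}\right)}.
\]
This single inequality drives both parts; the only external input needed is a bound on the Gamma ratio $C_n$.

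For Part~(1), I would invoke the hypothesis $t_{n,\alpha}>1$ to discard the harmless factor $1/t_{n,\alpha}\leq 1$, giving $\alpha \leq C_n(1+t_{n,\alpha}^2/n)^{-(n-1)/2}$. Rearranging and raising both sides to the power $2/(n-1)$ yields $1+t_{n,\alpha}^2/n \leq (C_n/\alpha)^{2/(n-1)} = C_n^{2/(n-1)}\alpha^{-2/(n-1)}$, hence $t_{n,\alpha}^2 \leq n\big(C_n^{2/(n-1)}\alpha^{-2/(n-1)}-1\big)$. It then remains to bound the Gamma factor by $\sqrt{2\pi n/(n+1)}$. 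By log-convexity of $\Gamma$ (equivalently Wendel's inequality), $\Gamma((n+1)/2) \leq \Gamma(n/2)^{1/2}\Gamma((n+2)/2)^{1/2} = \Gamma(n/2)\sqrt{n/2}$, so $C_n \leq 1/\sqrt{2\pi}<1$; since $2\pi n/(n+1)\geq 1$ we have $\sqrt{2\pi n/(n+1)}\geq 1 > C_n^{2/(n-1)}$, which supplies the stated (admittedly loose) factor and closes Part~(1).

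For Part~(2), I would reuse the engine but keep the sharper constant $C_n \leq 1/\sqrt{2\pi}<1$ instead of the loose factor. First, the trivial case: if $t_{n,\alpha}\leq 1$ then $\sqrt{8\log(2/\alpha)} \geq \sqrt{8\log 2} > 1 \geq t_{n,\alpha}$, so the bound holds. Otherwise $t_{n,\alpha}>1$ and the engine gives $1+t_{n,\alpha}^2/n \leq (C_n/\alpha)^{2/(n-1)} \leq \alpha^{-2/(n-1)}$, i.e. $t_{n,\alpha}^2 \leq n(e^{a}-1)$ with $a := 2\log(1/\alpha)/(n-1)$. The hypothesis $n \geq (64/9)\log(2/\alpha) > (64/9)\log(1/\alpha)$ forces $a$ to be small: using $n/(n-1)\leq 2$ one checks $a \leq 9/16 < 1$, so the elementary estimate $e^a - 1 \leq a + a^2 \leq (25/16)a$ applies. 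Substituting the value of $a$ and again using $n/(n-1)\leq 2$ gives $t_{n,\alpha}^2 \leq (25/16)\cdot 2\cdot 2\log(1/\alpha) = (25/4)\log(1/\alpha) < 8\log(2/\alpha)$, which is the claim.

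The density substitution and algebraic inversion are routine; the one substantive ingredient is the Gamma-ratio estimate $\Gamma((n+1)/2)/\Gamma(n/2) \leq \sqrt{n/2}$ (hence $C_n \leq 1/\sqrt{2\pi}$), which is where the constant $2\pi$ and ultimately the factor $\sqrt{2\pi n/(n+1)}$ come from. I expect the main obstacle to be bookkeeping rather than ideas: in Part~(2) the constants $64/9$ and $8$ are tightly matched, so I must linearize $e^a-1$ and track the $n/(n-1)$ factors carefully to land strictly below $8\log(2/\alpha)$, since any cruder estimate of $e^a-1$ would overshoot the target constant.
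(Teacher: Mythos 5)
Your proof is correct. For Part (1) you follow essentially the paper's own route: evaluate the Soms bound (Proposition~\ref{prop:t-tail}) at $t=t_{n,\alpha}$, use $t_{n,\alpha}>1$ to discard the $1/t$ factor, invert, and control the Gamma ratio (the paper uses $\Gamma(x+1)/\Gamma(x+s)\le (x+1)^{1-s}$ with $x=(n-1)/2$, $s=1/2$, you use the log-convexity bound $\Gamma(\tfrac{n+1}{2})\le\sqrt{n/2}\,\Gamma(\tfrac n2)$; both sit below the stated $\sqrt{2\pi n/(n+1)}$ factor). For Part (2), however, you genuinely diverge from the paper. The paper drops the Soms bound entirely and instead uses the representation $T_n=Z/\sqrt{Y/n}$ with $Z$ standard normal and $Y\sim\chi^2_n$, notes that $\{T_n\ge t\}\subseteq\{Z\ge t/2\}\cup\{Y\le n/4\}$, and applies a union bound with the Gaussian tail $e^{-t^2/8}$ and the chi-square lower-tail bound $e^{-9n/64}$; the hypothesis $n\ge(64/9)\log(2/\alpha)$ is precisely what makes the second term at most $\alpha/2$. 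You instead reuse your Part-(1) engine to get $t_{n,\alpha}^2\le n(e^a-1)$ with $a=2\log(1/\alpha)/(n-1)$, and use the sample-size hypothesis only to force $a\le 9/16$ so that $e^a-1\le(25/16)a$. Your bookkeeping checks out: $n\ge(64/9)\log 2>2$ justifies $n/(n-1)\le 2$, the case $t_{n,\alpha}\le 1$ is handled trivially, and $(25/4)\log(1/\alpha)<8\log(2/\alpha)$; indeed your route delivers the slightly stronger conclusion $t_{n,\alpha}^2\le(25/4)\log(1/\alpha)$. The trade-off is that the paper's decomposition needs no Gamma-ratio estimate and no case split, resting only on standard concentration, while yours is more unified, deriving both parts from a single inequality.
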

\begin{proof}
	In both the cases, the upper bound on the quantile follows from application of Proposition \ref{prop:quantileBound} and a tail bound on $T_n$. Thus, we only need to prove the tail bound on $T_n$ in each case.
	\paragraph{Part 1:} We will prove the following tail bound:
	$$
	\pr{T_n \geq t} \leq  \frac{1}{\sqrt{2\pi}}\sqrt{\frac{n+1}{n}}\left(\frac{1}{t}\right)\left(1+\frac{t^2}{n}\right)^{-(n-1)/2}
	$$
	From Proposition \ref{prop:t-tail}, we have, $\pr{T_n \geq t} \leq \frac{1}{t}\left(1 + \frac{t^2}{n}\right)f_{T_n}(t)$ where $f_{T_n}(.)$ is the density of a $t$-distribution.
	Hence we have,
	\begin{align*}
	\pr{T_n \geq t} \leq \frac{\Gamma\left(\frac{n+1}{2}\right)}{\Gamma\left(\frac{n}{2}\right)} \sqrt{\frac{1}{\pi n}}\cdot \frac{1}{t}\left(1 + \frac{t^2}{n}\right)^{-\frac{n-1}{2}}
	\end{align*}
	Using the fact that for $x > 0$ and $0 < s < 1$,
	$$ x^{1-s} < \frac{\Gamma\left(x+1\right)}{\Gamma\left(x+s\right)} \leq (x+1)^{1-s}$$
	and letting $x = (n-1)/2$ and $s = 1/2$, gives the desired result.
	\paragraph{Part 2:} We will prove the following tail bound:
	$\pr{T_n \geq \sqrt{8\log(2/\alpha)} } \leq \alpha$
	Recall that $T_n = Z/(\sqrt{Y/n})$ where $Z$ is a standard normal distribution and $Y$ is a $\chi^2$-distribution with $n$ degrees of freedom. Also, if
	$\{Z < t/2\}$ and $\{Y > n/4\}$, then $T_n < t$. Hence $T_n \geq t$ implies that either $Z \geq t/4$ or $Y \leq n/4$. By a union bound, we have:
	\begin{align*}
	\pr{T_n \geq t} &\leq \pr{Z \geq t/2}  + \pr{Y \leq n/4} \\
	&\leq \exp\left(-t^2/8\right) + \exp\left(-9n/64\right),
	\end{align*}	
	where the last equation follows from the tail bounds on the Normal distribution in Proposition \ref{prop:gauss} and the $\chi^2$-distribution in Proposition \ref{prop:chi-square}. Finally, by setting $t = \sqrt{8\log(2/\alpha)}$ ,
	and using the fact that $n \geq (64/9)\log(2/\alpha)$ gives the result.
\end{proof}

\paragraph{Acknowledgments.} We are grateful to the Harvard Privacy Tools differential privacy research group for illuminating and motivating discussions, particularly James Honaker, Gary King, Kobbi Nissim and Uri Stemmer.  We would also like to thank Philip Leclerc, CDAR, Mathematical Statistician for carefully reading our paper and giving helpful feedback. 

\bibliography{ref}
\end{document}